\DeclarePairedDelimiter\ceil{\lceil}{\rceil}
\DeclarePairedDelimiter\floor{\lfloor}{\rfloor}
\newtheorem{theorem}{Theorem}[section]
\newtheorem{lemma}{Lemma} 
\newtheorem{assumption}{Assumption}
\newtheorem{remark}{Remark}
\newtheorem{corollary}[theorem]{Corollary}
\begin{document}

\begin{frontmatter}
\title{High Dimensional Latent Panel Quantile Regression \\with an Application to Asset Pricing  }
\runtitle{}

\begin{aug}
\author[A]{\fnms{Alexandre } \snm{Belloni}\ead[label=e1]{abn5@duke.edu}},
\author[B]{\fnms{Mingli } \snm{Chen}\ead[label=e2]{m.chen.3@warwick.ac.uk}}
\author[C]{\fnms{Oscar Hernan} \snm{Madrid Padilla}\ead[label=e3]{oscar.madrid@stat.ucla.edu		}}\\
\and
\author[D]{\fnms{Zixuan (Kevin) } \snm{Wang}\ead[label=e4]{zixuanwang@alumni.harvard.edu }}
\address[A]{ Fuqua Business School, Duke University
}
\address[B]{Department of Economics, University of Warwick
}
\address[C]{Department of Statistics, University of California, Los Angeles
	}
\address[D]{Harvard University 
}
\end{aug}

\begin{abstract}
	We propose a generalization of the linear panel quantile regression model to accommodate both \textit{sparse} and \textit{dense} parts: sparse means that while the number of covariates available is large, potentially only a much smaller number of them have a nonzero impact on each conditional quantile of the response variable; while the dense part is represent by a low-rank matrix that can be approximated by latent factors and their loadings. Such a structure poses problems for traditional sparse estimators, such as the $\ell_1$-penalised Quantile Regression, and for traditional latent factor estimators such as PCA. We propose a new estimation procedure, based on the ADMM algorithm,  that consists of combining the quantile loss function with $\ell_1$ \textit{and} nuclear norm regularization.
We show, under general conditions, that our estimator can consistently estimate both the nonzero coefficients of the covariates and the latent low-rank matrix.  This is done in a challenging setting that allows for temporal dependence, heavy-tail distributions, and the presence of latent factors.

Our proposed model has a "Characteristics + Latent Factors"  Quantile Asset Pricing Model interpretation: we apply our model and estimator with a large-dimensional panel of financial data and find that (i) characteristics have sparser predictive power once latent factors were controlled (ii) the factors and coefficients at upper and lower quantiles are different from the median.
\end{abstract}

\begin{keyword}[class=MSC2010]
\kwd[Primary ]{	62A99 }
\kwd{00X00}
\kwd[; secondary ]{00X00}
\end{keyword}

\begin{keyword}
\kwd{High-dimensional quantile regression}
\kwd{factor model}
\kwd{nuclear norm regularization}
\kwd{ panel data}
\kwd{asset pricing}
\end{keyword}

\end{frontmatter}

\section{Introduction}

A central question in asset pricing is to explain why certain assets pay higher returns than others. The Arbitrage Pricing Theory (\cite{ross1976arbitrage}) and the Fama-French three factors model (\cite{fama1993common}) explains the asset return variations by a linear combination of common risk factors. Assets with similar exposure to a common factor shall rise and fall together (\cite{cochrane2009asset}). However, empirical evidence appears to indicate that the firm characteristics, rather than common factors, can also explain the variations in stock returns (\cite{daniel1997evidence}), which suggests a characteristic-based model.

We generalize both modeling approaches and propose a ``Characteristics + Latent Factors'' quantile asset pricing framework. By incorporating the ``Characteristics", we improve the economic interpretability and explanatory power of the model. On the other hand, the finance literature has documented a zoo of new characteristics, and the proliferation of characteristics in this ``variable zoo'' leads to a concern about which characteristics really provide independent information about returns  (\cite{cochrane2011presidential}). Our model addresses this issue by imposing a \textit{sparse} structure, meaning although a large set of characteristics is available, only a much smaller subset of them might have predictive power.  We also incorporate ``Latent Factors'' to capture the common variations in asset returns. One additional benefit of having this part is that it might help alleviate the ``omitted variable bias'' problem (\cite{giglio2018asset}). As in the literature, typically, these latent factors are estimated via principal component analysis, which means all possible latent explanatory variables might be important for prediction although their individual contribution might be small, we term this as the \textit{dense} part. \footnote{More about sparse modeling and dense modeling can be found in \cite{giannone2017economic}. See also \cite{chernozhukov2017lava}.} Hence, our framework allows for ``Sparse + Dense" modeling with large scale panel data that consist of a large number of asset returns that are allowed to be weakly correlated across time.  In addition, we focus on understanding the quantiles (hence the entire distribution) of returns rather than just the mean, in line with the recent interest in quantile factor models (e.g. \cite{ando2020quantile}, \cite{chen2018quantile}, \cite{ma2019estimation}, \cite{feng2019nuclear}, and \cite{sagner2019three}). Our quantile asset pricing framework also inherits micro-foundation from the seminal quantile preference framework(\citep{manski1988ordinal,rostek2010quantile,giovannetti2013asset}) and, in particular, the dynamic quantile preference framework of \cite{de2019dynamic}. 

Specifically, with $Y_{i,t}$ as the excess return of asset $i$ in period $t$, $X_{i,t}$ as a $p$-dimensional vector of observable characteristics  such as return volatility and trading volume, we study the following high dimensional latent panel quantile regression model:
\begin{equation}
\label{eqn:first_model_0}
F_{ Y_{i,t}| X_{i,t}; \theta(\tau), \lambda_{i}(\tau), g_{t}(\tau) }^{-1}(\tau) \,=\,  X_{i,t}^{\prime} \theta(\tau)   + \lambda_i(\tau)^{\prime} g_t(\tau),   \,\,  i \,=\, 1\ldots, n,  \,\,\,   t \,=\,  1,\ldots,T,
\end{equation}
 where $\theta(\tau) \in \mathbb{R}^p$ is the vector of coefficients, $g_t(\tau)$ is an $r_{\tau}$-dimensional vector of unobservable factor returns, $\lambda_i(\tau)$ represents the factor loadings which captures the sensitivity of asset $i$ on the $r_{\tau}$ factors, $\tau \in [0,1]$ is the quantile index.  We allow for the possibility of quantile dependence of sensitivity to risk factors as such evidence has been reported in the literature (e.g., \citep{ando2020quantile}). For notation simplicity,  we often denote $\Pi_{i,t} (\tau) = \lambda_i(\tau)^{\prime} g_t(\tau)$, then $\Pi(\tau)$ is a low-rank matrix with unknown rank $r_\tau$.  Thus, with $ F_{ Y_{i,t}| X_{i,t}; \theta(\tau), \lambda_i(\tau), g_{t}(\tau)}$ is the cumulative distribution function of $Y_{i,t}$ conditioning on $X_{i,t}$, $\theta(\tau)$ and $\lambda_i(\tau), g_t(\tau)$, we model the quantiles of returns (instead of expected returns) as a linear combination of the characteristics and latent factors. Our framework allows for the possibility of lagged dependent data. Here, we allow for the number of characteristics $p$,  and the time horizon $T$, to grow to infinity as $n$  grows.  Throughout, we focus on the case where $p$ is large, possibly much larger than $n T$, but for the true model $\theta(\tau)$ is sparse and has only $ s_{\tau} \ll p$ non-zero components.

Our framework is flexible enough that allows us to jointly answer the following three questions in asset pricing: (i) Which characteristics are important to explain the time series and cross-section of stock returns, after controlling for the factors? (ii) How much would the latent factors explain stock returns after controlling for firm characteristics? (iii) Does the relationship of stock returns and firm characteristics change across quantiles?
The first question is related to the recent literature on variable selection in asset pricing using machine learning (\cite{ kozak2019shrinking, feng2019taming, han2018firm}). The second question is related to an classical literature starting from 1980s on statistical factor models of stock returns (\cite{chamberlain1983arbitrage, connor1988risk} and recently \cite{lettau2018estimating}). The third question extends the literature in late 1990s on stock return and firm characteristics (\cite{daniel1997evidence, daniel1998characteristics}) and further asks whether the relationship is heterogenous across quantiles. 

There are several key features of considering prediction problem at the panel quantile model in this setting. First, stock returns are known to be asymmetric and exhibit heavy tail, thus modeling different quantiles of return provides extra information in addition to models of first and second moments. Second, quantile regression provides a richer characterization of the data, allowing heterogeneous relationship between stock returns and  firm characteristics across the entire return distribution. Third, the latent factors might also be different at different quantiles of stock returns. Finally, quantile regression is more robust to the presence of outliers relative to other widely used mean-based approaches. Using a robust method is crucial when estimating low-rank structures (see e.g. \cite{she2017robust}). As our framework is based on modeling the quantiles of the response variable, we do not put assumptions directly on the moments of the dependent variable.  

Our main goal is to consistently estimate both the sparse part and the low-rank matrix. Recovery of a low-rank matrix, when there are additional high dimensional covariates, in a nonlinear model can be very challenging. The rank constraint will result in the optimization problem NP-hard. In addition, estimation in high dimensional regression is known to be a challenging task, which in our frameworks becomes even  more difficult  due to the additional latent structure. We address the former challenge via nuclear norm regularization which is similar to \cite{candes2009exact} in the matrix completion setting. Without covariates, the estimation can be done via solving a convex problem, and similarly there are strong statistical guarantees of recovery of the underlying low-rank structure. We address the latter challenge by imposing $\ell_1$ regularization on the vector of coefficients  of the control variables, similarly to \cite{belloni2011l1} which mainly focused on the cross-sectional data setting. Note that with regards to sparsity, we must be cautious, specially when considering predictive models (\cite{she2017robust}). Furthermore, we explore the performance of our procedure under settings  where  the vector of coefficients can be dense (due to the low-rank matrix). 

We view our work as complementary to the low dimensional quantile regression with interactive fixed effects framework as of the very recent work of \cite{feng2019nuclear},  and the mean  estimation setting in \cite{moon2018nuclear}. However, unlike \cite{moon2018nuclear} and \cite{feng2019nuclear}, we allow  the number  of covariates  $p$ to be large, perhaps $p \gg nT$.  This comes  with different  significant  challenges.  On the computational side, it  requires  us to develop  novel  estimation algorithms, which turns out can also be used  for the contexts in \cite{moon2018nuclear} and \cite{feng2019nuclear}. On the theoretical side,  allowing  $p \gg nT$  requires  a  systematically different analysis  as compared to \cite{feng2019nuclear}, as it is known   that   ordinary   quantile  regression is inconsistent  in   high dimensional  settings  ($p \gg nT$), see \cite{belloni2011l1}.

\paragraph*{Related Literature}
Our work contributes to the recent growing literature on panel quantile model. \cite{abrevaya2008effects}, \cite{graham2018quantile}, \cite{arellano2017quantile}, considered the fixed $T$ asymptotic case. \cite{kato2012asymptotics} formally derived the asymptotic properties of the fixed effect quantile regression estimator under large $T$ asymptotics, and \cite{galvao2016smoothed} further proposed fixed effects smoothed quantile regression estimator.  \cite{galvao2011quantile} works on dynamic panel.
\cite{koenker2004quantile} proposed a penalized estimation method where the individual effects are treated as pure location shift parameters common to all quantiles, for other related literature see \cite{ lamarche2010robust}, \cite{galvao2010penalized}. We refer to Chapter 19 of \cite{koenker2017handbook} for a review. Furthermore, our framework can be viewed as a generalization of the model in \cite{ando2020quantile} which considered panel quantile model with independent errors and low dimensional covariates.

Our work also contributes to the literature on nuclear norm penalisation, which has been widely studied in the machine learning and statistical learning literature, \cite{fazel2002matrix},  \cite{recht2010guaranteed, koltchinskii2011nuclear, rohde2011estimation}, \cite{negahban2011estimation}, \cite{brahma2017reinforced}. Recently, in the econometrics literature \cite{athey2018matrix} proposes a framework of matrix completion for estimating causal effects, \cite{bai2017principal} for estimating approximate factor model, \cite{chernozhukov2018inference} considered the heterogeneous coefficients version of the linear panel data interactive fixed model where the main coefficients has a latent low-rank structure, \cite{bai2019robust} for robust principal component analysis, and \cite{bai2019matrix} for imputing counterfactual outcome.

Finally, our results contribute to a growing literature on high dimensional quantile regression. \cite{wang2012quantile} considered  quantile  regression with concave  penalties for  ultra-high dimensional data;  \cite{zheng2015globally}
proposed an adaptively weighted $\ell_1$-penalty for globally concerned quantile regression. Screening procedures based on moment conditions motivated by the quantile models have been proposed and analyzed in \cite{he2013quantile} and \cite{ wu2015conditional} in the high-dimensional regression setting. We refer to \cite{koenker2017handbook} for a review.

To sum-up, our paper makes the following contributions. First,  we propose a new class of models that consist of both \textit{high dimensional regressors} and \textit{latent factor} structures. We provide a scalable estimation procedure, and show that the resulting estimator is consistent under suitable regularity conditions.
	Second, the high dimensional and non-smooth objective function require innovative strategies to derive all the above-mentioned results. In particular, our paper allows for serial dependence and this flexibility is important for the panel data case. It is well known that dealing with data dependence is a non-trivial problem,  and there are additional challenges for high dimensional models even for those without incorporating the latent factors: the extension of lasso-based methods with
relaxing the i.i.d. assumption and other restrictive assumptions (for instance, Gaussianity),  is just beginning to occur, e.g. for time series data some recent development along this line can be found in \cite{wong2020lasso}. Those lead to the novel use in our proofs of some techniques from the high dimensional statistics and econometrics literature, such as the localization argument from \cite{belloni2011l1},  and the new loss function introduced in  \cite{padilla2020adaptive}; from spectral theory, namely,  properties of  nuclear norm studied in \cite{elsener2018robust}, and concentrations results by \cite{chatterjee2015matrix}; and from empirical process theory \cite{yu1994rates,wellner2013weak}. We also
	 generalize  the sampling and smoothness assumption of  \cite{belloni2011l1}  by considering panel data with weak correlation across time. In particular, we refer readers to \cite{yu1994rates}  for thorough discussions on $\beta$-mixing.
	 
On the theoretical side,  we also present multiple results that entirely differ from those in \cite{belloni2011l1}.  In addition to allowing time dependence and a latent factor structure, we can consistently  estimate the  conditional quantiles without requiring a minimum eigenvalue condition on the behavior of the design matrix, which is typically required in the literature (e.g. \cite{belloni2011l1}). Relative to approaches that incorporate latent factor structure but relies on the squared loss, the proposed estimators inherit from quantile regression certain robustness properties to the presence of outliers and heavy-tailed distributions. Finally, we apply our proposed model and estimator to a large-dimensional panel of financial data in the US stock market and find that different return quantiles have different selected firm characteristics and that the number of latent factors can be also be different.

\paragraph*{Outline} The rest of the paper is organized as follows. Section \ref{sec:Overview} introduces the high dimensional latent quantile regression model, and provides an overview of the main theoretical results.  Section \ref{sec:sol_algorithms} presents the estimator and our proposed ADMM algorithm. Section \ref{sec:theory} discusses the statistical properties of the proposed estimator. Section \ref{sec:Simulation} provides simulation results. Section \ref{sec:Empirical} consists of the empirical results of our model applied to a real data set. The proofs of the main results are in the Supplementary Material.

\paragraph*{Notation} For $m \in \mathbb{N}$, we write $[m] \,=\, \{1,\ldots,m\}$.
For a  vector  $ v \in \mathbb{R}^p$ we  define  its $\ell_0$ norm as  $\| v\|_0 = \sum_{j =1}^{p} 1\{ v_j  \neq 0 \}  $,  where  $1\{ \cdot \}$  takes  value $1$ if the statement inside $\{\}$  is true, and  zero  otherwise; its $\ell_1$ norm as $\Vert v \Vert_1 = \sum^p_{j=1}\vert v_j\vert$.  We denote $\Vert v \Vert_{1,n,T} = \sum_j^{p}\hat{\sigma}_j\vert v_j\vert$ the $\ell_1$-norm weighted by $\hat{\sigma}_j$'s (defined in eq(\ref{eq:sigma})). The Euclidean  norm is denoted by $\|\cdot \|$, thus $\|v\|  = \sqrt{  \sum_{j=1}^p  v_j^2 }$. If $A \in \mathbb{R}^{n \times T}$ is a matrix, its Frobenius norm is denoted by $\|A\|_F \,=\,\sqrt{ \sum_{ i= 1}^n \sum_{t=1}^T A_{i,t}^2  }$, its spectral norm   by $\| A\|_2 \,=\,  \sup_{x \,:\, \|x\| = 1  }\sqrt{ x^{\prime} A^{\prime} A x}   $, its infinity norm by  $\|A\|_{\infty}   \,= \,   \max\{ \vert A_{i,j }\vert \,:\, i \in [n],\,\, j \in [T]    \} $  , its  rank by $\text{rank}(A)$, and its nuclear  norm  by $\|A\|_* = \text{trace}(\sqrt{A^{\prime}  A})$ where  $A^{\prime}$ is  the transpose of $A$. The $j$th column $A$ is denoted  by $A_{\cdot,j}$. Furthermore,  the multiplication of  a tensor $X \in \mathbb{R}^{I_1 \times \ldots \times I_m   }$  with a vector  $\theta  \in \mathbb{R}^{I_m}$ is denoted by $Z := X  \theta    \in \mathbb{R}^{  I_1  \times \ldots  \times I_{m-1}   }$, and,  explicitly, $Z_{i_1,\ldots i_{m-1}    }   =   \sum_{j = 1}^{I_m}    X_{i_1,\ldots,  i_{m-1},j } \,  \theta_{j} $. We also use the notation $a \vee b= \max\{a,b\}$, $a \land b = \min\{a, b\}$, $(a)_{-} = \max\{-a,0\}$. For a sequence  of  random variables  $\{z_j\}_{j=1}^{\infty}$ we denote  by  $\sigma(z_1,z_2,\ldots)$  the sigma algebra  generated  by    $\{z_j\}_{j=1}^{\infty}$.  Finally,   for sequences  $\{a_n\}_{n=1}^{\infty}$  and  $\{b_n\}_{n=1}^{\infty}$  we write  $a_n \asymp b_n$
if there  exists  positive constants  $c_1$  and  $c_2$  such that  $c_1 b_n \leq a_n  \leq c_2 b_n$ for sufficiently  large $n$.

\section{The Estimator and Overview of Rate Results}\label{sec:Overview}

\subsection{Basic Setting}

The setting of interest corresponds to a high dimension latent panel quantile regression model, where $Y \in \mathbb{R}^{n\times T} $, and  $X \in \mathbb{R}^{n \times T \times p}$  satisfying
\begin{equation}
\label{eqn:first_model}
F_{ Y_{i,t}| X_{i,t}; \theta(\tau), \Pi_{i,t}(\tau) }^{-1}(\tau) \,=\,  X_{i,t}^{\prime} \theta(\tau)   + \Pi_{i,t}(\tau),   \,\,\,\,\,  i \,=\, 1\ldots, n,  \,\,\,   t \,=\,  1,\ldots,T,
\end{equation}
where $i$ denotes subjects,  $t$ denotes time,  $\theta(\tau) \in \mathbb{R}^p$ is the vector of coefficients, $\Pi(\tau) \in  \mathbb{R}^{n \times T}$ is a low-rank matrix with unknown rank $r_\tau \ll \min \{n, T\}$, $\tau \in [0,1]$ is the quantile index, and $ F_{ Y_{i,t}| X_{i,t}; \theta(\tau), \Pi_{i,t}(\tau) }$ is the cumulative distribution function of $Y_{i,t}$  conditioning on $X_{i,t}$, $\theta(\tau)$ and $\Pi_{i,t}(\tau)$. Thus, we model the quantile function at level $\tau$ as a linear combination  of the predictors  plus a low-rank matrix. Here, we allow for the number of covariates $p$,  and the time horizon $T$, to grow to infinity as $n$  grows.  Throughout the paper the quantile index $\tau \in (0,1)$ is fixed. We mainly focus on the case where $p$ is large, possibly much larger than $nT $, but for the true model $\theta(\tau)$ is sparse and has only $ s_{\tau} \ll p$ non-zero components. Mathematically, $s_\tau := \Vert \theta(\tau)  \Vert_0$.

When $\Pi_{i,t}(\tau) \,=\,   \lambda_i(\tau)^{\prime} g_t(\tau)$, with $\lambda_i(\tau), \, g_t(\tau)  \in \mathbb{R}^{r_{\tau}}$, this immediately leads to the following setting
\begin{equation}
\label{eqn:low_rank}
F_{ Y_{i,t}| X_{i,t}; \theta(\tau), \Pi_{i,t}(\tau) }^{-1}(\tau)  \,=\, X_{i,t}^{\prime}\theta(\tau) + \lambda_i(\tau)^{\prime} g_t(\tau).
\end{equation}
where we model the quantile function at level $\tau$ as a linear combination of the covariates (as predictors) plus a latent factor structure. This is directly related to the panel data models with interactive fixed effects literature in econometrics, e.g. linear panel data model (\cite{bai2009panel}), nonlinear panel data models (\cite{chen2014estimation, chen2014nonlinear}).

Note, for eq (\ref{eqn:low_rank}), additional identification restrictions are needed for estimating $\lambda_i(\tau)$ and $g_t(\tau)$ (see \cite{bai2013principal}).  In addition, in nonlinear panel data models, this creates additional difficult in estimation, as the latent factors and their loadings part induce a  nonconvex quantile regression problem. However, we deal with this in the following subsection via using a nuclear norm constraint. \footnote{ Different identification conditions might result in different estimation procedures for $\lambda$ and $f$, see \cite{bai2012statistical} and \cite{chen2014estimation}.}

\subsection{Estimator}\label{sec:problem_formulation}
In this subsection, we describe the high dimensional latent quantile estimator.  With  the sparsity and low-rank constraints  in mind,  a natural formulation  for  the estimation of $(\theta(\tau),\Pi(\tau))$ is
\begin{equation}
\label{eqn:formulation}
\begin{array}{ll}
\underset{  \tilde{\theta}\in \mathbb{R}^{p } ,\,\  \tilde{\Pi}\in \mathbb{R}^{n \times T}   }{\text{minimize} } &  \displaystyle   \frac{1}{nT } \sum_{t=1}^{T}    \sum_{i=1}^n   \rho_{\tau}(Y_{i,t}   - X_{i,t}^{\prime}\tilde{\theta}  - \tilde{\Pi}_{i,t} )     \\
\text{subject to} & \mathrm{rank}(\tilde{\Pi})\leq r_{\tau},\\
& \Vert\tilde{\theta}\Vert_{0}=  s_{\tau},
\end{array}
\end{equation}
where  $\rho_{\tau}(t) \,=\, (\tau -  1\{ t\leq 0 \})t $  is  the quantile loss function as in \cite{Koenker2005},  $s_{\tau}$ is a parameter that directly controls the sparsity of $\tilde{\theta}$, and $r_{\tau}$  controls the rank of the estimated latent matrix.

While the  formulation in (\ref{eqn:formulation}) seems appealing, as it enforces variable selection and low-rank matrix estimation simultaneously,  (\ref{eqn:formulation})   is   a non-convex problem due to the constraints posed  by the $\|\cdot\|_0$  and $\mathrm{rank}(\cdot)$ functions.  We propose a convex relaxation of (\ref{eqn:formulation}).  Inspired  by the seminal works of  \cite{tibshirani1996regression} and \cite{candes2009exact},  we formulate the problem as the following

\begin{equation}
\label{eqn:formulation2}
\underset{ \tilde{\theta}\in \mathbb{R}^{p } ,\,\  \tilde{\Pi}\in \mathbb{R}^{n \times T}  }{\min} \left\{\frac{1}{nT } \sum_{t=1}^{T}    \sum_{i=1}^n   \rho_{\tau}(Y_{i,t}   - X_{i,t}^{\prime}\tilde{\theta}  -  \tilde{\Pi}_{i,t} ) +  \nu_{1}\sum_{j=1}^{p}  w_j  \vert   \tilde{\theta}_j  \vert      + \nu_2  \|\tilde{\Pi}\|_*\right\}  
\end{equation}
where  $ \nu_{1} >0$  and  $\nu_2  >0$ are tuning parameters, and  $w_1,\ldots,w_p$ are user  specified weights (more on this in  Section \ref{sec:theory} ).  Notice that $\|\cdot\|_*$ is the nuclear norm defined on Page 4.  The nuclear norm regularization works on the singular value of a matrix,  the intuition is that via penalization with the nuclear norm, the resulting problem will be convex. Just as $\ell_1$-minimization is the tightest convex relaxation of the combinatorial $\ell_0$-minimization problem, nuclear-norm minimization is the tightest convex relaxation of the NP-hard rank minimization problem, see \cite{candes2010matrix}.

In principle, one can use any convex solver software to solve  (\ref{eqn:formulation2}), since  this is a convex optimization problem. However, for large scale problems a more careful implementation might be needed. Section \ref{sec:sol_algorithms} presents a scheme for solving  (\ref{eqn:formulation2}) that  is based on the ADMM algorithm (\citep{boyd2011distributed}).  

\subsection{Summary of results}
We now summarize our main results. For the model defined  in (\ref{eqn:first_model}):
\begin{itemize}
	\item Under  (\ref{eqn:first_model}), $s_{\tau} \ll \min\{n,T\}$, an assumption that implicitly requires $r_{\tau} \ll \min\{n,T\}$, and other regularity conditions  defined in Section \ref{sec:theory}, we show that our estimator $(\hat{\theta}(\tau), \hat{\Pi}(\tau))$ defined in Section \ref{sec:sol_algorithms} is consistent  for $(\theta(\tau),\Pi(\tau))$. Specifically, for the independent  data case (across  $i$ and $t$),  under suitable regularity conditions that can be found in Section \ref{sec:theory}, we have
	\begin{equation}
	\label{eqn:beta_rate}
	\displaystyle  \|\hat{\theta}(\tau) -\theta(\tau)\|  \,=\,  O_{\mathbb{P}}\left(  \max\{\sqrt{\log p},\sqrt{\log n}   \} (  \sqrt{s_{\tau}} +  \sqrt{r_{\tau}} )\left(  \frac{1}{ \sqrt{n} } + \frac{1}{\sqrt{T}}    \right)    \right).
	\end{equation}
	and
	\begin{equation}
	\label{eqn:pi_rate}
	\displaystyle  \frac{1}{nT} \| \hat{\Pi}(\tau) -\Pi(\tau) \|_F^2    \,=\,  O_{\mathbb{P}}\left(  \max\{\log p,\log n  \}(s_{\tau}  + r_{\tau} )  \left( \frac{1}{n}  + \frac{1}{T} \right)     \right),
	\end{equation}
	Importantly, the rates in (\ref{eqn:beta_rate}) and (\ref{eqn:pi_rate}), up to logarithmic factor, match those in previous works. However,  our setting allows for modeling at different quantile  levels. We also complement our results by allowing for the possibility of lagged dependent  data. Specifically,  under a $\beta$-mixing assumption, Theorem  \ref{thm:1}  provides a statistical  guarantee for  estimating $(\theta(\tau), \Pi(\tau))$. This result can be thought as a generalization of the statements in (\ref{eqn:beta_rate}) and (\ref{eqn:pi_rate}).
	
Let $ q_{i,t} \,=\,  X_{i,t}^{\prime}\theta_{i,t}(\tau)  + \Pi_{i,t}(\tau)$ for  $t \in \{1,\ldots,T\}$ and $i \in \{1,\ldots,n\}$  be the conditional quantiles. We 
  show that, under weaker conditions than the ones needed for Theorem 4.2, our estimates $\{\hat{q}_{i,t}\}$  satisfy 
	\[
\begin{array}{lll}
	\displaystyle  \frac{1}{nT} \sum_{i=1}^{n}  \sum_{t=1}^{T} \min\{   \vert   q_{i,t} -\hat{q}_{i,t}  \vert,(q_{i,t} -\hat{q}_{i,t} )^2    \} &=& \displaystyle  O_{\mathbb{P}}\bigg(\left(   \frac{1}{\sqrt{n}}  +  \frac{1}{\sqrt{T}}  \right)\bigg(      \frac{\| \Pi(\tau)\|_* }{nT}   \,+\,\\
	& &\,\,\,\,\,\,\,\,\,\,\,\sqrt{ \log(\max\{  n,p \}) }\|\beta(\tau)\|_1  \bigg)   \bigg), 
\end{array}
\]
for the independent  data case (across  $i$ and $t$).   This is a particular instance of Theorem  \ref{thm_c} which allows the possibility of time dependence.

	\item An  important  aspect of our analysis is that  we contrast the performance  of our  estimator  in settings  where the possibility of a dense  $\theta(\tau)$ provided that the features are highly correlated. We show that there exist  choices of  the tuning parameters for our estimator that lead to consistent estimation.
	
	\item For estimation, we provide an efficient algorithm (details can be found in Section \ref{sec:sol_algorithms}), which is based on the ADMM algorithm (\cite{boyd2011distributed}).
	
	\item Section \ref{sec:Empirical} provides thorough examples on financial data that illustrate the flexibility and interpretability of our approach. 
\end{itemize}

Although our theoretical analysis builds on the work by \cite{belloni2011l1}, there are multiple challenges that we must face in order to prove the consistency of our estimator. First, the construction of the restricted  set  now involves the nuclear norm penalty. This requires us to define a new restricted set  that captures the contributions of the low-rank matrix. Second, when bounding the empirical processes that  naturally arise in our proof, we have to simultaneously deal with
the sparse and dense components. Furthermore,  throughout our proofs, we have to carefully handle the weak dependence assumption that can be found in Section \ref{sec:theory}.

\section{High Dimensional Latent Panel Quantile Regression}  \label{sec:sol_algorithms}
 
In this subsection, we describe the main steps of our proposed ADMM algorithm, details can be found in Section \ref{sec:admm}. We start by introducing slack variables to the original  problem (\ref{eqn:formulation2}).  As a result, a problem equivalent to (\ref{eqn:formulation2}) is
\begin{equation}
\label{eqn:formulation3}
\begin{array}{lll}
\underset{  \stackrel{\tilde{\theta} , \tilde{\Pi},V }{Z_{\theta},Z_{\Pi },W }   }{\min} &  \displaystyle \frac{1}{nT}\sum_{i=1}^n \sum_{t=1  }^T  \rho_{\tau}(V_{i,t}) \,+\,   \nu_{1}   \sum_{j=1}^{p}  w_j \vert  Z_{\theta_j} \vert \,+\,  \nu_2 \| \tilde{\Pi} \|_*   \\
\text{subject to} &    V =  W,\,\,\,\,\,  W= Y-  X\tilde{\theta} - Z_{\Pi },\\
& \,\,\,\,\, Z_{\Pi }   - \tilde{\Pi} =0,\,\,\,\,\, Z_{\theta}  -\tilde{\theta}   =0.  \\
\end{array}
\end{equation}

To solve  (\ref{eqn:formulation3}), we propose a scaled version of the ADMM algorithm which relies on the following Augmented Lagrangian
\begin{equation}
\label{eqn:augmented_lagranian}
\def\arraystretch{1.6}
\begin{array}{lll}
\mathcal{L}(\tilde{\theta} , \tilde{\Pi},V Z_{\theta},Z_{\Pi },W,U_V, U_W, U_{\Pi},U_{\theta}  ) &=&  \displaystyle \frac{1}{nT}\sum_{i=1}^n \sum_{t=1  }^T  \rho_{\tau}(V_{i,t}) \,+\,   \nu_{1}   \sum_{j=1}^{p}  w_j \vert  Z_{\theta_j} \vert \,+\,  \nu_2 \| \tilde{\Pi} \|_*\\
& &  \displaystyle  +\,  \frac{\eta}{2}\| V -  W +  U_V \|_F^2 \, +\,   \frac{\eta}{2}\|W- Y  +  X\theta + Z_{ \Pi }   + U_W\|_F^2\\
& &   \displaystyle + \,  \frac{\eta}{2} \|   Z_{\Pi} -    \tilde{ \Pi} + U_{\Pi}\|_F^2 \,  +\,  \frac{\eta}{2}  \|    Z_{\theta}   - \tilde{ \theta}  + U_{\theta}\|_F^2,
\end{array}
\end{equation}
where  $\eta >0$ is a penalty parameter.

Notice  that in  (\ref{eqn:augmented_lagranian}),  we have followed the usual construction of ADMM via introducing
the  scaled  dual  variables  corresponding to the constraints  in (\ref{eqn:formulation3}) -- those are   $U_V$, $U_W$, $U_{\Pi}$,
and $U_{\theta}$. Next, recall that ADMM proceeds by  iteratively  minimizing  the  Augmented  Lagrangian  in blocks with respected  to  the original  variables,  in our case  $(V, \tilde{ \theta},\tilde{ \Pi})$  and  $(W,Z_{\theta},Z_{\Pi})$, and  then updating  the  scaled dual  variables (see Equations  3.5--3.7 in  \cite{boyd2011distributed}).   The explicit  updates  can be found in the Supplementary Material.  Here,   we highlight the updates for  $Z_{\theta}$, $\tilde{\Pi}$, and $V$.     For updating $Z_{\theta}$  at iteration  $k+1$, we  solve the problem
\[
\displaystyle	Z_{ \theta   }^{(k+1)}  \,\leftarrow \,\underset{Z_{\theta} \in \mathbb{R}^{ p} }{\arg \min}\left\{ \frac{1}{2}\| Z_{\theta} -\tilde{\theta}^{(k+1)} +  U_{\theta}^{(k)}    \|_F^2  +  \frac{ \nu_{1}}{\eta} \sum_{j =1 }^{p} w_j  \vert (Z_{\theta})_j   \vert   \right\}.
\]
This can be solved   in closed form  exploiting the  well known   thresholding operator, see the details in Section \ref{sec:proog_thm1}. As for updating $\tilde{\Pi}$, we solve

\begin{equation}
\label{eqn:update_pi}
\displaystyle\tilde{ \Pi}^{(k+1)}   \,\leftarrow\,\underset{ \tilde{ \Pi} \in \mathbb{R}^{ n \times T} }{\arg \min}\left\{ \frac{ \nu_2 }{\eta}\|\tilde{ \Pi } \|_*   +      \frac{1}{2}\| Z_{\Pi}^{(k)} - \tilde{\Pi}   +  U_{\Pi }^{(k)}  \|_F^2  \right\},
\end{equation}
via the  singular value shrinkage operator, see Theorem 2.1 in   \cite{cai2010singular}. 

Furthermore,  we update $V$, at iteration $k+1$, via
\begin{equation}
\label{eqn:update_v}
\displaystyle V^{(k+1)} \,\leftarrow \, \underset{V \in \mathbb{R}^{ n \times T} }{\arg \min}\,\,\left\{  \frac{1}{nT}\sum_{ i= 1}^n \sum_{ t= 1}^T  \rho_{\tau}(V_{i,t})  +  \frac{\eta}{2}\|V-W^{(k)}+ U_V^{(k)}\|_F^2  \right\},
\end{equation}
which can be found in closed  formula  by  Lemma 5.1 from \cite{ali2016multiple}.

\begin{remark}
	After estimating  $\Pi(\tau)$, we can estimate  $\lambda_i(\tau) $ and $g_t(\tau)$ via the singular value decomposition of $\hat{\Pi}(\tau)$ and following equation
	\begin{equation}
	\label{eqn:svd}
	\hat{\Pi}(\tau)_{i,t} \,=\,     \hat{\lambda}_i(\tau)^{\prime} \hat{g}_t(\tau),
	\end{equation}
	where  $\hat{\lambda}_i(\tau) $ and $\hat{g}_t(\tau)$  are of dimension $\hat{r}_{\tau}$. This immediately  leads to factors and loadings estimated that can be used  to  obtain insights about the structure of the data. A formal  identification  statement  is given in  Corollary \ref{cor1}.
\end{remark}

Finally, it is immediate to modify the proposed ADMM to the case when there are no covariates ($\beta(\tau) =0$), see  Section \ref{sec:no_covariates} in the Supplementary Material. Hence,   our proposed estimation procedure can be applied to  settings (i) with low dimensional covariates, or (ii) without covariates. However,  in what follows, we focus on the high dimensional covariates setting.

\section{Theory} \label{sec:theory}
The purpose of this section is  to provide  statistical  guarantees  for the estimator  developed in the previous section.  We focus on   estimating the  quantile function, allowing for the high dimensional scenario  where  $p$ and $T$  can  grow  as $n$  grows.

\subsection{ Estimating the quantiles} \label{sec:theory_dense}

We show that our proposed estimator  is consistent for estimating the conditional quantiles.  

Throughout, we treat $\Pi(\tau)$ as  fixed parameters. As for the data generation process,  our next condition requires  that the observations are independent  across $i$, and weakly dependent  across time.
 
\begin{assumption}
	\label{cond1}
	We assume that 
	\[
	Y_{i,t} |   X_{i,t}; \theta(\tau), \Pi_{i,t}(\tau)\,  = \, X_{i,t}^{\prime} \theta(\tau)+\Pi_{i,t}(\tau) +  \epsilon_{i,t},
	\]
	where   $\mathbb{P}( \epsilon_{i,t} \leq 0  | X_{i,t}^{\prime} \theta(\tau)+\Pi_{i,t}(\tau)) = \tau$. Furthermore, the following holds:
	\begin{enumerate}[label=(\roman*)]
		\item  There exists a function $G \,:\, [0,1]^d \rightarrow [g_1,g_2]$ for positive constants $g_1$ and $g_2$ such that,  conditional on $\Pi$ and  $\{  X_{i,t} \}_{i \in [n], t \in [T] }$,  $\epsilon_{i,t} =  \varepsilon_{i,t} G(X_{i,t})$  where 
		$\{ \varepsilon_{i,t}\}_{t=1,\ldots,T}$  are independent  across $i$. Also,  for each $i  \in [n]$, the sequence  $\{ \varepsilon_{i,t}\}_{t=1,\ldots,T}$ is stationary and $\beta$-mixing with mixing coefficients satisfying $\sup_i \gamma_i(k)  \,=\, O(k^{-\mu} )$ for some $\mu>  2$. Moreover, there exists  $\mu^{\prime }   \in (0, \mu) $, such that
		\begin{equation}
		\label{eqn:mixing}
		npT  \left( \floor*{  T^{   1/(1+ \mu^{\prime} ) }  } \right)^{-\mu }\,\rightarrow\, 0.
		\end{equation}
		Here,
		\[
		\begin{array}{lll}
		\gamma_{i}(k) &=&\displaystyle \frac{1}{2}\ \underset{l\geq1}{\sup}\bigg\{\sum_{j=1}^{L}\sum_{j^{\prime}=1}^{L^{\prime}}\vert\mathbb{P}(A_{j}\cap B_{j^{\prime}})-\mathbb{P}(A_{j})\mathbb{P}(B_{j^{\prime}})\vert \,\bigg| \,    \text{with}\,\,\{A_{j}\}_{j=1}^L   \,\,\,\\
		& &\,\,\,\,\,\,\,\,\,\,\,\,\,\,\,\,\,\,\text{paritition of }\,\,\,  \sigma(\{\varepsilon_{i,1}\},\ldots,\{\varepsilon_{i,l}\}),\,\,\,\text{and}\,\,\{B_{j^{\prime}}\}_{j^{\prime}=1}^{ L^{\prime} }\\
		&&\,\,\,\,\,\,\,\,\,\,\,\,\,\,\,\,\,\,\text{paritition of }\,\,\,\sigma(\{\varepsilon_{i,l+k}\},\{\varepsilon_{i,l+k+1}\}\ldots)\bigg\}.
		\end{array}
		\]
	 
		\item There exists $\underline{f} >0$  satisfying
		\[
		\underset{1\leq i\leq n,\,1\leq t\leq T,\,x\in\mathcal{X},   \vert \tilde{\delta}_{i,t} \vert \leq L \,}{\inf}\,f_{Y_{i,t}|X_{i,t};\theta(\tau),\Pi_{i,t}(\tau)}(x^{\prime}\theta(\tau)+\Pi_{i,t}(\tau) +   \tilde{\delta}_{i,t}|x;\theta(\tau),\Pi_{i,t}(\tau))\,>\,\underline{f},
		\]
		for some  $L >0$,	where $f_{ Y_{i,t} | X_{i,t}; \theta, \Pi_{i,t} }$ is the probability  density function associated with $Y_{i,t}$ when conditioning on $X_{i,t}$, and with parameters $\theta(\tau)$ and $\Pi_{i,t}(\tau)$.  
	\end{enumerate}
\end{assumption}

Note that Assumption \ref{cond1} is related to the sampling and smoothness assumption of \cite{belloni2011l1}. Furthermore, we highlight that similar to \cite{belloni2011l1}, our framework is  rich enough  that avoids  imposing  Gaussian modeling constraints. However, unlike \cite{belloni2011l1}, we consider panel data  with  weak correlation  across time. In particular,  we refer readers to \cite{yu1994rates}  for thorough discussions on $\beta$-mixing. Another difference with \cite{belloni2011l1} is that  we do not require any smoothness assumption on the condition density function of  the response given the covariates.

 It is worth mentioning that the parameter $\mu$ in Assumption \ref{cond1}  controls the strength of the time dependence in the data while parameter $\mu’$ will relate to the tuning parameters and will impact the rates. As we decrease the value of $\mu’$ condition (\ref{eqn:mixing}) is easier to satisfy but the tuning parameters increase and slow down the rates of convergence. Furthermore, in the case  that  $\{(Y_{i,t},X_{i,t})\}_{ i \in [n], t \in [T] }$ are independent  our theoretical results will hold without imposing  (\ref{eqn:mixing}).

Next, we require  that along each dimension the second  moment of the covariates is  one.  We also assume that the second moments  can be reasonably well estimated by their empirical  counterparts.

\begin{assumption}
 
	\label{cond2}
	We assume $\mathbb{E}(X_{i,t,j}^2) = 1$  for all $i \in [n],  t\in [T],j \in [p] $. Then 
	
	\begin{equation}\label{eq:sigma}
	\hat{\sigma}_j ^2\,=\,  \frac{1}{n T} \sum_{i=1}^n \sum_{t=1}^T  X_{i,t,j}^2,\,\,\,\,\, \forall  j \in [p],
	\end{equation}	
	 
	and we  require that
	\[
	\mathbb{P}\left( \underset{  1\leq  j\leq p }{\max}\, \,\vert\hat{\sigma}^2_{j}-1 \vert \,\leq \, \frac{1}{4} \right) \,\geq \, 1- \gamma  \, \,\,\,\rightarrow\,\,\,  1, \,\,\,\,\,\,\text{as}\,\,\,\,\,\ n\rightarrow \infty.
	\]
 
\end{assumption}
\vspace{0.1in}

Assumption \ref{cond2}  appeared  as  Condition  D.3 in \cite{belloni2011l1}. It is met by general  models on the covariates, see for instance Design 2 in \cite{belloni2011l1}.

Using the  empirical second order moments  $\{\hat{\sigma}_j^2 \}_{j=1}^p$, we analyze  the performance of the  constrained estimator
\begin{equation}
		\label{eqn:prob_1}
		(\hat{\theta}(\tau) , \hat{\Pi}(\tau) )  \,=\, 
		\underset{( \tilde{\theta} ,  \tilde{\Pi})}{\arg \min} \left\{ \hat{Q}_{\tau}(\tilde{\theta},\tilde{\Pi}) \,+\,   \nu_{1} \|\tilde{\theta}\|_{1,n,T}  +\nu_2  \|\tilde{\Pi}\|_* \right\},\\
	\end{equation}
 
where $\nu_1,\nu_2 \,>\, 0$ are   tuning parameters and   $\|\tilde{\theta}\|_{1,n,T}   \,:=\,   \sum_{j =1 }^{p} \hat{\sigma}_j  \vert  \tilde{\theta}_j\vert$,
\[
\displaystyle   \hat{Q}_{\tau}(\tilde{\theta},\tilde{\Pi}) \,=\,   \frac{1}{nT}\sum_{i=1}^n \sum_{t=1}^T \rho_{\tau}(  Y_{i,t} - X_{i,t}^{\prime}\tilde{\theta}  - \tilde{\Pi}_{i,t}),
\]
with $\rho_{\tau}$ as defined in Section \ref{sec:problem_formulation}.

Our main result  of this subsection  is provided next.
\begin{theorem}
	\label{thm_c}
	Let  $	(\hat{\theta}(\tau) , \hat{\Pi}(\tau) )$  be the estimator defined in (\ref{eqn:prob_1}). Let us write
	\[
	   q_{i,t} \,=\,  X_{i,t}^{\prime}\theta_{i,t}(\tau)    + \Pi_{i,t}(\tau), \,\,\,      \hat{q}_{i,t} \,=\,  X_{i,t}^{\prime}\hat{\theta}_{i,t}(\tau)    + \hat{\Pi}_{i,t}(\tau),
	\]
	for the conditional quantiles and their estimates. 	Then under Assumptions \ref{cond1}--\ref{cond2}, we have that for any sequence  $\{m_n\}$, $m_n \rightarrow \infty$, it holds that
	\[
      \begin{array}{lll}
      		\displaystyle  \frac{1}{nT} \sum_{i=1}^{n}  \sum_{t=1}^{T} \min\{   \vert   q_{i,t} -\hat{q}_{i,t}  \vert,(q_{i,t} -\hat{q}_{i,t} )^2    \} &=& \displaystyle  O_{\mathbb{P}}\bigg(  m_n\sqrt{c_T}\left(   \frac{1}{\sqrt{n}}  +  \frac{1}{\sqrt{d_T}}  \right)\bigg(      \frac{\| \Pi(\tau)\|_* }{nT}   \,+\,\\
      	 & &\,\,\,\,\,\,\,\,\,\,\,\sqrt{ \log(\max\{  n,p c_T\}) }\|\beta(\tau)\|_1  \bigg)   \bigg) 
      \end{array}
	\]
	with probability approaching one provided that  $T,n \rightarrow \infty$, and the tuning parameters satisfy
		\[
	\nu_{1} \,\asymp\,\sqrt{       \frac{   c_T \log( \max\{n,pc_T\} )  }{   n d_T}  }( \sqrt{n}  +  \sqrt{d_T} ) ,
	\]
	and
	\[
	\nu_2  \,\asymp\,  \frac{ c_T}{n T}\left( \sqrt{n}  +  \sqrt{d_T}  \right),
	\]
	where  $c_T \,=\, \ceil{  T^{ 1/(1+ \mu^{\prime} ) }  } $, $d_T   = \floor{  T/(2c_T) } $.
\end{theorem}

	Theorem  \ref{thm_c}  shows that we can consistently  estimate the  conditional quantiles at a rate that combines the sparse and dense signal magnitudes.  This result differs from \cite{belloni2011l1} in  several ways. First, \cite{belloni2011l1} work with cross sectional data and focus on the estimation of the vector of parameter coefficients instead of the conditional quantiles.  Therefore the assumptions for Theorem  \ref{thm_c} and those in  \cite{belloni2011l1} are very different.  For example,    \cite{belloni2011l1} required a minimum eigenvalue condition on the behavior of the design matrix, whereas we can avoid this. Furthermore,  unlike \cite{belloni2011l1}, Theorem  \ref{thm_c}  allows for panel data with dependence across time. This  translates into a technical challenge to arrive at Theorem  \ref{thm_c}, since it is not possible to use the analysis from \cite{belloni2011l1} which relies heavily on independence.  In fact, the minimum eigenvalue condition in \cite{belloni2011l1} (Condition D.4)  can be difficult to be verified in practice  without the independence  data assumption, and hence Theorem  \ref{thm_c}  has a different setting than those results in \cite{belloni2011l1}.  Also,  our setting and estimator involve the latent factors which make both the estimator and theory more complex than the framework in \cite{belloni2011l1}.  Lastly, while some of the ideas in the proof of Theorem  \ref{thm_c} come from \cite{padilla2020adaptive}, all of the technical steps involved in the proof of Theorem  \ref{thm_c} are novel.

	On another note, the rate of $T$ in Theorem  \ref{thm_c} must be such that $T \rightarrow \infty$ and $T$ satisfies  (\ref{eqn:mixing}). The parameter $\mu'$ in (\ref{eqn:mixing}) controls the terms $c_T=\lceil T^{1/(1+\mu')}\rceil$ and $d_T= \lfloor T/(2c_T)\rfloor$ both of which determine the tuning parameters and imply specific rates of convergence.   Cearly, $c_T \cdot d_T \asymp T$.  Moreover, the parameter $c_T$ measures the strength of the dependence in the data, whereas   $d_T$ can be interpreted as the effective number of independent samples across time.   In particular, when $c_T \asymp 1$, we have that  $\{\epsilon_{i,t}\}$ are basically independent across time, and so the final rate depends on $T$. As a different  example, since we would like $d_T \gg c_T$, with $\mu'=2$ we have $c_T \sim T^{1/3}$ and $d_T \sim T^{2/3}$ so that $\nu_1 \sim (T^{-1/6}+ \sqrt{T^{1/3}/n})\sqrt{\log(npT)}$ and $\nu_2 \sim \frac{1}{\sqrt{n}T^{2/3}}+\frac{1}{n T^{1/3}}$. In this case condition (\ref{eqn:mixing})  requires the parameter $\mu$ that governs the mixing speed to be such that $npT \ll T^{\mu/3}$.

\subsection{Estimating the coefficients and latent factors}
\label{sec:disantangle}

We show that our proposed  estimator is  consistent for estimating the vector of coefficients and latent factors separately  in a broad  range of models, and in some cases attains   minimax  rates, as in  \cite{candes2009tight}.

In order to obtain our main result of this subsection, we first provide some additional notation and  assumptions.  For a  fixed $\tau >0$, we assume that  (\ref{eqn:first_model})  holds.  We also let $T_{\tau}$ be the support of $\theta(\tau)$, thus
\[
T_{\tau } \,=\, \left\{j  \in [p] \,:\,  \theta_j(\tau)  \neq 0  \right\},
\]
and we write $s_{\tau} \,=\, \vert T_{\tau} \vert $, and $r_{\tau} \,=\,  \text{rank}(\Pi(\tau))$.

\begin{assumption}
	\label{cond4}
	Conditional on $\Pi$, $\{(X_{i,t},\epsilon_{i,t})\}_{t=1,\ldots,T}$  are independent across $i$. Also,  for each $i  \in [n]$, the sequence  $\{ (X_{i,t},\epsilon_{i,t})\}_{t=1,\ldots,T}$ is stationary and $\beta$-mixing with mixing coefficients satisfying $\sup_i \gamma_i(k)  \,=\, O(k^{-\mu} )$ for some $\mu>  2$. Moreover, there exists  $\mu^{\prime }   \in (0, \mu) $, such that (\ref{eqn:mixing}) holds. In addition,
	 there exists $\underline{f} >0$  satisfying
	\[
	\underset{1\leq i\leq n,\,1\leq t\leq T,\,x\in\mathcal{X},\,}{\inf}\,f_{Y_{i,t}|X_{i,t};\theta(\tau),\Pi_{i,t}(\tau)}(x^{\prime}\theta(\tau)+\Pi_{i,t}(\tau)|x;\theta(\tau),\Pi_{i,t}(\tau))\,>\,\underline{f},
	\]
where $f_{ Y_{i,t} | X_{i,t}; \theta, \Pi_{i,t} }$ is the probability  density function associated with $Y_{i,t}$ when conditioning on $X_{i,t}$, and with parameters $\theta(\tau)$ and $\Pi_{i,t}(\tau)$. Furthermore,  $ f_{ Y_{i,t} | X_{i,t}; \theta(\tau), \Pi_{i,t}(\tau) }(y| x; \theta(\tau), \Pi_{i,t}(\tau)  )$  and $\frac{\partial}{\partial y} f_{ Y_{i,t} | X_{i,t}; \theta(\tau), \Pi_{i,t}(\tau) }(y|x; \theta(\tau), \Pi_{i,t}(\tau))$  are both bounded  by $\bar{f}$ and $\bar{f}^{\prime}$, respectively, uniformly in $y$ and $x$ in the support  of $X_{i,t}$.
	
\end{assumption}

As it can been  seen in Lemma \ref{lem1} from the Supplementary Material,  the error of  our estimator defined in  (\ref{eqn:prob_1}), $(\hat{\theta}(\tau) - \theta(\tau) , \hat{\Pi}(\tau) - \Pi (\tau) ) $, belongs to a  restricted set, which in our framework is defined as
\begin{equation}
	\label{eqn:restricted_set}
	\begin{array}{l}
		A_{\tau}  = \Bigg\{ (\delta, \Delta )\in \mathbb{R}^p\times  \mathbb{R}^{n\times T }\,:\,    \|\delta_{ T_{\tau}^c }\|_1  + \frac{ \|  \Delta \|_* }{\sqrt{nT}  \sqrt{ \log(  \max\{n,p c_T\} )} }  \leq    C_0 \left(  \|\delta_{T_{\tau}}\|_1  +  \frac{ \sqrt{r_{\tau}} \| \Delta\|_F }{\sqrt{nT}  \sqrt{ \log(  \max\{n,p c_T\} )} } \right)   \Bigg\},\\
	\end{array}
\end{equation}
for  an appropriate  positive constant $C_0$.

Similar in spirit to other  high dimensional settings  such as those in \cite{candes2007dantzig}, \cite{bickel2009simultaneous}, \cite{belloni2011l1} and \cite{dalalyan2017prediction}, we impose  an identifiability  condition involving the restricted  set which is expressed  next and  will  be used  in order to attain our main results. This is the key difference with the analysis in Section \ref{sec:theory_dense}.  Before  arriving at our next condition, we introduce some notation.

For $m\geq 0$,  we denote by $\overline{T}_{\tau }(\delta,m) \subset  \{  1,\ldots,p\} \backslash T_{\tau }$  the support ot the $m$  largest components, excluding entries in $T_{\tau }$, of the vector $(\vert\delta_1\vert, \ldots, \vert \delta_p \vert)^T$. We also  use the convention $\overline{T}_{\tau }(\delta,0) \,=\,\emptyset$.

\begin{assumption}
	\label{cond3}
	For $(\delta,\Delta) \in A_{\tau }$, let
	\[
	\displaystyle  J_{\tau }^{1/2}(\delta,\Delta)  \,:=\,   \sqrt{  \frac{ \underline{f}  }{nT}  \sum_{i=1}^n\sum_{t=1}^{T}   \mathbb{E}\left( \left( X_{i,t}^{\prime}\delta + \Delta_{i,t}   \right)^2      \right)       }.
	\]
	Then there exists $m \geq 0$  such that
	\begin{equation}
		\label{eqn:identifiability}
		0\,<\,  \kappa_m \,:=\, \,\underset{  (\delta,\Delta) \in A_{\tau},  \delta \neq 0  }{\inf}\,  \,\frac{     J_{\tau}^{1/2}(\delta,\Delta)   }{   \| \delta_{T_{\tau} \cup  \overline{T}_{\tau}(\delta,m)} \|    +  \frac{\|\Delta\|_F  }{ \sqrt{nT}}   },
	\end{equation}
	where  $c_T \,=\, \ceil{  T^{ 1/(1+ \mu^{\prime} ) }  } $  for $\mu^{\prime}$ as defined in Assumption \ref{cond2}.
	Moreover, we assume that the following  holds
	\begin{equation}
		\label{eqn:nonlinearity}
		0 \,<\,    q\,:=\, \displaystyle     \frac{3}{8} \,  \frac{   \underline{f}^{3/2}  }{  \overline{f}^{\prime} } \,\underset{ (\delta,\Delta) \in A_{\tau},  \delta \neq 0    }{\inf}\,\frac{  \left( \mathbb{E}   \left( \frac{1}{nT}\sum_{i=1}^{n}\sum_{t=1}^{T}  (  X_{i,t}^{\prime} \delta  + \Delta_{i,t}  )^2   \right) \right)^{3/2}   }{ \mathbb{E} \left(    \frac{1}{nT}   \sum_{i=1}^{n}\sum_{t=1}^{T}     \vert  X_{i,t}^{\prime} \delta    + \Delta_{i,t} \vert^3     \right)      },
	\end{equation}
	with  $\underline{f}$ and  $\overline{f}^{\prime}$ as in Assumption \ref{cond1}.
\end{assumption}

Few comments are in order. First, if $\Delta =0$ then  (\ref{eqn:identifiability}) and (\ref{eqn:nonlinearity})  become the restricted identifiability and nonlinearity conditions as of \cite{belloni2011l1}.  Second,   the
denominator of  (\ref{eqn:identifiability})  contains the term $\|\Delta\|_F /(\sqrt{nT})$. To see why this is reasonable,  consider the case where  $\mathbb{E}(X_{i,t}) =0$, and $X_{i,t}$ are i.i.d..  Then
\[
J_{\tau }(\delta,\Delta)  \,=\,  \underline{f} \,  \mathbb{E}( (\delta^{\prime}  X_{i,t})^2 )  +   \frac{ \underline{f}  }{nT}  \|\Delta\|_F^2.
\]
Hence,  $\|\Delta\|_F /(\sqrt{nT})$ appears also in the numerator of (\ref{eqn:identifiability}) and  it is not restrictive its presence in the denominator of (\ref{eqn:identifiability}).

We now state  our  result  for estimating  $\theta(\tau)$ and $\Pi(\tau)$.

\begin{theorem}
	\label{thm:1}
Let  $	(\hat{\theta}(\tau) , \hat{\Pi}(\tau) ) $ be the estimator defined in (\ref{eqn:prob_1}).	Suppose that Assumptions \ref{cond2}--\ref{cond3}  hold and that
	\begin{equation}
		\label{eqn:signal}
		q\, \geq  C  \,   \frac{ \phi_{n} \sqrt{ c_T \log (pc_T \vee n)  }(\sqrt{s_{\tau}+1}+  \sqrt{r_{\tau}/  \log (pc_T \vee n)  }   ) (\sqrt{n} +\sqrt{d_T})  }{ \sqrt{n d_T} \kappa_{0}   \underline{f}^{1/2} },
	\end{equation}
	for a large enough constant $C$, and $\{\phi_{n}\}$  is a sequence  with $\phi_n/(\sqrt{\underline{f} } \log(c_T+1) )\,\rightarrow \, \infty$.
	Then
	\begin{equation}
		\|\hat{\theta}(\tau) -\theta(\tau)\| \,=\, O_{\mathbb{P}}\left(    \frac{  \phi_{n} \left(1+ \sqrt{\frac{s_{\tau}}{m}}\right)  }{\kappa_m   }   \frac{  \sqrt{ c_T \,  \log(p c_T \vee n  )}(\sqrt{ 1+s_{\tau}}+\sqrt{\frac{r_{\tau}}{\log(p c_T \vee n  )}}  ) }{\kappa_{0}   \underline{f}^{1/2} }\left( \frac{1}{\sqrt{n}} +\frac{1}{\sqrt{d_T}} \right )   \right),
	\end{equation}
	and
	\begin{equation}
		\label{eqn:upper_bound1}
		\frac{1}{nT} \| \hat{\Pi}(\tau) -\Pi(\tau) \|_F^2 \,=\,  O_{\mathbb{P}}\left(   \frac{  \phi_{n}^2  c_T   \log(p c_T \vee n  ) ( 1+s_{\tau}   +   \frac{r_{\tau}}{\log(p c_T \vee n  )} ) }{\kappa_{0}^4 \, \, \underline{f} }\left( \frac{1}{n} +\frac{1}{d_T} \right ) \right),
	\end{equation}
	for choices of the tuning parameters as in Theorem \ref{thm_c}.
 
\end{theorem}

Theorem  \ref{thm:1} gives  an upper  bound on the performance  of   $(\hat{\theta}(\tau),\hat{\Pi}(\tau))$ for estimating the vector  of coefficients  $\theta(\tau)$ and the latent matrix $\Pi(\tau)$. For simplicity,  consider  the case of  i.i.d  data. Then 
the convergence  rate of  our estimation  of $\theta(\tau)$,  under the Euclidean norm, is in the order of $(\sqrt{s_{\tau}} +   \sqrt{ r_{\tau} })/\min\{  \sqrt{n}, \sqrt{d_T} \}   $, if we ignore  all the other  factors.  Hence,   we  can consistently  estimate  $\theta(\tau)$ provided  that  $  \max\{s_{\tau},r_{\tau} \}  <<  \min\{ n,T \} $. This is similar to the low-rank condition in \cite{negahban2011estimation}.  In the  low dimensional  case  $s_{\tau} = O(1)$,  the rate   $\sqrt{  r_{\tau} }/\min\{  \sqrt{n}, \sqrt{d_T} \}   $
matches  that of Theorem 1 in \cite{moon2018nuclear}.  However, unlike \cite{moon2018nuclear},  our estimator is based on a loss function that is  robust to outliers, and  our assumptions also allow for weak dependence across time,  making  our framework  potentially  more general. Furthermore,  the same applies to our rate on the mean squared  error  for estimating $\Pi(\tau)$,  which also matches that in Theorem 1 of \cite{moon2018nuclear}.

With regards to the novelty of Theorem \ref{thm:1}, we highlight that, while its proof is similar in spirit to that of Theorem 2 in \cite{belloni2011l1}, there are some significant differences. First, the construction of the restricted set used in Theorem \ref{thm:1} involves two different penalties which makes challenging  to disentangle the behavior of  $\hat{\beta}(\tau)$  and $\hat{\Pi}$, whereas \cite{belloni2011l1} only had to dealt with one penalty. Second, the empirical processes in \cite{belloni2011l1} all involved independent data whereas as the proof of Theorem \ref{thm:1} handles the time dependence of our model and the latent factors.

Interestingly, it is expected  that  the rate  in Theorem \ref{thm:1}  is optimal. To elaborate  on this point, consider the simple case  where $n   = T$, $\theta = 0$, $\tau = 0.5$, and $e_{i,t}  :=  Y_{i,t}  - \Pi_{i,t}(\tau)$  are mean zero  i.i.d. sub-Gaussian($\sigma^2)$. The latter implies that
\[
\mathbb{P}( \vert  e_{1,1}\vert  > z)  \,\leq \, C_1  \exp\left( -\frac{z^2}{2 \sigma^2} \right),\,\,\,\,
\]
for a positive constant  $C_1$, and  for all $z >0$. Then by  Theorem 2.3  in \cite{candes2009tight}, we have the following lower  bound for  estimating  $\Pi(\tau)$:
\begin{equation}
	\label{eqn:lower_bound}
	\underset{ \hat{\Pi}  }{\inf}  \,\,\underset{  \Pi(\tau)  \,:\,   \text{rank}(\Pi(\tau) )\leq r_{\tau}  }{\sup}\,\mathbb{E}\left(  \frac{\| \hat{\Pi}(\tau) - \Pi(\tau)   \|_F^2 }{n T}  \right) \,\geq \, \frac{r_{\tau}  \sigma^2 }{n}.
\end{equation}
Notably,  the lower bound in  (\ref{eqn:lower_bound}) matches the rate implied  by Theorem \ref{thm:1}, ignoring  other factors depending on $s_{\tau}$, $\kappa_0$,  $\kappa_m$, $p$ and $\phi_n$. However, we highlight that  the upper bound (\ref{eqn:upper_bound1}) in Theorem \ref{thm:1} holds without the perhaps restrictive condition that the errors are sub-Gaussian.

We conclude this  section with a result  regarding the estimation of the  factors and loadings of the  latent matrix $\Pi(\tau)$. This is expressed in Corollary
\ref{cor1}  below and is immediate consequence  of Theorem \ref{thm:1} and   Theorem 3 in  \cite{yu2014useful}.

\begin{corollary}
	\label{cor1}
	Suppose  that the all the conditions of Theorem    \ref{thm:1} hold. Let  $\sigma_1(\tau)  \geq   \sigma_2(\tau) \geq \ldots \geq  \sigma_{r_{\tau}}(\tau) > 0$ be the    singular values  of $\Pi(\tau)$, and  $\hat{\sigma}_1(\tau)\geq \ldots \geq \hat{\sigma}_{  \min\{n,T\} }(\tau)$  the singular values of $\hat{\Pi}(\tau)$. Let  $g(\tau), \hat{g}(\tau) \in \mathbb{R}^{ T \times  r_{\tau} }$ and  $\lambda(\tau), \hat{\lambda}(\tau),\tilde{\lambda}(\tau), \tilde{\hat{\lambda}}(\tau) \in \mathbb{R}^{ n \times r_{\tau}  }$ be matrices   with orthonormal   columns   satisfying
	\[
	\displaystyle 	 \Pi(\tau )   \,=\,  \sum_{ j=1}^{r_{\tau}} \sigma_j \,  \tilde{\lambda}_{\cdot,j}(\tau) g_{\cdot,j}(\tau)^{\prime} \,=\,\sum_{ j=1}^{r_{\tau}}  \lambda_{\cdot,j}(\tau) g_{\cdot,j}(\tau)^{\prime}  ,
	\]
	and  $\hat{\Pi}(\tau)  \hat{g}_{\cdot,j}(\tau)  =  \hat{\sigma}_j(\tau)  \tilde{\hat{\lambda}}_{\cdot,j}(\tau) = \hat{\lambda}_{\cdot,j}(\tau) $  for $j =  1,\ldots,r_{\tau}$. Then
	\begin{equation}
		\label{eqn:rate_svd}
		\displaystyle
	 \underset{  O  \in \mathbb{O}_{r_{\tau}}  }{\min}  \|  \hat{g}(\tau) O  - g(\tau)\|_F   \,=\,     O_{\mathbb{P}}\left(  \frac{   (\sigma_1( \tau)  +
			\sqrt{r_{\tau} }\, \mathrm{Err}   )  \mathrm{Err}}{(\sigma_{r_{\tau} -1}( \tau))^2  - (\sigma_{r_{\tau} }( \tau))^2 }    \right),		
	\end{equation}
	and 
	\begin{equation}
		\label{eqn:rate_svd2}
		\begin{array}{lll}
			\displaystyle       \frac{\|   \hat{\lambda}(\tau) -  \lambda(\tau)\|_F^2}{nT} &\,=\,&O_{\mathbb{P}}\Bigg( \frac{  r_{\tau}\,\phi_{n}^2  c_T \log(p c_T \vee n  )((1+s_{\tau} +  r_{\tau}/ \log(p c_T \vee n  )  ) }{\kappa_{0}^4 \, \, \underline{f} }\left( \frac{1}{n} +\frac{1}{d_T} \right )  \,+\,  \\
			& &  \,\,\,\,\,\,\,\,\,\,\,\,\,\,\,\,\,\,\frac{\sigma_1^2}{nT}     \frac{   (\sigma_1( \tau)  +
				\sqrt{r_{\tau} }\, \mathrm{Err}  )^2  \mathrm{Err} ^2 }{  \left( (\sigma_{r_{\tau} -1}( \tau))^2  - (\sigma_{r_{\tau} }( \tau))^2  \right)^2  }    \Bigg).
		\end{array}
	\end{equation}
	%
	Here,  $ \mathbb{O}_{r_{\tau}}$  is the group of $r_{\tau} \times r_{\tau}$  orthonormal matrices, and
	\[
	\mathrm{Err} \,:=\,   \frac{  \phi_{n} c_T  \sqrt{\log(p c_T \vee n  )} (\sqrt{1+s_{\tau} }   +   \sqrt{r_{\tau}/  \log(p c_T \vee n  ) } ) }{\kappa_{0}^2 \, \, \underline{f}^{1/2} }\left( \sqrt{n}  +  \sqrt{d_T} \right ).
	\]
\end{corollary}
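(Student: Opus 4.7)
The plan is to combine the Frobenius-norm rate of Theorem~\ref{thm:1} with a Davis-Kahan/Wedin $\sin\Theta$-type inequality, namely the variant proved as Theorem~3 in \cite{yu2014useful}, so as to transfer a bound on $\hat\Pi(\tau)-\Pi(\tau)$ into bounds on the singular vectors and loadings. The first step is purely bookkeeping: multiplying (\ref{eqn:upper_bound1}) by $nT$ shows $\|\hat\Pi(\tau)-\Pi(\tau)\|_F = O_{\mathbb{P}}(\mathrm{Err})$, and since the spectral norm is dominated by the Frobenius norm, also $\|\hat\Pi(\tau)-\Pi(\tau)\|_2 = O_{\mathbb{P}}(\mathrm{Err})$. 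These two rates are the only inputs from Theorem~\ref{thm:1} that the rest of the argument will need.

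For (\ref{eqn:rate_svd}) I would apply the Yu-Wang-Samworth result to the symmetric positive semidefinite matrices $\hat\Pi(\tau)^{\prime}\hat\Pi(\tau)$ and $\Pi(\tau)^{\prime}\Pi(\tau)$, whose top-$r_\tau$ eigenvectors are $\hat g(\tau)$ and $g(\tau)$ respectively and whose eigenvalues are $\hat\sigma_j(\tau)^2$ and $\sigma_j(\tau)^2$. The perturbation required by that theorem is controlled using the identity
\[
\hat\Pi(\tau)^{\prime}\hat\Pi(\tau)-\Pi(\tau)^{\prime}\Pi(\tau) \;=\; \hat\Pi(\tau)^{\prime}\bigl(\hat\Pi(\tau)-\Pi(\tau)\bigr)+\bigl(\hat\Pi(\tau)-\Pi(\tau)\bigr)^{\prime}\Pi(\tau),
\]
together with submultiplicativity $\|AB\|_F\leq \|A\|_2\|B\|_F$, which yields $\|\hat\Pi^{\prime}\hat\Pi-\Pi^{\prime}\Pi\|_F \leq (2\sigma_1(\tau)+\|\hat\Pi-\Pi\|_2)\|\hat\Pi-\Pi\|_F$. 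Plugging the $\mathrm{Err}$ bounds on the two operator/Frobenius norms into the conclusion of Yu-Wang-Samworth, with the relevant eigengap of $\Pi(\tau)^{\prime}\Pi(\tau)$ in the denominator, reproduces the stated rate for $v_1$.

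For (\ref{eqn:rate_svd2}) I would exploit the relations $\lambda(\tau)=\Pi(\tau)g(\tau)$ (an immediate consequence of $\lambda_{\cdot,j}(\tau)=\sigma_j(\tau)\tilde\lambda_{\cdot,j}(\tau)$ together with orthonormality of $g(\tau)$) and $\hat\lambda(\tau)=\hat\Pi(\tau)\hat g(\tau)$. Letting $\tilde O\in\mathbb{O}_{r_\tau}$ be the orthogonal matrix achieving the infimum defining $v_1$, I would rotate $\hat g$ and $\hat\lambda$ consistently by $\tilde O$ and decompose
\[
\hat\lambda(\tau)\tilde O-\lambda(\tau)=\bigl(\hat\Pi(\tau)-\Pi(\tau)\bigr)\hat g(\tau)\tilde O\;+\;\Pi(\tau)\bigl(\hat g(\tau)\tilde O-g(\tau)\bigr).
\]
Taking Frobenius norms and applying $(a+b)^2\leq 2a^2+2b^2$, together with $\|\hat g(\tau)\tilde O\|_2=1$ and $\|\Pi(\tau)\|_2=\sigma_1(\tau)$, gives
\[
\|\hat\lambda(\tau)\tilde O-\lambda(\tau)\|_F^2 \leq 2\|\hat\Pi(\tau)-\Pi(\tau)\|_F^2 + 2\sigma_1(\tau)^2 v_1^2.
\]
Dividing by $nT$ and substituting (\ref{eqn:upper_bound1}) for the first term and the $v_1$ bound just derived for the second recovers the two summands of (\ref{eqn:rate_svd2}).

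The hard part will be the bookkeeping in the subspace perturbation step: matching the exact form of the Yu-Wang-Samworth numerator $(\sigma_1(\tau)+\sqrt{r_\tau}\,\mathrm{Err})\mathrm{Err}$ and the eigengap denominator $\sigma_{r_\tau-1}(\tau)^2-\sigma_{r_\tau}(\tau)^2$, which requires choosing between the two competing bounds ($\sqrt{r_\tau}\|\cdot\|_2$ vs $\|\cdot\|_F$) in their theorem and tracking which subspace index gives the binding gap. A secondary subtlety is that $\Pi(\tau)$ has exact rank $r_\tau$, so $\sigma_{r_\tau+1}(\tau)=0$, and one must ensure that the principal subspace being perturbed is the one for which the stated gap, rather than the trivial $\sigma_{r_\tau}^2$, is the correct denominator.
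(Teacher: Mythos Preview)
Your argument is correct and, for the bound on $v_1$, matches the paper's: both invoke Theorem~3 of \cite{yu2014useful} after passing to the Gram matrices, with $\|\hat\Pi-\Pi\|_F=O_{\mathbb{P}}(\mathrm{Err})$ supplied by Theorem~\ref{thm:1}.

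For $v_2$ you take a genuinely different route. The paper decomposes columnwise via $\lambda_{\cdot,j}-\hat\lambda_{\cdot,j}=(\sigma_j-\hat\sigma_j)\tilde{\hat\lambda}_{\cdot,j}+\sigma_j(\tilde\lambda_{\cdot,j}-\tilde{\hat\lambda}_{\cdot,j})$, then controls the first piece by Weyl's inequality (summed over $j$, which produces the factor $r_\tau$ in the first summand of (\ref{eqn:rate_svd2})) and the second piece by the Davis--Kahan bound on the \emph{left} orthonormal singular vectors $\tilde{\hat\lambda}$. Your decomposition instead uses the identities $\lambda=\Pi g$ and $\hat\lambda=\hat\Pi\hat g$ to write $\hat\lambda\tilde O-\lambda=(\hat\Pi-\Pi)\hat g\tilde O+\Pi(\hat g\tilde O-g)$, so that the second piece is controlled by the \emph{right} singular vector bound $v_1$ already established, and the first piece by $\|\hat\Pi-\Pi\|_F$ directly (no Weyl, no sum over $j$). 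This is cleaner and in fact slightly sharper: your first summand is $\tfrac{2}{nT}\|\hat\Pi-\Pi\|_F^2$ rather than $\tfrac{2r_\tau}{nT}\|\hat\Pi-\Pi\|_F^2$, so you prove a bound that implies (\ref{eqn:rate_svd2}) but without the leading $r_\tau$. Both proofs share the same implicit sloppiness that $v_2$ as stated omits the orthogonal alignment $\tilde O$; you handle this more transparently by carrying $\tilde O$ through explicitly.
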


A particularly interesting instance of Corollary  \ref{cor1} is when
$$(\sigma_1(\tau ))^2,\,(\sigma_{r_{\tau} -1}( \tau))^2  - (\sigma_{r_{\tau} }( \tau))^2\,\asymp \,n T, $$ a  natural setting   if  the entries of  $\Pi(\tau)$ are $O(1)$. Then the upper bound  (\ref{eqn:rate_svd})  becomes
\[
	 \underset{  O  \in \mathbb{O}_{r_{\tau}}  }{\min}  \|  \hat{g}(\tau) O  - g(\tau)\|_F  \,=\,   O_{\mathbb{P}}\left(      \frac{ \phi_n \sqrt{ c_T  \,\log(p c_T \vee n  ) }(\sqrt{1+s_{\tau}}+   \sqrt{\frac{r_{\tau}}{\log(p c_T \vee n  ) }} ) }{\kappa_{0}^2   \underline{f}^{1/2} }\left( \frac{1}{\sqrt{n}} +\frac{1}{\sqrt{d_T}} \right )   \right),
\]
whereas (\ref{eqn:rate_svd2})  is now
\[
\frac{\|   \hat{\lambda}(\tau) -  \lambda(\tau)\|_F^2}{nT}\,=\,O_{\mathbb{P}}\left(  \frac{  r_{\tau}\,\phi_{n}^2  c_T  \log(p c_T \vee n  )(1+s_{\tau}+  r_{\tau}/ \log(p c_T \vee n  ))}{\kappa_{0}^4 \, \, \underline{f} }\left( \frac{1}{n} +\frac{1}{d_T} \right )  \right).
\]

The conclusion of  Corollary  \ref{cor1}  allows us  to provide an upper bound on the estimation of factors  ($g(\tau)$)  and  loadings ($\lambda(\tau)$)  of the latent  matrix   $\Pi(\tau)$. Notice that we are not claiming that  we provide  consistent estimation of the number latent factors, as Theorem \ref{thm:1}  only guarantees consistent estimation of  $\Pi(\tau)$.  However,  other authors, e.g. \cite{moon2015linear}, have observed that estimation can be  possible even if  $r_{\tau}$ is  unknown.

\subsection{Nearly low rank quantiles}

We conclude  our theory section by studying the case  where the  matrix  $X\theta(\tau)$   has nearley   low rank. We make this formal by  imposing the condition that  $X\theta(\tau)$  can be perturbed into a   low-rank matrix. While the result in this subsection differs in the choice of tuning parameters from those  in Theorem \ref{thm:1}, our result here suggests that even in this low-rank setting  there exists a choice of tuning parameters that allows our estimator to provide consistent estimation.



We  view our setting below  as  an extension of the linear model  in \cite{chernozhukov2018inference} to the quantile framework  and reduced  rank  regression.
The  specific  condition is stated  next.

\begin{assumption}
	\label{cond5}
	With probability  approaching  one,  it holds that   $\mathrm{rank}( X\theta(\tau) +  \xi    ) =  O(r_{\tau})$, and
	\[
	\frac{\|\xi\|_{*}}{\sqrt{nT}}  \,=\,  O_{\mathbb{P}}\left( \frac{ c_T  \phi_{n} \sqrt{ r_{\tau}}(\sqrt{n} +\sqrt{d_T})  }{  \sqrt{nT}    \underline{f} }\right),
	\]
	with $c_T$ as defined in Theorem  \ref{thm:1}.	 Furthermore, $\|  X\theta(\tau) + \Pi(\tau) \|_{\infty}  =  O_{\mathbb{P}}(1)$.
\end{assumption}

Notice that in Assumption \ref{cond5},  $\xi$  is an approximation error. In the case $\xi=0$, the condition implies that     $\mathrm{rank}( X\theta(\tau)   ) =  O(r_{\tau})$  with probability  close to one.

Next, exploiting  Assumption \ref{cond5}, we show that  (\ref{eqn:prob_1})    provides consistent  estimation of the quantile   function, namely, of  $X \theta(\tau)  + \Pi(\tau)$.

\begin{theorem}
	\label{thm3}
	Suppose  that Assumptions  \ref{cond2}--\ref{cond5} hold. Let $(\hat{\theta}(\tau), \hat{\Pi}(\tau) )$  be the solution to (\ref{eqn:prob_1}) with the additional  constraint that $\|\tilde{\Pi}\|_{\infty} \leq  C$, for a large enough positive constant $C$.
	Then 
	\[
	\frac{1}{nT} \| \hat{\Pi}(\tau) -\Pi(\tau) - X\theta(\tau) \|_F^2 \,=\,  O_{\mathbb{P}}\left(   \frac{ ( \overline{f}^{\prime} )^2 \phi_{n}^2  c_T r_{\tau}  }{ \, \underline{f}^4 }\left( \frac{1}{n} +\frac{1}{d_T} \right ) \right),
	\]
	where $\{\phi_{n}\}$  is a sequence  with $\phi_n/(\sqrt{\underline{f} }  \log(1+c_T) )\,\rightarrow \, \infty$, and for choices
	
	$$
	\nu_{1}  \,\asymp\,\displaystyle \frac{1}{nT }\sum_{i=1}^n \sum_{t=1}^T   \underset{j=1,\ldots,p}{\max}\,  \left\vert   \frac{  X_{i,t,j}  }{\hat{\sigma}_j} \right\vert,
	$$
	
	and
	$$\nu_2 \,\asymp\,  \frac{c_T}{n T}\left( \sqrt{n}  +  \sqrt{d_T}  \right).$$
\end{theorem}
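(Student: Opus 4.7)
The plan is to follow the template of Theorem \ref{thm:1}, but adapt it to a setting in which no sparsity of $\theta(\tau)$ is assumed. The key structural observation from Assumption \ref{cond5} is that the full signal $M(\tau) := X\theta(\tau) + \Pi(\tau)$ is itself approximately low-rank, of rank $O(r_\tau)$ up to the nuclear-norm error $\xi$, so the problem reduces to a robust low-rank matrix recovery problem in which the nuclear norm penalty carries all the weight. The explicit infinity-norm constraint $\|\tilde{\Pi}\|_{\infty} \le C$ is what lets us avoid restricted-eigenvalue-type identification conditions and instead invoke bounded-parameter empirical process arguments.

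First I would set up the basic optimality inequality by plugging in the feasible comparator $(\tilde{\theta},\tilde{\Pi}) = (0, L + \Pi(\tau))$, where $L := X\theta(\tau) + \xi$ has rank $O(r_\tau)$ by Assumption \ref{cond5} (truncating $\tilde{\Pi}$ coordinatewise to $[-C,C]$ if needed, which is legitimate because $\|M(\tau)\| = O_{\mathbb{P}}(1)$ and the entries of $\Pi(\tau)$ are implicitly bounded). The residual at $(0,\tilde{\Pi})$ differs from that at $(\theta(\tau),\Pi(\tau))$ by exactly $\xi$, so the $1$-Lipschitz property of $\rho_{\tau}$ gives
\begin{equation*}
\hat{Q}_{\tau}(0,\tilde{\Pi}) - \hat{Q}_{\tau}(\theta(\tau),\Pi(\tau)) \,\le\, \frac{1}{nT}\sum_{i,t}|\xi_{i,t}| \,\le\, \frac{\|\xi\|_F}{\sqrt{nT}} \,\le\, \frac{\|\xi\|_{*}}{\sqrt{nT}},
\end{equation*}
which under Assumption \ref{cond5} is already of the target order, while $\nu_2\|\tilde{\Pi}\|_{*} \le \nu_2\sqrt{O(r_\tau)}\,\|\tilde{\Pi}\|_F$ is of the target order for the stated $\nu_2$.

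For a quadratic lower bound on the excess loss $\hat{Q}_{\tau}(\hat{\theta},\hat{\Pi}) - \hat{Q}_{\tau}(\theta(\tau),\Pi(\tau))$, I would adapt the identity-based argument of \cite{belloni2011l1}: a second-order Taylor expansion of the conditional CDF around the true quantile, combined with the density bounds of Assumption \ref{cond1}, yields in expectation
\begin{equation*}
\mathbb{E}\bigl[\hat{Q}_{\tau}(\hat{\theta},\hat{\Pi}) - \hat{Q}_{\tau}(\theta(\tau),\Pi(\tau))\bigr] \,\ge\, \frac{\underline{f}}{2}\cdot\frac{\|\hat{M}-M\|_F^2}{nT} \,-\, \frac{\overline{f}^{\prime}}{6}\cdot\frac{1}{nT}\sum_{i,t}|\hat{M}_{i,t}-M_{i,t}|^3,
\end{equation*}
where $\hat{M} := X\hat{\theta} + \hat{\Pi}$. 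The third-order remainder is tamed by the infinity-norm constraint on $\tilde{\Pi}$ (which gives an analogous bound on $\|\hat{M}-M\|_{\infty}$), letting the cubic term be absorbed into a term of the form $C(\overline{f}^{\prime}/\underline{f})\,\|\hat{M}-M\|_F^2/(nT)$; this is the source of the $(\overline{f}^{\prime})^2/\underline{f}^4$ factor appearing in the final rate.

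The empirical-process step, passing from $\mathbb{E}[\hat{Q}_{\tau}]$ to the realized $\hat{Q}_{\tau}$, is what I expect to be the main obstacle. Under Assumption \ref{cond1}, a blocking argument \`a la \cite{yu1994rates} reduces matters to approximately i.i.d.\ blocks of size $c_T$, and I would then perform chaining over the class of prediction differences $\hat{M}-M$, which has both controlled nuclear norm (from the penalty) and bounded entries (from the explicit constraint), so covering numbers can be handled with standard entropy bounds for low-rank matrix classes. The choice $\nu_1 \asymp \frac{1}{nT}\sum_{i,t}\|X_{i,t}\|_{\infty}$ is calibrated so that the $\ell_1$ penalty dominates $\frac{1}{nT}\sum_{i,t}X_{i,t}\psi_{\tau}(Y_{i,t}-\hat{M}_{i,t})$ uniformly, effectively shutting off the contribution of $\hat{\theta}$ without requiring sparsity. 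Collecting the resulting quadratic inequality in $\|\hat{M}-M\|_F/\sqrt{nT}$, solving it, and translating from $\hat{M}-M$ back to $\hat{\Pi}-\Pi$ via the low-rank approximation afforded by $\xi$ then yields the claimed rate.
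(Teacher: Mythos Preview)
Your overall plan is sound and close to the paper's, but it diverges in two places that matter.

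First, the paper does not merely argue that $\nu_1$ ``effectively shuts off'' $\hat{\theta}$; it shows $\hat{\theta}(\tau)=0$ \emph{exactly}. Since $\rho_\tau$ is $1$-Lipschitz, for any $(\check{\theta},\check{\Pi})$ with $\check{\theta}\neq 0$,
\[
\hat{Q}_\tau(0,\check{\Pi}) - \hat{Q}_\tau(\check{\theta},\check{\Pi}) \;\le\; \frac{1}{nT}\sum_{i,t}\|X_{i,t}\|_\infty\,\|\check{\theta}\|_1 \;<\; \nu_1\|\check{\theta}\|_{1,n,T},
\]
so $(0,\check{\Pi})$ strictly beats $(\check{\theta},\check{\Pi})$. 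This removes $\hat{\theta}$ from the analysis entirely: one works with $\hat{\Delta}:=\hat{\Pi}(\tau)-M(\tau)$ throughout, and there is no ``translating back'' step at the end. The paper uses $(0,M(\tau))$ as the comparator rather than your $(0,L+\Pi)$, handling the approximate low rank of $M$ by inserting $\pm\xi$ inside the nuclear-norm chain; your comparator also works, with slightly different bookkeeping. Note that what is actually controlled is $\|\hat{\Pi}(\tau)-M(\tau)\|_F^2/(nT)$, in line with the paper's remark that $\hat{\Pi}$ absorbs the combined signal.

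Second, your proposed absorption of the cubic remainder is where the argument would break as written. Bounding the cubic by $\frac{\overline{f}'}{6}\|\hat{\Delta}\|_\infty\cdot\|\hat{\Delta}\|_F^2/(nT)$ gives a net quadratic coefficient $\frac{\underline{f}}{2}-\frac{\overline{f}'}{6}\|\hat{\Delta}\|_\infty$, which can be negative since the constraint only says $\|\hat{\Delta}\|_\infty\le 2C$ for a \emph{large} $C$. The paper instead performs a dichotomy on the nonlinearity ratio
\[
q(\hat{\Delta}) \;=\; \frac{3\,\underline{f}^{3/2}}{2\,\overline{f}'}\cdot\frac{\bigl(\tfrac{1}{nT}\sum_{i,t}\hat{\Delta}_{i,t}^2\bigr)^{3/2}}{\tfrac{1}{nT}\sum_{i,t}|\hat{\Delta}_{i,t}|^3}.
\]
If $q(\hat{\Delta})\ge 2\eta_0$ (with $\eta_0\asymp c_T\phi_n\sqrt{r_\tau}(\sqrt{n}+\sqrt{d_T})/(\sqrt{nT}\,\underline{f})$), the localization machinery of Theorem~\ref{thm:1} goes through with $\kappa_0=\underline{f}^{1/2}$ and no covariate part. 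If instead $q(\hat{\Delta})<2\eta_0$, then combining the definition of $q$ with $\sum_{i,t}|\hat{\Delta}_{i,t}|^3\le\|\hat{\Delta}\|_\infty\|\hat{\Delta}\|_F^2$ yields directly $\|\hat{\Delta}\|_F/\sqrt{nT}\lesssim \overline{f}'\,\|\hat{\Delta}\|_\infty\,\eta_0/\underline{f}^{3/2}$, and it is this branch that produces the $(\overline{f}')^2/\underline{f}^4$ factor in the rate. Your intuition about where that factor comes from is correct; the mechanism is this case split rather than a single absorption.
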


Interestingly,  unlike Theorem \ref{thm:1},  Theorem \ref{thm3}  does not show  that  we can estimate $\theta(\tau)$  and $\Pi(\tau)$ separately. Instead,  we  show that  $\hat{\Pi}(\tau)$, the estimated  matrix  of latent factors, captures the overall contribution of  both $\theta(\tau)$  and $\Pi(\tau)$. This is expected since Assumption  \ref{cond5} states that, with high probability, $X  \theta(\tau)$  has rank of the same order as of $\Pi(\tau)$. Notably,   $\hat{\Pi}(\tau)$   is able to estimate   $X \theta(\tau)  + \Pi(\tau)$  via requiring that the value of $ \nu_{1}$ increases  significantly with respect to the choice in Theorem \ref{thm:1},  while  keeping  $\nu_2  \asymp  c_T( \sqrt{n}  +\sqrt{d_T} )/(nT)$.

As for the convergence  rate in Theorem \ref{thm3} for estimating  $\Pi(\tau)$,  this is of the order  $r_{\tau} c_T(n^{-1}  +  d_T^{-1})$, if we ignore $\underline{f}$,  $\overline{f}^{\prime}$, and $\phi_n$.  When  the data  are independent, the rate becomes of the order  $r_{\tau} (  n^{-1}  + T^{-1} ) $. In such framework,  our result  matches  the minimax rate of estimation  in \cite{candes2010matrix} for  estimating  an $n \times T$  matrix  of rank $r_{\tau}$, provided that $n \asymp T$, see our discussion in Section  \ref{sec:disantangle}.

\section{Simulation} \label{sec:Simulation}

In this section,  we evaluate the performance of our proposed approach ($\ell_1$-NN-QR)   with extensive numerical simulations focusing on the median case, namely the case when $\tau$ = 0.5. As benchmarks, we consider  the $\ell_1$-penalized quantile regression studied in \cite{belloni2009computational}, and similarly we refer to this procedure as $\ell_1$-QR. We also  compare with the mean case, which we denote it as  $\ell_1$-NN-LS as it combines the  $\ell_2$-loss function with  $\ell_1$ and nuclear norm   regularization.  We  consider  different  generative scenarios. For each scenario  we randomly generate 100 different data sets  and  compute the  estimates of the methods for a  grid  of values of  $ \nu_{1}$ and  $ \nu_{2}$.  Specifically,  these  tuning parameters are taken to satisfy $ \nu_{1}  \in \{10^{-4}, 10^{-4.5},\ldots,10^{-8}  \}$ and   $ \nu_{2}  \in \{10^{-3}, 10^{-4} ,\ldots,10^{-9} \} $.  Given any  choice  of tuning  parameters, we  evaluate  the performance  of each competing method, averaging  over  the 100 data  sets, and report  values  that  correspond  to the best performance.  These are referred  as optimal tuning parameters and can be thought of as  oracle choices.

We also propose a modified Bayesian Information Criterion (BIC) to select the best pair of tuning parameters. Given  a pair  $( \nu_{1}, \nu_{2})$,  our method produces  a score $(\hat{\theta}(\tau),\hat{\Pi}(\tau))$.  Specifically, denote $\hat{s}_{\tau} = \vert  \{ j\,:\,   \hat{\theta}_j (\tau) \neq 0   \}\vert$ and $\hat{r}_{\tau} = \mathrm{rank}(\hat{\Pi}(\tau))$,
\begin{equation}
\label{eqn:score}
{\displaystyle 
	\begin{array}{lll}
	\mathrm{BIC}(\nu_{1},\nu_{2})&=& \displaystyle  \sum_{i=1}^{n}\sum_{t=1}^{T}\rho_{\tau}(Y_{i,t}-X_{i,t}^{\prime}\hat{\theta}(\nu_{1},\nu_{2})-\hat{\Pi}(\nu_{1},\nu_{2}))+\\
	& &\displaystyle 	\frac{\log(nT)}{2}\left(c_{1}\cdot\hat{s}_{\tau}(\nu_{1},\nu_{2})+(1+n+T)\cdot\hat{r}_{\tau}(\nu_{1},\nu_{2})\right), 
	\end{array}}
\end{equation}
where   $c_1 >0$ is a constant. The intuition  here is  that the first term in the right hand  side  of (\ref{eqn:score})  corresponds  to the  fit  to the data. The second  term  includes the factor  $\log (nT) /2$  to emulate  the usual penalization in BIC.  The  number of parameters in the  model with choices $ \nu_{1}$ and $ \nu_{2}$ is estimated  by  $\hat{s}_{\tau}$ for  the vector of coefficients, and   $(1+n+T) \cdot  \hat{r}_{\tau}$ for the   latent  matrix. The latter  is reasonable since  $\hat{\Pi}(\tau)$ is potentially  a low rank matrix  and we simply count the number of parameters in its  singular  value  decomposition. As for  the  extra quantity  $c_1$,  we have included  this  term to  balance  the  dominating contribution  of  the $(1+n+T) \cdot   \hat{r}_{\tau}   $. We find that in practice   $c_1 =  \log^2 (nT)$  gives reasonable  performance in both simulated and real data. This is the choice that we use in our experiments. Then for each of data set under  each design,  we  calculate the minimum  value of  $\mathrm{BIC}( \nu_{1}, \nu_{2})$,  over the   different  choices   of  $ \nu_{1}$ and  $ \nu_{2} $,  and report   the average  over the 100 Monte Carlo simulations. We refer to  this  as BIC-$\ell_1$-NN-QR.

As  performance measure  we use a scaled version (see Tables \ref{tab1}-\ref{tab2}) of the squared  distance  between the true  vector  of  coefficients  $\theta$ and the corresponding estimate.  We  also  consider   a different metric,  the ``Quantile error" (\cite{koenker1999goodness}):
\begin{equation}
\label{eqn:quantile_error}
\displaystyle \frac{1}{nT}  \sum_{i=1}^n \sum_{t=1}^T (F_{Y_{i,t} |X_{i,t}; \theta(\tau), \Pi(\tau)  }^{-1}(0.5) -   \hat{F}_{Y_{i,t} |X_{i,t}; \theta(\tau), \Pi(\tau)}^{-1}(0.5)  )^2,
\end{equation}
which measures  the average  squared  error  between the  quantile  functions  at the samples and their respective  estimates.  Since our simulations consider models with symmetric mean zero error, the above metric corresponds to the mean squared error for estimating the conditional expectation.

Next, we  provide  a detailed  description of each of the generative models  that we consider  in our experiments. In each model design the  dimensions of the problem are given by $ n \in  \{100,500\}$,  $p \in \{5,30\}$  and $T \in \{100,500\}$, and  we also consider the instance  $n=p=T=50$.  The covariates $\{X_{i,t}\}$ are i.i.d     $N(0,I_p)$.

\medskip
\textbf{Design 1. (Location shift model)} The  data is generated  from the model
\begin{equation}
\label{ex1}
Y_{i,t} =  X_{i,t}^{\prime} \theta +  \Pi_{i,t}  +  \epsilon_{i,t}, \,\,\,\,
\end{equation}
where   $\sqrt{3} \epsilon_{i,t} \overset{i.i.d.}{\sim}  t(3)  $,   $i = 1,\ldots,n$  and  $t =  1,\ldots,T$, with $t(3)$  the Student's t-distribution   with  3  degrees  of freedom. The scaling factor $\sqrt{3}$  simply  ensures that the  errors have variance $1$. In (\ref{ex1}),  we take the vector $\theta \in \mathbb{R}^p$  to satisfy
\[
\theta_j  = \begin{cases}
1   &\text{if}\,\,  j \in \{1,\ldots, \min\{10,p\} \}\\
0 &\text{otherwise.}
\end{cases}
\]
We also construct  $\Pi \in \mathbb{R}^{n \times T}$ to be rank one, defined as
$  \Pi_{i,t} =  5i\,(\mathrm{cos}(    4\pi t/T ))/n.$

\medskip
\textbf{Design 2. (Location-scale shift model)}  We consider the model
\begin{equation}
\label{ex2}
Y_{i,t} =  X_{i,t}^{\prime} \theta +  \Pi_{i,t}  +  ( X_{i,t}^{\prime}  \theta)\epsilon_{i,t}, \,\,\,\,
\end{equation}
where   $\epsilon_{i,t} \overset{i.i.d.}{\sim}  N(0,1) $,   $i = 1,\ldots,n$  and  $t =  1,\ldots,T$. The parameters  in $\theta$  and $\Pi$ in (\ref{ex2})   are taken to be the same as in (\ref{ex1}). The only difference now is that  we have the extra parameter  $\theta \in \mathbb{R}^{p}$, which we define as
$\theta_j = j/(2p)$ for  $j \in \{1,\ldots,p\}$. 

\medskip
\textbf{Design 3. (Location shift model with random factors)}
This  is  the same as  Design 1 with the difference that we now generate  $\Pi$ as
\begin{equation}
\label{eqn:d3}
\Pi_{i,t}   \,=\,   \sum_{k=1}^{5}   c_k  u_k  v_k^T,
\end{equation}
where
\begin{equation}
\label{eqn:d3.1}
c_k \sim U[0,1/4], \,\,\,\,  u_k  =   \frac{\tilde{u}_k}{ \|   \tilde{u}_k \|   },  \,\,\,\,  \tilde{u}_k   \sim N(0, I_n),   \,\,\,\, v_k  =   \frac{\tilde{v}_k}{ \|   \tilde{v}_k \|   },  \,\,\,\,  \tilde{v}_k   \sim N(0, I_n),\,\,\,\,  k = 1,\ldots,5.
\end{equation}

\begin{table}[h!]
	\centering
	\caption{\label{tab1} For Designs  1-2   described in the main  text, under different   values of  $(n, p,T)$, we compare  the  performance  of  different  methods.  The metrics  use  are the scaled $\ell_2$  distance for estimating $\theta(\tau)$, and the Quantile  error  defined in (\ref{eqn:quantile_error}). For  each method we report  the average, over 100 Monte Carlo simulations,  of the  two   performance  measures.  }
	\medskip
	\setlength{\tabcolsep}{4pt}
	\begin{small}
		\begin{tabular}{ rrrr|rr|rr}
			\hline
			&	&  &    &    Design  1  & &   Design  2 &\\
			\hline
			Method &	$n$  & $p$ & $T$   &    $\,\,\,\,\,\frac{\|\hat{\theta}(\tau)  -\theta(\tau) \|^2}{10^{-4}}$ &Quantile error &    $\,\,\,\,\,\frac{\|\hat{\theta}(\tau)  -\theta(\tau) \|^2}{10^{-4}}$ &Quantile error\\
			\hline
			$\ell_1$-NN-QR    &	 300 &  30 & 300   &   \textbf{0.86}              &  \textbf{0.03} &  \textbf{0.69}         & \textbf{0.03}\\
			BIC-$\ell_1$-NN-QR 	  &	 300 &  30 & 300   &   \textbf{0.86}             &\textbf{0.03}  &        0.89             &\textbf{0.03}\\
			$\ell_1$-NN-LS& 300   &  30 &  300  &      2.58                       &0.05&     1.03                        &0.04\\
			$\ell_1$-QR          &  300  &  30 &   300 &     8.18                        &4.19&      6.81                       &4.18\\
			\hline		
			$\ell_1$-NN-QR    &	 300 &  30 & 100   &   \textbf{2.99}         &  \textbf{0.04}  &   \textbf{2.39}         &   \textbf{0.04} \\
			BIC-$\ell_1$-NN-QR &	 300 &  30 & 100   &   \textbf{2.99}           &  \textbf{0.04}  &        \textbf{2.39}       &  \textbf{0.04} \\	
			$\ell_1$-NN-LS& 300   &  30 &  100  &     8.06                &0.12 &   3.11                  &0.08\\
			$\ell_1$-QR          &  300  &  30 &   100 &        41.0                    &4.19 &       26.0                     &4.19\\				
			\hline
			$\ell_1$-NN-QR    &	 300 &  5 & 300   &     \textbf{0.22}              &  \textbf{0.003}   &    \textbf{0.39}               & \textbf{0.03}\\
			BIC-$\ell_1$-NN-QR&	 300 &  5 & 300   &     \textbf{0.22}               &  \textbf{0.003}& 0.48              &  \textbf{0.03} \\
			$\ell_1$-NN-LS& 300   &  5 &  300  &   0.37                     &0.01  &   0.69                        & \textbf{0.03} \\
			$\ell_1$-QR          &  300  &  5 &   300 &     2.6                       &4.19 &   3.27                         &4.19\\
			\hline		
			$\ell_1$-NN-QR    &	 300 &  5 & 100   &         \textbf{0.50}           & \textbf{0.008}&   \textbf{0.80}                 &\textbf{0.03} \\
			BIC-$\ell_1$-NN-QR &	 300 &  5 & 100   &         0.53           &0.009  &      0.97              & \textbf{0.03}\\
			$\ell_1$-NN-LS& 300   &  5 &  100  &        1.12                    &0.02&      1.46                      &0.04\\
			$\ell_1$-QR          &  300  &  5 &   100 &        7.87                        &4.19 &        8.56                        &4.19\\
			\hline		
			$\ell_1$-NN-QR    &	 100 &  30 & 300   &    \textbf{2.97}              & \textbf{0.04}  &  \textbf{2.26}                &  \textbf{0.04} \\
			BIC-$\ell_1$-NN-QR &	 100 &  30 & 300   &   \textbf{2.97}                 & \textbf{0.04} &     2.81             &\textbf{0.04}\\
			$\ell_1$-NN-LS& 100   &  30 &  300  &   9.77                          &0.12&   3.39                          &0.06\\
			$\ell_1$-QR          &  100  &  30 &   300 &     40.0                           &4.23&    24.0                            &4.23\\
			\hline		
			$\ell_1$-NN-QR    &	 100 &  30 & 100   &    \textbf{2.3}           &  \textbf{0.04} &  \textbf{10.0}             &  \textbf{0.03}   \\
			BIC-$\ell_1$-NN-QR&	 100 &  30 & 100   &       \textbf{2.3}          &  0.06  &   11.0           &   0.04\\
			$\ell_1$-NN-LS& 100   &  30 &  100  &   8.4                  &0.79  &     13.0                &0.16\\
			$\ell_1$-QR          &  100  &  30 &   100 &     229.0                          &4.23&    177.0                  &         4.23\\
			\hline		
			$\ell_1$-NN-QR    &	 100 &  5 & 300   &   \textbf{0.64}            & \textbf{0.008}    &       \textbf{0.89}        & \textbf{0.03} \\
			BIC-$\ell_1$-NN-QR&	 100 &  5 & 300   &       0.65        & \textbf{0.008}   &      1.32         &  \textbf{0.03}\\
			$\ell_1$-NN-LS& 100   &  5 &  300  &    1.25                      &0.02&      1.67                    &0.05\\
			$\ell_1$-QR          &  100  &  5 &   300 &   8.79                             &4.23 &      8.61                          &4.23\\
			\hline		
			$\ell_1$-NN-QR    &	 100 &  5 & 100   &    \textbf{1.82}             & \textbf{0.009}&  \textbf{3.30}             & \textbf{0.03} \\
			BIC-$\ell_1$-NN-QR &	 100 &  5 & 100   &      1.85           &  0.01 &        3.74         & 0.04 \\
			$\ell_1$-NN-LS& 100   &  5 &  100  &      4.06                       &0.03&   4.45                          &0.14\\
			$\ell_1$-QR          &  100  &  5 &   100 &      32.0                        &4.23 &      27.0                         &4.23\\
			\hline
			$\ell_1$-NN-QR    &	 50&  50 & 50   &     \textbf{136.0}       &  \textbf{0.09}       &    \textbf{388.0} &  \textbf{0.21}\\
			BIC-$\ell_1$-NN-QR &	 50 &  50 & 50   &      241.0      &     0.33    &522.0& 0.59\\
			$\ell_1$-NN-LS& 50   &  50 &  50  &    2010       &  0.17      & 982&0.45\\
			$\ell_1$-QR          &  50&  50 &   50 &    1258        &     4.27    &2000.6 & 4.33\\
			\hline
		\end{tabular}
	\end{small}
\end{table}

\begin{table}[h!]
	\centering
	\caption{\label{tab2}  For Designs  3-4   described in the main  text, under different   values of  $(n, p,T)$, we compare  the  performance  of  different  methods.  The metrics  use  are the scaled $\ell_2$  distance for estimating $\theta(\tau)$, and the Quantile  error  defined in (\ref{eqn:quantile_error}). For  each method we report  the average, over 100 Monte Carlo simulations,  of the  two   performance  measures. }
	\medskip
	\setlength{\tabcolsep}{4pt}
	\begin{small}
		\begin{tabular}{ rrrr|rr|rr}
			\hline
			&	&  &    &    Design  3  & &   Design  4 &\\
			\hline
			Method &	$n$  & $p$ & $T$   &    $\,\,\,\,\,\frac{\|\hat{\theta}(\tau)  -\theta(\tau) \|^2}{10^{-4}}$ &Quantile error &    $\,\,\,\,\,\frac{\|\hat{\theta}(\tau)  -\theta(\tau) \|^2}{10^{-4}}$ &Quantile error\\
			\hline
			$\ell_1$-NN-QR    &	 300 &  30 & 300   &    \textbf{2.17}          &   \textbf{0.17}  &  \textbf{1.58}& \textbf{0.11} \\
			BIC-$\ell_1$-NN-QR 	  &	 300 &  30 & 300   &   2.17       &0.29 &    2.81              & 0.26\\
			$\ell_1$-NN-LS& 300   &  30 &  300  &     3.33                    &  0.19 & 3.54                           & 0.12\\
			$\ell_1$-QR          &  300  &  30 &   300 &    6.59                      &1.09&          12.0             &1.01\\
			\hline		
			$\ell_1$-NN-QR    &	 300 &  30 & 100   &   \textbf{10.0}     & 0.18  &  \textbf{4.61}       & \textbf{0.13}   \\
			BIC-$\ell_1$-NN-QR &	 300 &  30 & 100   &    11.0      & 0.26  &      \textbf{4.61}            & 0.16 \\	
			$\ell_1$-NN-LS& 300   &  30 &  100  &   11.0              &0.25&             9.21     &0.17\\
			$\ell_1$-QR          &  300  &  30 &   100 &    27.0                    &11.10 &              47.2             &1.10\\				
			\hline
			$\ell_1$-NN-QR    &	 300 &  5 & 300   &       \textbf{1.13}        &  \textbf{0.17}   &       \textbf{0.27}          & \textbf{0.03}\\
			BIC-$\ell_1$-NN-QR&	 300 &  5 & 300   &         1.56          & 0.33&        0.74     &    0.13   \\
			$\ell_1$-NN-LS& 300   &  5 &  300  &       1.58               &  0.19 &                0.49       &  0.05 \\
			$\ell_1$-QR          &  300  &  5 &   300 &      2.47              &1.10&        5.52                  &1.09\\
			\hline		
			$\ell_1$-NN-QR    &	 300 &  5 & 100   &      \textbf{3.04}            & \textbf{0.19}&       \textbf{0.69}          & 0.05\\
			BIC-$\ell_1$-NN-QR &	 300 &  5 & 100   &    4.37              &0.27 &         1.11       & 0.15\\
			$\ell_1$-NN-LS& 300   &  5 &  100  &   4.43                       &0.27&     1.12                     &  \textbf{0.05} \\
			$\ell_1$-QR          &  300  &  5 &   100 &    7.65                           &1.11&    13.4                          &1.10\\
			\hline		
			$\ell_1$-NN-QR    &	 100 &  30 & 300   &          \textbf{11.0}       &   \textbf{0.18}   &     \textbf{7.06}            &  \textbf{0.15}\\
			BIC-$\ell_1$-NN-QR &	 100 &  30 & 300   &    12.0               & 0.27 &    7.29             &0.24\\
			$\ell_1$-NN-LS& 100   &  30 &  300  &  12.0                        &0.34&  11.2                         &0.18\\
			$\ell_1$-QR          &  100  &  30 &   300 &     26.0                           &1.10&    51.0                       &1.12\\
			\hline		
			$\ell_1$-NN-QR    &	 100 &  30 & 100   &      \textbf{6.12}     & \textbf{0.17}  &     \textbf{32.1}       & 0.10     \\
			BIC-$\ell_1$-NN-QR&	 100 &  30 & 100   &      6.64      & 0.22    &        35.4    & 0.15\\
			$\ell_1$-NN-LS& 100   &  30 &  100  &    8.63              & 1.08&      82.0               &  0.84 \\
			$\ell_1$-QR          &  100  &  30 &   100 &     16.5                    &1.12&        267.6          &   1.13    \\
			\hline		
			$\ell_1$-NN-QR    &	 100 &  5 & 300   &       \textbf{2.99}      &  \textbf{0.18}  &     \textbf{0.86}        & \textbf{0.05}  \\
			BIC-$\ell_1$-NN-QR&	 100 &  5 & 300   &       4.19       &  0.26&     1.43       &  0.14\\
			$\ell_1$-NN-LS& 100   &  5 &  300  &     5.27                  &0.33&   1.45                   &  \textbf{0.05} \\
			$\ell_1$-QR          &  100  &  5 &   300 &         8.59                     &1.10&21.0                             &1.09\\
			\hline		
			$\ell_1$-NN-QR    &	 100 &  5 & 100   &    \textbf{12.3}           & \textbf{0.16}&     \textbf{2.15}       &\textbf{0.04} \\
			BIC-$\ell_1$-NN-QR &	 100 &  5 & 100   &      13.1     &  0.20&         2.43     &0.07  \\
			$\ell_1$-NN-LS& 100   &  5 &  100  &     15.0                      &1.09&        3.61                    &0.07\\
			$\ell_1$-QR          &  100  &  5 &   100 &      24.7
			& 1.10&       45.7                     &1.09\\
			\hline
			$\ell_1$-NN-QR    &	 50&  50 & 50   & \textbf{386.7}           &  \textbf{0.32}      &  \textbf{511}&  \textbf{0.27} \\
			BIC-$\ell_1$-NN-QR &	 50 &  50 & 50   &     898.1       &  0.97      & 1128.1 &0.74\\
			$\ell_1$-NN-LS& 50   &  50 &  50  &  605.2         &      0.44   &967.3 &0.49\\
			$\ell_1$-QR          &  50&  50 &   50 &   1035.8         &   1.15     &1697.5 &1.18\\	
			\hline
		\end{tabular}
	\end{small}
\end{table}

\medskip
\textbf{Design 4. (Location-scale shift model with random factors)}
This  is   a combination of  Designs  2 and  3. Specifically,  we generate  data as in (\ref{ex2})  but with $\Pi$  satisfying  (\ref{eqn:d3})  and  (\ref{eqn:d3.1}).

The   results in Tables  \ref{tab1}-\ref{tab2} show  a clear  advantage of our proposed  method  against  the benchmarks across the four designs  we consider. This is  true  for  estimating  the  vector  of coefficients,  and under the measure  of quantile error. Importantly,  our approach  is not only  the best under the optimal choice  of tuning parameters but it remains competitive  with the BIC  type of criteria  defined  with the score (\ref{eqn:score}).   In particular, under Designs  1 and 2, the data driven  version of our estimator, BIC-$\ell_1$-NN-QR,  performs  very closely to the ideally  tuned one $\ell_1$-NN-QR. In the more challenging settings of Designs  3 and 4,  we noticed  that BIC-$\ell_1$-NN-QR  performs  reasonably well compared to $\ell_1$-NN-QR.

\section{Empirical Performance of the ``Characteristics + Latent Factor'' Model in Asset Pricing} \label{sec:Empirical}

\subsubsection*{Data Description}

We use data from CRSP and Compustat to construct 24 firm level characteristics
that are documented to explain the cross section and time series of stock returns
in the finance and accounting literature. The characteristics we choose
include well-known drivers of stock returns such as beta, size, book-to-market,
momentum, volatility, liquidity, investment and profitability. Table
\ref{Table-names} in the Supplementary Material lists details of the characteristics used and the methods to construct the data. We follow
the procedures of \cite{green2017characteristics} to construct the characteristics
of interest. The characteristics used in our model are standardized to have zero
mean and unit variance. Figure \ref{Figure-histogram-of-returns} plots the histogram of monthly stock
returns and $9$ standardized firm characteristics. Each of them have
different distribution patterns, suggesting the potential nonlinear
relationship between returns and firm characteristics, which can be potentially captured by our quantile model.

Our empirical design is closely related to the characteristics model proposed by \cite{daniel1997evidence, daniel1998characteristics}. To avoid any “data snooping” issue cause by grouping, we conduct the empirical analysis at individual stock level. Specifically, we use the sample period from January 2000 to December 2018, and estimate our model using monthly returns (228 months) from 1306 firms that have non-missing values during this period.

\begin{figure}[!h]
	\begin{centering}
		\includegraphics[width=7cm,height=7cm]{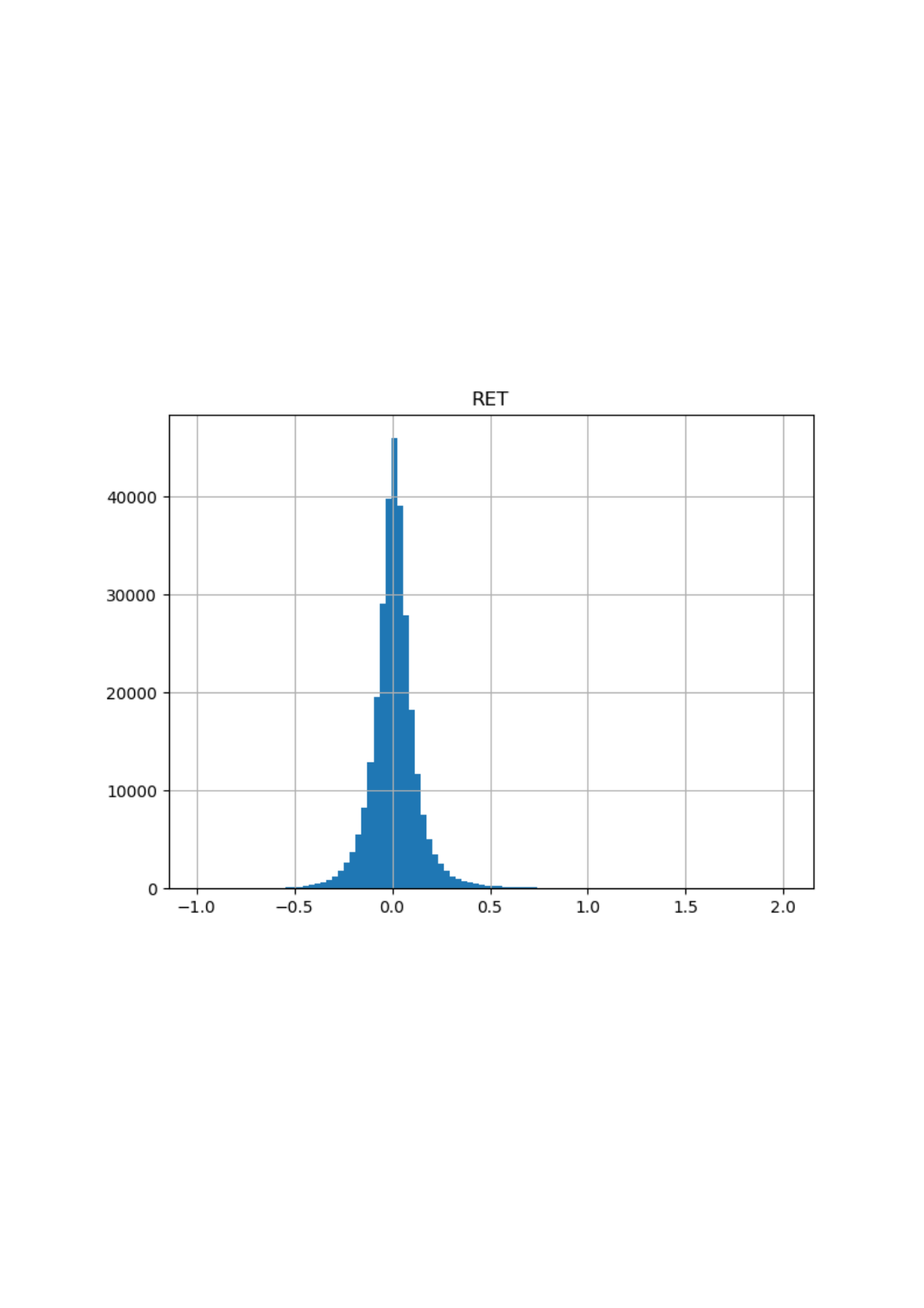}\includegraphics[width=7cm,height=7cm]{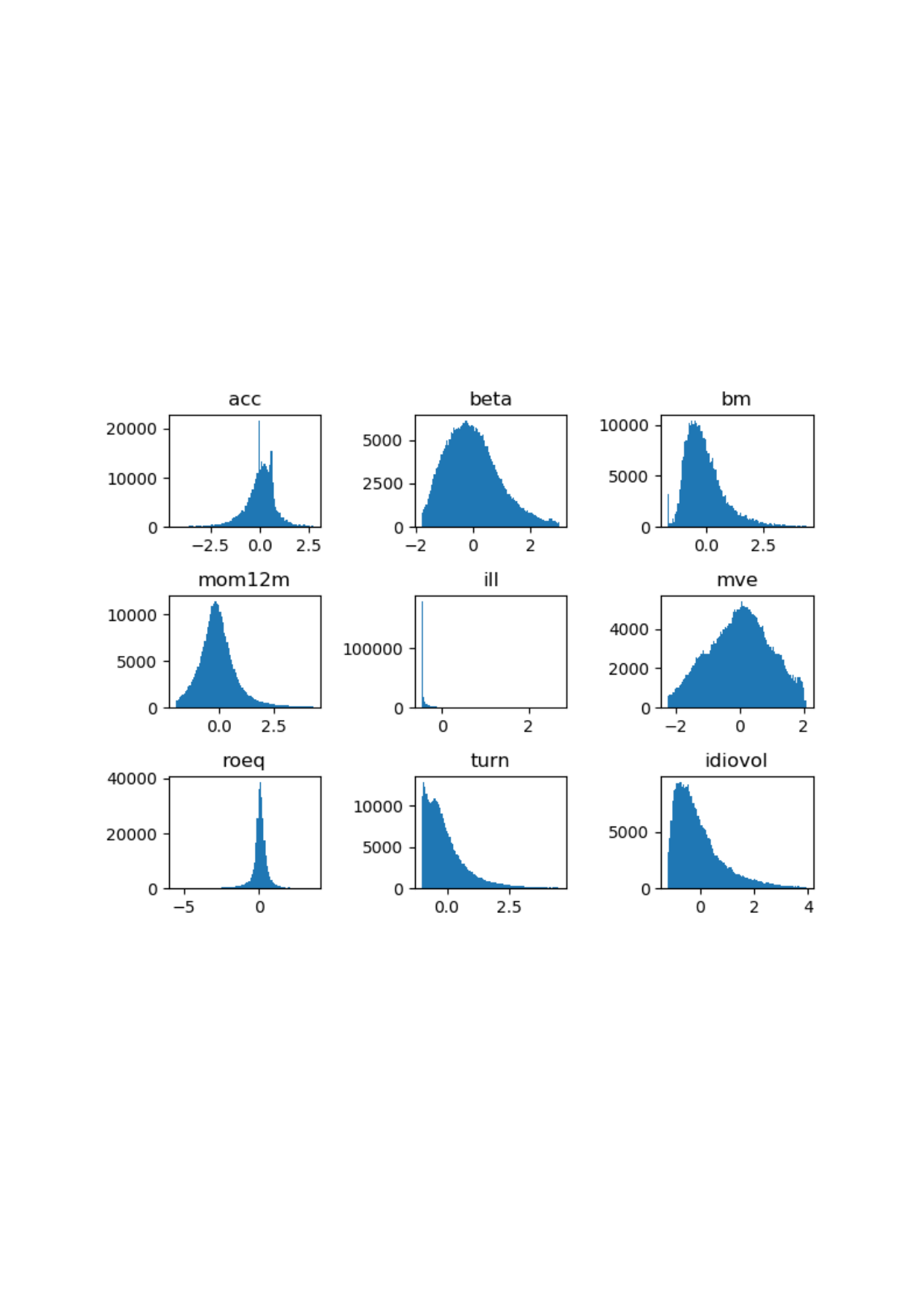}
		\par\end{centering}
	\caption{Histograms of monthly stock returns (left) and
		firm characteristics (right).}\label{Figure-histogram-of-returns}
\end{figure}

\subsubsection*{A ``Characteristic + Latent Factor'' Asset Pricing Model}

We apply our model to fit the cross section and time series of stock returns (\citep{lettau2018estimating}). There are $n$ assets (stocks), and the return of the each asset can potentially be explained by $p$ observed asset characteristics (sparse part) and $r$ latent factors (dense part). The asset characteristics are the covariates in our model. Our model imposes a sparse structure on the $p$ characteristics so that only the characteristics having the strongest explanatory powers are selected by the model. The part that's unexplained by the firm characteristics are captured by latent factors.

\begin{figure}[!h]
	\begin{raggedright}
		\includegraphics[width=15cm,height=8cm]{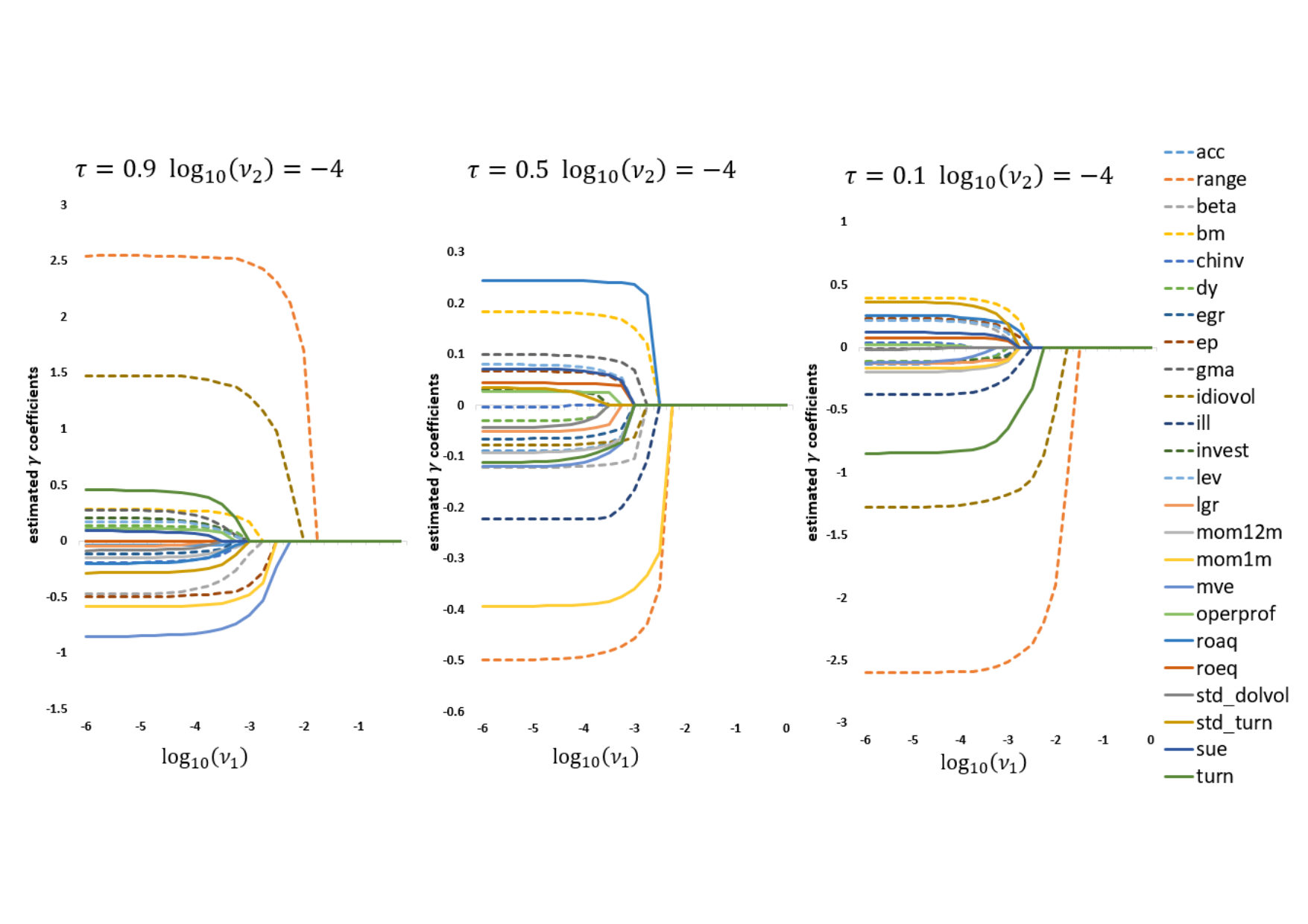}
		\par\end{raggedright}
	\caption{Estimated Coefficients as a Function of $ \nu_{1}$. 
		This figure plots the estimated coefficient $\theta$ when the tuning parameter
		$ \nu_{1}$ changes, for $\tau = \{0.1, 0.5, 0.9\}$. The parameter
		$ \nu_{2}$ is fixed at $\log_{10}( \nu_{2})=-4$.}\label{Figure-solution-pas}
\end{figure}

Suppose we have $n$ stock returns ($R_{1}$,...,$R_{n}$), and $p$ observed firm characteristics ($X_{1}$,...,$X_{p}$) over $T$ periods. The return quantile at level $\tau$ of portfolio $i$ in time $t$ is assumed to be the following:
\[
\begin{array}{ccccc}
 & F_{R_{i,t}\vert X_{i,t-1};\theta(\tau),\lambda_{i}(\tau),g_{t}(\tau)}^{-1}(\tau)=X_{i,t-1,1}\theta_{1}(\tau)+...+X_{i,t-1,p}\theta_{p}(\tau)+ & \lambda_{i}(\tau) & g_{t}(\tau) & \\
 &  & \left(1\times r_{\tau}\right) & \left(r_{\tau}\times1\right)
\end{array}
\]
where $X_{i,t-1,k}$, e.g. $k=1$ or $k=p$, is the $k$-th characteristic (e.g. the
book-to-market ratio) of asset $i$ in time $t-1$. The
coefficient $\theta_k$ captures the extent to which assets
with higher/lower characteristic $X_{i,t,k}$ delivers higher average
return. The term $g_{t}$ contains the $r_{\tau}$ latent factors in period
$t$ which captures systematic risks in the market, and $\lambda_{i}$ contains portfolio
$i$'s loading on these factors (i.e. exposure to risk).

There is a discussion in academic research on ``factor versus characteristics'' in late 1990s and early 2000s. The factor/risk based view argues that an asset has higher expected returns because of its exposure to risk factors (e.g. Fama-French 3 factors) which represent some unobserved systematic risk. An asset's exposure to risk factors are measured by factor loadings. The characteristics view claims that stocks have higher expected returns simply because they have certain characteristics (e.g. higher book-to-market ratios, smaller market capitalization), which might be independent of systematic risk (\cite{daniel1997evidence,daniel1998characteristics}). The formulation of our model accommodates both the factor view and the characteristics view. The sparse part is similar to \cite{daniel1997evidence,daniel1998characteristics}, in which stock return are explained by firm characteristics. The dense part assumes a low-dimensional latent factor structure where the common variations in stock returns are driven by several ``risk factors''.

\subsubsection*{Empirical Results}

We first get the estimates $\hat{\theta}(\tau)$ and $\hat{\Pi}(\tau)$ at three different quantiles, $\tau = \{0.1, 0.5, 0.9\}$  using our proposed ADMM algorithm. We then decompose $\hat{\Pi}(\tau)$  into the products of its $\hat{r}_\tau$ principal components $\hat{g}(\tau)$ and their loadings $\hat{\lambda}(\tau)$ via eq(\ref{eqn:svd}).
The $(i,k)$-th element of $\hat{\lambda}(\tau)$, denoted as $\hat{\lambda}_{i,k}(\tau)$, can be interpreted as the exposure of asset $i$ to the $k$-th latent factor (or in finance terminology, ``quantity of risk''). And the $(t,k)$-th elements of $\hat{g}(\tau)$, denoted as $\hat{g}_{t,k}(\tau)$, can be interpreted as the compensation of the risk exposure to the $k$-th latent factor in time period $t$ (or in finance terminology, ``price of risk''). The model are estimated with different tuning parameters $ \nu_{1}$ and $ \nu_{2}$, and use our proposed BIC to select the optimal tuning parameters. The details of the information criteria can be found in equation (\ref{eqn:score}).

The tuning parameter $ \nu_{1}$ governs the sparsity of the coefficient vector $\theta$. The larger $ \nu_{1}$ is, the larger the shrinkage effect on $\theta$. Figure \ref{Figure-solution-pas} illustrate the effect of this shrinkage. With $\nu_2$ fixed, as the value of $ \nu_{1}$ increases, more coefficients in the estimated $\theta$ vector shrink to zero. From a statistical point of view, the ``effective characteristics'' that
can explain stock returns are those  with non-zero coefficient $\theta$ at relatively large values of $ \nu_{1}$.

Table \ref{Table-estimated-rank} reports the relationship
between tuning parameter $ \nu_{2}$ and rank of estimated $\Pi$
at different quantiles. It shows that the tuning parameter $ \nu_{2}$ governs the rank of matrix $\Pi$, and that as $ \nu_{2}$ increases, we penalize
more on the rank of matrix $\Pi$ through its nuclear norm.

\begin{table}[!ht]\centering
	\begin{threeparttable}
		\caption{The estimated rank of $\Pi$.}\label{Table-estimated-rank}
		\begin{tabular}{@{\extracolsep{8pt}} rrrr}
			\hline\hline
			\multicolumn{1}{l}{$\log_{10}( \nu_{2})$} & \multicolumn{1}{l}{$\tau=0.1$} & \multicolumn{1}{l}{$\tau =0.5$} & \multicolumn{1}{l}{$\tau=0.9$}\tabularnewline
			\hline
			-6.0  & 228  & 228  & 228 \tabularnewline
			-5.5  & 228  & 228  & 228 \tabularnewline
			-5.0  & 228  & 228  & 228 \tabularnewline
			-4.5  & 164  & 228  & 168 \tabularnewline
			-4.0  & 1  & 7  & 2 \tabularnewline
			-3.5  & 1  & 1  & 1 \tabularnewline
			-3.0  & 0  & 0  & 0 \tabularnewline
			\hline \hline \\[-1.8ex]
		\end{tabular}
		\begin{tablenotes}
			\small
			\item Note: Estimated under different values of turning parameter $ \nu_{2}$, when $ \nu_{1}=10^{-5}$ is fixed. The results are reported for quantiles 10\%, 50\% and 90\%.
		\end{tablenotes}
	\end{threeparttable}
\end{table}

The left panel of Table \ref{Figure-sparse-beta} reports the estimated coefficients in the sparse part when we fix the tuning parameters at $\log_{10}( \nu_{1})=-3.5$ and $\log_{10}( \nu_{2})=-4$. The signs of some characteristics are the same across the quantiles, e.g. size (mve), book-to-market (bm), momentum (mom1m, mom12m), accurals (acc), book equity growth (egr), leverage (lev), and standardized unexpected earnings (sue). However, some characteristics have heterogenous effects on future returns at different quantiles. For example, at the 10\% quantile, high beta stocks have high future returns, which is consistent with results found via the CAPM; while at $50\%$ and 90\% quantile, high beta stocks have low future returns, which conforms the ``low beta anomaly'' phenomenon. Volatility (measured by both range and idiosyncratic volatility) is positively correlated with future returns at 90\% quantile, but negatively correlated with future returns at 10\% and 50\% percentile. The result suggests that quantile models can capture a wider picture of the heterogenous relationship between asset returns and firm characteristics at different parts of the distribution (\cite{koenker2000galton}).

\begin{table}\centering
	\begin{threeparttable}
		\caption{Sparse Part Coefficients at Different Quantiles.}\label{Figure-sparse-beta}
		\begin{tabular} {@{\extracolsep{8pt}} lrrrrrr}
			\\[-1.8ex]\hline
			\hline \\[-1.8ex]
			& \multicolumn{3}{c}{Fixed $ \nu_{1}$ and $ \nu_{2}$} & \multicolumn{3}{c}{Optimal $ \nu_{1}$ and $ \nu_{2}$ (BIC)}\tabularnewline
			\cline{2-4}  \cline{5-7}
			& $\tau = 0.1$ & $\tau = 0.5$ & $\tau = 0.9$ & $\tau = 0.1$ &$ \tau = 0.5$ & $\tau = 0.9$\tabularnewline
			\hline \\[-1.8ex]
			acc  & -0.089  & -0.074  & -0.041  & 0  & 0  & 0 \tabularnewline
			range  & -2.574  & -0.481  & 2.526  & -2.372  & -0.356  & 2.429 \tabularnewline
			beta  & 0.174  & -0.116  & -0.406  & 0  & 0  & -0.115 \tabularnewline
			bm  & 0.371  & 0.175  & 0.263  & 0  & 0  & 0.168 \tabularnewline
			chinv  & 0  & 0  & -0.152  & 0  & 0  & 0 \tabularnewline
			dy  & -0.086  & 0  & 0.119  & 0  & 0  & 0 \tabularnewline
			egr  & -0.106  & -0.053  & -0.091  & 0  & 0  & 0 \tabularnewline
			ep  & 0.199  & 0.057  & -0.479  & 0  & 0  & -0.391 \tabularnewline
			gma  & 0  & 0.091  & 0.201  & 0  & 0  & 0 \tabularnewline
			idiovol  & -1.229  & -0.071  & 1.438  & -1.055  & 0  & 1.286 \tabularnewline
			ill  & -0.334  & -0.218  & 0  & 0  & 0  & 0 \tabularnewline
			invest  & -0.097  & 0  & 0.146  & 0  & 0  & 0 \tabularnewline
			lev  & 0.183  & 0.063  & 0.129  & 0  & 0  & 0 \tabularnewline
			lgr  & -0.106  & -0.037  & 0  & 0  & 0  & 0 \tabularnewline
			mom12m  & -0.166  & -0.077  & -0.117  & 0  & 0  & 0 \tabularnewline
			mom1m  & -0.150  & -0.384  & -0.571  & 0  & -0.286  & -0.477 \tabularnewline
			mve  & -0.038  & -0.093  & -0.811  & 0  & 0  & -0.667 \tabularnewline
			operprof  & 0  & 0.025  & 0.088  & 0  & 0  & 0 \tabularnewline
			roaq  & 0.221  & 0.242  & -0.147  & 0  & 0  & 0 \tabularnewline
			roeq  & 0.073  & 0.041  & 0  & 0  & 0  & 0 \tabularnewline
			std\_dolvol  & 0  & 0  & -0.039  & 0  & 0  & 0 \tabularnewline
			std\_turn  & 0.310  & 0  & -0.247  & 0  & 0  & 0 \tabularnewline
			sue  & 0.105  & 0.061  & 0.045  & 0  & 0  & 0 \tabularnewline
			turn  & -0.796  & -0.083  & 0.386  & -0.330  & 0  & 0 \tabularnewline
			\hline
			\hline \\[-1.8ex]
		\end{tabular}
		\begin{tablenotes}
			\small
			\item Note:  The left panel reports the estimated coefficient vector $\theta$ in the sparse part for quantiles 10\%, 50\% and 90\%, when the tuning
			parameters are fixed at $\log_{10}( \nu_{1})=-3.5$, $\log_{10}( \nu_{2})=-4$.
			The right panel reports the estimated coefficient vector $\theta$
			under the when the turning parameters are optimal, as selected by
			BIC (indicated in Table \ref{Figure-optimal-hyperparam}).
		\end{tablenotes}
	\end{threeparttable}
\end{table}

Table \ref{Figure-optimal-hyperparam} reports the selected optimal tuning parameters $ \nu_{1}$ and $ \nu_{2}$ for different quantiles. The tuning parameters
are selected via BIC  based on (\ref{eqn:score}) as discussed in Section \ref{sec:Simulation}. For every $ \nu_{1}$
and $ \nu_{2}$, we get the estimates $\widetilde{\theta}( \nu_{1}, \nu_{2})$
and $\widetilde{\Pi}( \nu_{1}, \nu_{2})$  and the number
of factors $r=\text{rank}(\widetilde{\Pi}( \nu_{1}, \nu_{2}))$.
The $\theta$ vector is sparse with non-zero coefficients on selected
characteristics. The 10\% quantile of returns has only 1 latent factor,
and 3 selected characteristics. The median of returns has 7 latent
factors and 2 selected characteristics. The 90\% quantile of returns
has 2 latent factors and 7 selected characteristics.
Range is the only characteristic selected across all 3 quantiles.
Idiosyncratic volatility is selected at 10\% and 90\% quantiles, with
opposite signs. 1-month momentum is selected at 50\% and 90\% percentiles,
with negative sign suggesting reversal in returns.

\begin{table}[!h] 
\renewcommand{\arraystretch}{1.2}
	\begin{threeparttable}
		\caption{Selected Optimal Tuning Parameters and Number of Factors}\label{Figure-optimal-hyperparam}
        \begin{centering}		
		\begin{tabular}{@{\extracolsep{12pt}} lccc}
			\\[-1.8ex]\hline
			\hline \\[-1.8ex]
			& \multicolumn{1}{l}{$\tau=0.1$} & \multicolumn{1}{l}{$\tau=0.5$} & \multicolumn{1}{l}{$\tau=0.9$}\tabularnewline
			optimal $r$ & 1  & 7  & 2 \tabularnewline
			optimal $ \nu_{1}$ & $10^{-2.5}$ & $10^{-2.5}$ & $10^{-2.75}$\tabularnewline
			optimal $ \nu_{2}$ & $10^{-4}$  & $10^{-4}$ & $10^{-4}$ \tabularnewline
			\hline \hline \\[-1.8ex]
		\end{tabular}
		\par\end{centering}
		\begin{tablenotes}
			\small
			\item Note: This table reports the selected optimal tuning parameter $ \nu_{1}$
			and $ \nu_{2}$ that minimize the objective function in equation
			(\ref{eqn:score}) for different quantiles.
			    
		\end{tablenotes}
	\end{threeparttable}
	 
\end{table}

Overall, the empirical evidence suggests that both firm characteristics and latent risk factors have valuable information in explaining stock returns. In addition, we find that the selected characteristics and number of latent factors differ across the quantiles.

\subsubsection*{Interpretation of Latent Factors}

Table \ref{Table-variance-pi} below reports the variance in the matrix $\Pi$ explained by each
Principal Component (PC) or latent factor. At upper and lower quantiles, the first PC dominates. At the median there are more latent factors accounting
for the variations in $\Pi$, with second PC explaining 13.8\% and
third PC explaining 6.8\%.

\begin{table}[!h]\centering
\begin{small}
		\begin{threeparttable}
		\caption{Percentage of $\Pi$ explained by PC}\label{Table-variance-pi}
		\begin{tabular}{@{\extracolsep{8pt}} cccc}
			\\[-1.8ex]\hline
			\hline \\[-1.8ex]
			& \multicolumn{1}{l}{$\tau=0.1$} & \multicolumn{1}{l}{$\tau=0.5$} & \multicolumn{1}{l}{$\tau=0.9$}\tabularnewline
			PC1  & 100.00\%  & 73.82\%  & 99.68\% \tabularnewline
			PC2  &  & 13.71\%  & 0.32\% \tabularnewline
			PC3  &  & 6.78\%  & \tabularnewline
			PC4  &  & 4.12\%  & \tabularnewline
			PC5  &  & 1.11\%  & \tabularnewline
			PC6  &  & 0.45\%  & \tabularnewline
			PC7  &  & 0.01\%  & \tabularnewline
			Total  & 100.00\%  & 100.00\%  & 100.00\% \tabularnewline
			\hline \hline \\[-1.8ex]
		\end{tabular}
		\begin{tablenotes}
			\small
			\item Note: Variance of matrix $\Pi$ explained by each principal
			component for different quantiles.
		\end{tablenotes}
	\end{threeparttable}
\end{small}
\end{table}

We also found the first PC captures the market returns in all three quantiles:
Figure \ref{Figure-PC-against-ret} plots the first principal component against the monthly returns
of S\&P500 index, showing that they have strong positive correlations.

\begin{figure}[!h]
	\caption{The S\&P 500 Index Return and the First PC at Different Quantiles. 
		This figure plots the first PC of matrix $\Pi$ against S\&P500 index monthly return for quantiles 10\% (left), 50\% (middle), and 90\% (right).}\label{Figure-PC-against-ret}
\includegraphics[width=5.5cm,height=4.5cm]{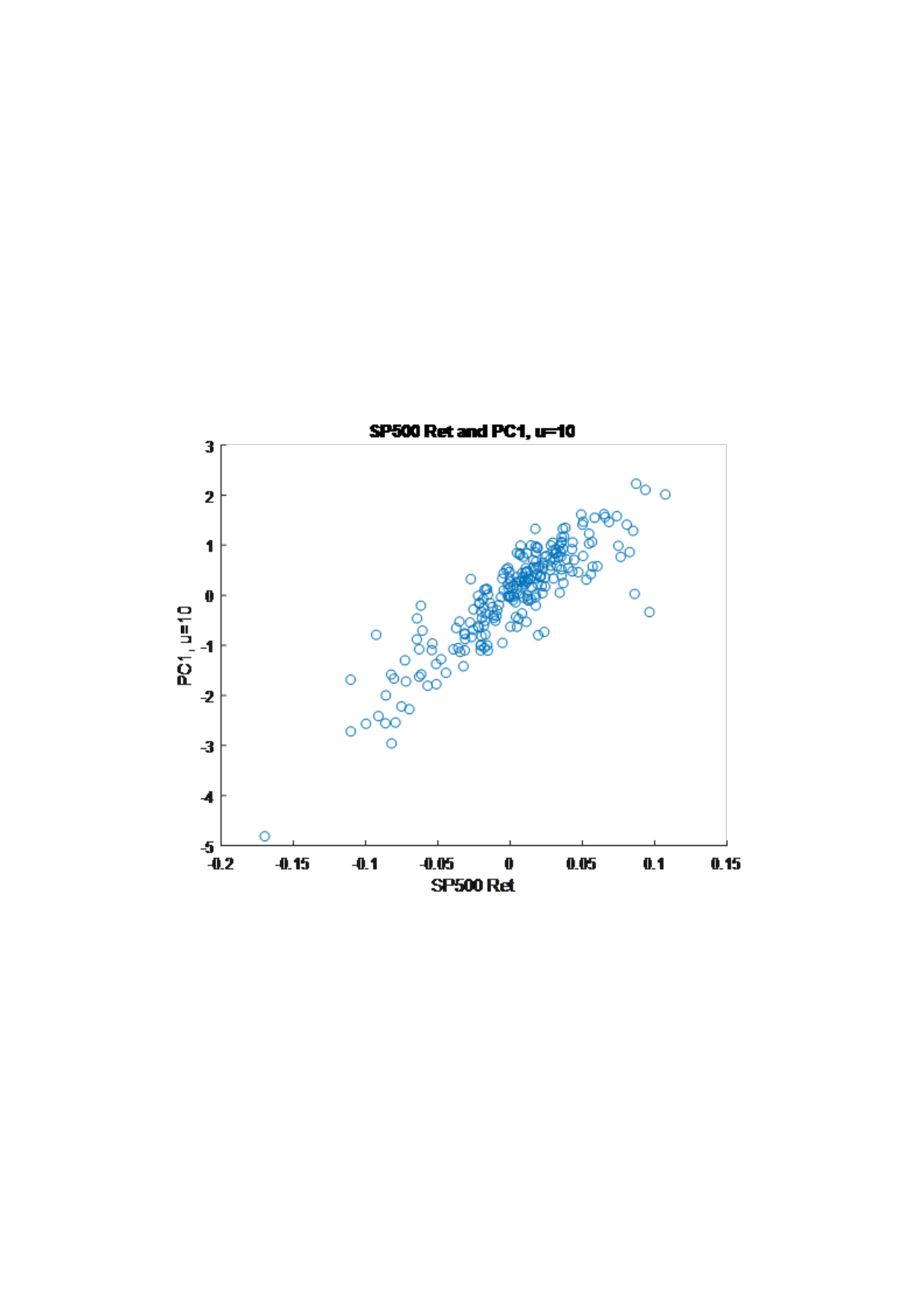}\includegraphics[width=5.5cm,height=4.5cm]{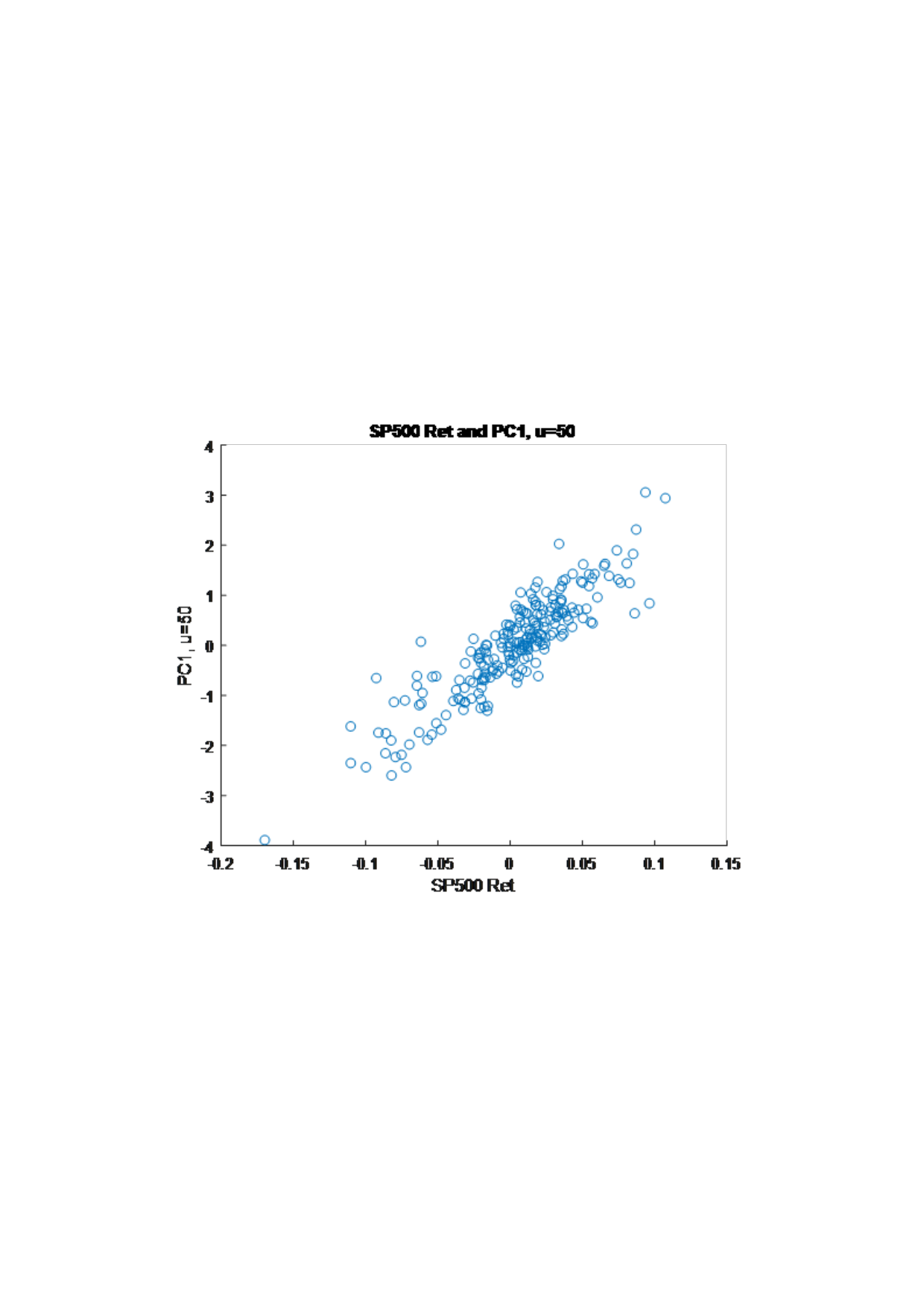}\includegraphics[width=4cm,height=4.5cm]{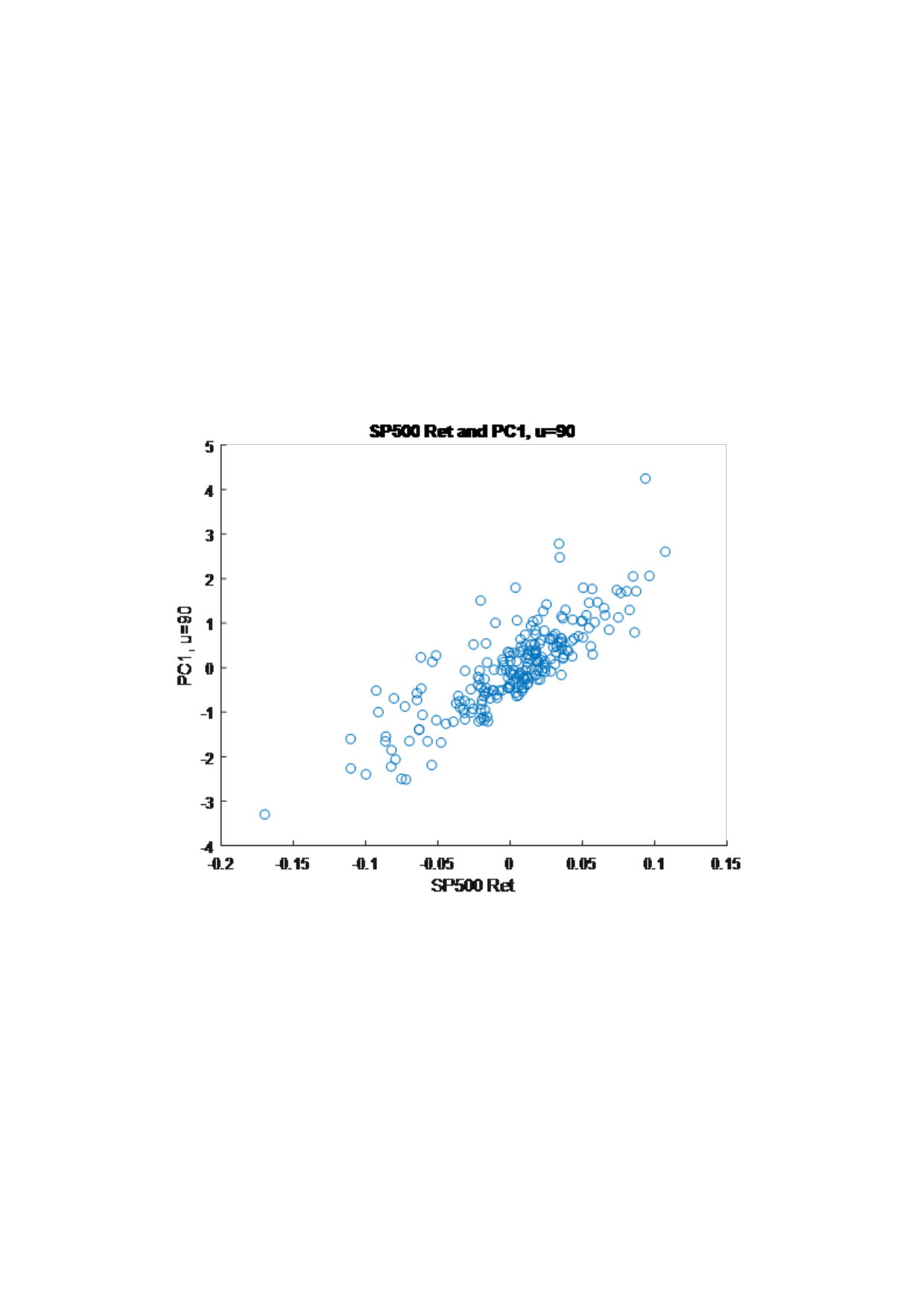}
\end{figure}

\section*{Acknowledgements}
We would like to thank the Editors, the Associate Editors and two anonymous referees for their detailed reviews, which helped to improve the paper substantially. We are also grateful to Victor Chernozhukov, Iv\'{a}n Fern\'{a}ndez-Val, Bryan Graham, Hiroaki Kaido, Anna Mikusheva, Whitney Newey, Eric Renault,  Jeremy Smith, and Vasilis Syrgkanis for helpful discussions.


\bibliographystyle{imsart-nameyear} 
\bibliography{QuantileFactor}       

\newpage



 
\renewcommand{\thepage}{S\arabic{page}}  
\renewcommand{\thesection}{S\arabic{section}}   
\renewcommand{\thetable}{S\arabic{table}}   
\renewcommand{\thefigure}{S\arabic{figure}}
\renewcommand{\thetheorem}{S\arabic{theorem}}
\renewcommand{\thelemma}{S\arabic{lemma}}
\renewcommand{\thedefinition}{S\arabic{definition}}
\renewcommand{\theequation}{S\arabic{equation}}  

\setcounter{section}{0}
\setcounter{figure}{0}
\setcounter{table}{0}
\setcounter{equation}{0}
\setcounter{page}{1}

\title{Supplementary Material for \\``High Dimensional Latent Panel Quantile Regression with an Application to Asset Pricing''}
 
 \begin{center}
     Alexandre Belloni, Mingli Chen, Oscar Hernan Madrid Padilla, Zixuan (Kevin) Wang
\end{center}

\vspace{0.5cm}

\section{Implementation Details of the Proposed ADMM Algorithm} \label{sec:admm}

Denoting  by  $P_{+}(\cdot)$ and $P_{-}(\cdot)$ the element-wise  positive and negative part operators,  the ADMM  proceeds  doing the iterative  updates


\begin{align}
	\label{eqn:iter1}
	& V^{(k+1)}\,\leftarrow\,P_{+}\left(W^{(k)}-U_{V}^{(k)}-\frac{\tau}{nT\eta}\boldsymbol{1}\boldsymbol{1}^{\prime}\right)+P_{-}\left(W^{(k)}-U_{V}^{(k)}-\frac{\tau}{nT\eta}\boldsymbol{1}\boldsymbol{1}^{\prime}\right)\\
	\label{eqn:iter2}
	& \tilde{\theta}^{(k+1)}\,\leftarrow\,  \underset{\theta}{\arg\min} \left\{ \frac{\eta}{2}\sum_{i=1}^{n}\sum_{t=1}^{T}\left(W_{i,t}^{(k)}-Y_{i,t}+X_{i,t}^{\prime}\theta+(Z_{\Pi}^{(k)})_{i,t}+(U_{W}^{(k)})_{i,t}\right)^{2}+\frac{\eta}{2}\|Z_{\theta}^{(k)}-\theta+U_{\theta}^{(k)}\|^{2}\right\} \\
	\label{eqn:iter3}
	& \tilde{\Pi}^{(k+1)}\,\leftarrow\,\underset{  \tilde{ \Pi} }{\arg\min}\left\{ \frac{1}{2}\|Z_{\Pi}^{(k)}-\tilde{\Pi}+U_{\Pi}^{(k)}\|_{F}^{2}+    \frac{ \nu_{2}}{\eta} \|\tilde{\Pi}\|_{*}\right\} \\
 & Z_{\theta}^{(k+1)}\,\leftarrow\, \underset{Z_{\theta}  }{\arg\min} \left\{ \frac{1}{2}\|\tilde{\theta}^{(k+1)}-U_{\theta}^{(k)}-Z_{\theta}\|^{2}+\frac{ \nu_{1}}{\eta}\sum_{j=1}^{p}w_{j}\vert (Z_{\theta})_j \vert\right\} \\
	& (Z_{\Pi}^{(k+1)},W^{(k+1)})\,\leftarrow\,  \underset{Z_{\Pi},W}{\arg\min} \Bigg\{\frac{\eta}{2}\|W-Y+X\tilde{\theta}^{(k+1)}+Z_{\Pi}+U_{W}^{(k)}\|_{F}^{2}+\frac{\eta}{2}\|V^{(k+1)}-W+U_{V}^{(k)}\|_{F}^{2} \\
	\label{eqn:iter6} &\,\,\,\,\,\,\,\,\,\,\,\,\,\,\,\,\,\,\,\,\,\,\,\,\,\,\,\,\,\,\,\,\,\,\,\,\,\,\,\,\,\,\,\,\,\,\,\,\,\,\,\,\,\,\,\,+\frac{\eta}{2}\|Z_{\Pi}-\tilde{\Pi}^{(k+1)}+U_{\Pi}^{(k)}\|_{F}^{2}\Bigg\}
\end{align}

\begin{align*}
	U_{ V }^{(k+1)} \,\leftarrow\,V^{(k+1)}- W^{(k+1)}+ U_V^{(k)},\,\,\,\,\,U_W^{(k+1)}   \,\leftarrow\,W^{(k+1)}- Y+ X\tilde{\theta}^{(k+1)} + Z_{\Pi}^{(k+1)} +U_W^{(k)},\,\,\,\,\,\\
	U_{\Pi}^{(k+1)} \,\leftarrow\,Z_{\Pi}^{(k+1)} - \tilde{ \Pi}^{(k+1)} + U_{\Pi }^{(k)},\,\,\,\,U_{\theta}^{(k+1)} \,\leftarrow\, Z_{\theta}^{(k+1)}
	- \tilde{\theta}^{(k+1)}  +U_{\theta}^{(k)},
\end{align*}
where  $\eta>0$ is the penalty,  see \cite{boyd2011distributed}.

The update  for $\tilde{\theta}$ is
\[
\displaystyle  \tilde{\theta}^{(k+1)}  \,\leftarrow \, \left[ \sum_{i=1}^n\sum_{t=1}^T     X_{i,t} X_{i,t}^{\prime}   +I_p  \right]^{-1} \left[  -  \sum_{ i= 1}^n \sum_{t=1}^T X_{i,t}A_{i,t}  +Z_{\theta}^{(k)}   + U_{\theta}^{(k)}   \right],
\]
where
\[
A  :=   W^{(k)}+ Z_{\Pi}^{(k)}    + U_W^{(k)}   - Y.
\]
The update for   $\tilde{\Pi}$  is
\[
\tilde{\Pi}^{(k+1)} \,\leftarrow\,  P \text{ diag}\left(  \max\left\{ 0,  v_j -  \frac{ \nu_2}{\eta}  \right\}_{  1\leq j \leq l} \right) Q^{\prime},
\]
where
\[
Z_{\Pi}^{(k)} +  U_{  \Pi }^{(k)}  \,=\,   P  \text{ diag}( \{ v_j \}_{1\leq j \leq l}  ) Q^{\prime}.
\]
Furthermore,  for  $Z_{\theta}$,
\[
Z_{\theta,j}^{(k+1)} \,\leftarrow \,   \text{sign}(\tilde{\theta}_j^{(k+1)} -   U_{\theta,j}^{(k)} ) \left[  \vert  \tilde{\theta}_j^{(k+1)}  - U_{\theta,j}^{(k)} \vert    -  \frac{ \nu_{1}  w_j }{\eta} \right].
\]
Finally, defining
\[
\tilde{A} \,=\,  - Y  +  X\tilde{\theta}^{(k+1)}  + U_W^{(k)}, \,\,\,\, \tilde{B} \,=\,-V^{(k+1)}-U_V^{(k)},\,\,\,\, \tilde{C} \,=\, -\tilde{ \Pi}^{(k+1)} + U_{\Pi}^{(k)},
\]
the remaining updates are
\[
Z_{\Pi}^{(k+1)} \,\leftarrow\,   \frac{    -\tilde{A}    -  2\tilde{C}    + \tilde{B}   }{3},
\]
and
\[
W^{(k+1)}  \,\leftarrow\,   -\tilde{A} -\tilde{C} - 2Z_{\Pi}^{(k+1)}.
\]

\subsection{Estimation without Covariates}
\label{sec:no_covariates}
Note, when there are no covariates, our proposed ADMM can be simplified. In this case, we face the following problem
\begin{equation}
	\label{eqn:unsupervised}
	\displaystyle  \underset{ \tilde{\Pi} \in \mathbb{R}^{ n \times T }  }{\min  }\,\,\left\{  \frac{1}{n T} \sum_{i=1}^{n} \sum_{j=1}^{T}  \rho_{\tau}( Y_{i,t}  - \tilde{ \Pi}_{i,t} )     +   \nu_2 \|\tilde{ \Pi}\|_*  \right\}.
\end{equation}
This can be thought  as a convex relaxation  of the estimator  studied in \cite{chen2018quantile}. Problem (\ref{eqn:unsupervised}) is also related to the setting of  robust estimation of a latent low-rank matrix, e.g. \cite{elsener2018robust}. However, our approach can also be used to estimate different  quantile levels. As for solving  (\ref{eqn:unsupervised}),  we can proceed by doing the iterative updates
\begin{equation}
	\label{eqn:update1}
	\displaystyle \tilde{\Pi}^{(k+1)} \,\leftarrow\,   \underset{ \tilde{ \Pi}  }{\arg \min}\,\left\{  \frac{1}{nT}\sum_{i=1}^n\sum_{t=1}^{T}  \rho_{\tau}( Y_{i,t}  - \tilde{ \Pi}_{i,t} )   \,+\,\frac{\eta}{2}\|  \tilde{ \Pi} - Z_{\Pi}^{(k)}  +  U_{ \Pi  }^{(k)} \|_F^2  \right\},	
\end{equation}
\begin{equation}
	\label{eqn:update2}
	Z_{\Pi}^{(k+1)}   \,\leftarrow  \,   \underset{ Z_{\Pi}  }{\arg \min}\,\left\{ \frac{\eta}{2} \| \tilde{\Pi}^{(k+1)}  - Z_{\Pi} + U_{\Pi}^{(k)} \|_F^2 +  \nu_2\|Z_{\Pi}\|_*   \right\},
\end{equation}
and
\begin{equation}
	\label{eqn:update3}
	U_{\Pi}^{(k+1)}   \,\leftarrow\, \Pi^{(k+1)}  - Z_{\Pi}^{(k+1)}  + U_{\Pi}^{(k)},
\end{equation}
where  $\eta> 0 $ is  the penalty parameter (\cite{boyd2011distributed}). The minimization in (\ref{eqn:update1})  is  similar to (\ref{eqn:update_v}), whereas (\ref{eqn:update2}) can be done similarly as in (\ref{eqn:update_pi}).



\subsection{Estimation in unbalanced designs}\label{sec:admm2}

We  now explore the setting of unbalanced designs, specifically, instead of fully observing $(X,Y)$, we now assume that we only observe $\{ (X_{i,t},Y_{i,t}) \}_{(i,t) \in \mathcal{I}}$ for a set $\mathcal{I}\subset  [n] \times [T]$. For instance, if $n=2$ and $T=3$ but for $i=1$ the data  at time $t=2$ is not available, then $\mathcal{I}$ would be $\{ (1,1),(1,3),(2,1),(2,2),(2,3)\}$.

For this setting, inspired by (\ref{eqn:formulation2}), we formulate the problem 
\begin{equation}
	\label{eqn:formulation4}
	\underset{ \tilde{\theta}\in \mathbb{R}^{p } ,\,\  \tilde{\Pi}\in \mathbb{R}^{n \times T}  }{\min} \left\{\frac{1}{ \vert \mathcal{I}\vert  } \sum_{(i,t)\in \mathcal{I} } \rho_{\tau}(Y_{i,t}   - X_{i,t}^{\prime}\tilde{\theta}  -  \tilde{\Pi}_{i,t} ) +  \nu_{1}\sum_{j=1}^{p}  w_j  \vert   \tilde{\theta}_j  \vert      + \nu_2  \|\tilde{\Pi}\|_*\right\}  
\end{equation}
where  $ \nu_{1} >0$  and  $\nu_2  >0$ are tuning parameters, and  $w_1,\ldots,w_p$ are user  specified weights. Thus, comparing with (\ref{eqn:formulation2}), we now only apply the loss function to the indices for which there is data available. Following (\ref{eqn:formulation3}), we write (\ref{eqn:formulation4}) as

\begin{equation}
	\label{eqn:formulation6}
	\begin{array}{lll}
		\underset{  \stackrel{\tilde{\theta} , \tilde{\Pi},V }{Z_{\theta},Z_{\Pi },W }   }{\min} &  \displaystyle \frac{1}{ \vert \mathcal{I}\vert  } \sum_{(i,t)\in \mathcal{I} }\rho_{\tau}(V_{i,t}) \,+\,   \nu_{1}   \sum_{j=1}^{p}  w_j \vert  Z_{\theta_j} \vert \,+\,  \nu_2 \| \tilde{\Pi} \|_*   \\
		\text{subject to} &    V =  W,\,\,\,\,\,  W= Y-  X\tilde{\theta} - Z_{\Pi },\\
		& \,\,\,\,\, Z_{\Pi }   - \tilde{\Pi} =0,\,\,\,\,\, Z_{\theta}  -\tilde{\theta}   =0.  \\
	\end{array}
\end{equation}
Then, as in Section \ref{sec:admm}, we obtain the ADMM updates given by 
\[
V_{i,t}^{(k+1)}\,\leftarrow\,P_{+}\left(W_{i,t}^{(k)}-(U_{V})_{i,t}^{(k)}-\frac{\tau}{\vert \mathcal{I}\vert  \eta}\right)+P_{-}\left(W_{i,t}^{(k)}-(U_{V})_{i,t}^{(k)}-\frac{\tau}{  \vert \mathcal{I}\vert   \eta}\right)\\
\label{eqn:iter4}
\]
for $(i,t) \in \mathcal{I}$,  and 
\[
V_{i,t}^{(k+1)}\,\leftarrow\,  W_{i,t} -  (U_{V})_{i,t}^{(k)}
\]
for $(i,t) \notin \mathcal{I}$, and with the rest of updates given exactly as in  Section \ref{sec:admm}.

 	Notice that our formulation is similar in spirit to \cite{athey2018matrix}. In particular, the objective function (\ref{eqn:formulation4})   resembles Equation (4.3) in  \cite{athey2018matrix}, with the main difference that we allow for covariates and work with the quantile loss.
 	
\section{Proof of Theorem \ref{thm:1} }

\subsection{Auxiliary lemmas  for  proof of Theorem \ref{thm:1}  }
\label{sec:proofs}


Throughout, we use the notation
\[
Q_{\tau}(\tilde{\theta},\tilde{\Pi}) \,=\,  \mathbb{E}(\hat{Q}_{\tau}(\tilde{\theta},\tilde{\Pi})  ).
\]
Moreover, as in \cite{yu1994rates},  we  define the  sequence $\{(\tilde{Y}_{i,t}, \tilde{X}_{i,t}  ) \}_{i \in [n], t \in [T]}$
such  that  
\begin{itemize}
	\item   $\{(\widetilde{Y}_{i,t}, \widetilde{X}_{i,t}  ) \}_{i \in [n], t \in [T]}$  is independent  of   $\{(Y_{i,t},X_{i,t}  ) \}_{i \in [n], t \in [T]}$; 
	\item  for a fixed  $t$ the random vectors  $\{(\widetilde{Y}_{i,t}, \widetilde{X}_{i,t}  ) \}_{i \in [n]}$  are independent;
	\item  for a fixed $i$:
	\[
	\mathcal{L}(  \{ (\widetilde{Y}_{i,t},\widetilde{X}_{i,t}) \}_{ t \in H_l   } ) \,=\,  \mathcal{L}(  \{ (Y_{i,t},X_{i,t}) \}_{ t \in H_l   } ) \,=\, \mathcal{L}(  \{ (Y_{i,t},X_{i,t}) \}_{ t \in H_1  } ) \,\,\,\,\forall  l \,\,\in [d_T],
	\]
	and  the blocks $ \{ (\widetilde{Y}_{i,t},\widetilde{X}_{i,t}) \}_{ t \in H_1   },\ldots,  \{ (\widetilde{Y}_{i,t},\widetilde{X}_{i,t}) \}_{ t \in H_{d_T}   }$ are independent.
\end{itemize}

Here,  we  define  $ \Lambda  \,:=\,   \{ H_1,H_1^{\prime},\ldots,H_{d_T},H_{d_T}^{\prime},R  \}$ with
\begin{equation}
	\label{eqn:intervals}
	\begin{array}{lll}
		H_j &=&  \left\{   t \,:\,    1+ 2(j-1)c_T \leq t \leq (2j-1)c_T \right\},\,\,\,	 \\
		H_j^{\prime} &=&  \left\{   t \,:\,    1+ (2j-1)c_T \leq t \leq 2j c_T \right\}, \,\,j = 1,\ldots, d_T,\\
		&&  \,\,\,\,\text{and}\,\,\, R \,=\,\{ t \,:\,    2c_T d_T +1\,\leq t \leq T  \}.
	\end{array}
\end{equation}
We also  use  the symbol  $\mathcal{L}(\cdot)$ to denote the distribution  of a sequence of random variables.

Next, define the scores $a_{i,t} \,=\,  \tau -  1\{ Y_{i,t}  \leq  X_{i,t}^{\prime} \theta(\tau) + \Pi_{i,t}(\tau)  \} $, and   $\tilde{a}_{i,t} \,=\,  \tau -  1\{ \widetilde{Y}_{i,t}  \leq  \widetilde{X}_{i,t}^{\prime} \theta(\tau) + \Pi_{i,t}(\tau)  \} $.


We start by controlling an empirical process involving  the scores $\{a_{i,t}\}$. This is given next.

\begin{lemma}
	\label{lem:penalty_tuning}
	Under Assumptions \ref{cond1}--\ref{cond3}, we have
	\[
	\mathbb{P}\left(   \underset{j = 1,\ldots, p}{\max }   \,\, \frac{1}{n T} \left\vert   \sum_{i=1}^n \sum_{t=1}^T   \frac{ X_{i,t,j}  a_{i,t}  }{ \hat{\sigma}_j  }    \right\vert  \,\geq \, 9 \sqrt{       \frac{  c_T  \log( \max\{n,p c_T\} )  }{n d_T}  }      \right) \,\leq \,    \frac{16}{ n } + 8npT\left( \frac{1}{c_T} \right)^{\mu} .
	\]
\end{lemma}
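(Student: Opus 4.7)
}

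The plan is to decompose the randomness into three parts: (i) the random normalizer $\hat\sigma_j$, (ii) the temporal dependence, and (iii) the remaining concentration across the $n\,d_T$ (effectively) independent block sums, and then put these together via a union bound over $j\in[p]$.

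First, I would use Assumption \ref{cond2} to replace $\hat\sigma_j$ by a constant. On the event $E_0:=\{\max_j|\hat\sigma_j^2-1|\le 1/4\}$, which has probability at least $1-\gamma$, we have $\hat\sigma_j\ge\sqrt{3/4}$ for all $j$, so it suffices to bound $\max_j \,|nT|^{-1}\bigl|\sum_{i,t}X_{i,t,j}a_{i,t}\bigr|$ up to a constant factor. The event $E_0^c$ contributes to the $16/n$ term.

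Next, I would handle the temporal $\beta$-mixing via Yu's (1994) blocking and coupling. Recall the intervals $\Lambda=\{H_1,H_1',\ldots,H_{d_T},H_{d_T}',R\}$ defined in \eqref{eqn:intervals}, each $H_l,H_l'$ of length $c_T$. Split
\[
\sum_{t=1}^T X_{i,t,j}a_{i,t} \;=\; \sum_{l=1}^{d_T}\sum_{t\in H_l}X_{i,t,j}a_{i,t}\;+\;\sum_{l=1}^{d_T}\sum_{t\in H_l'}X_{i,t,j}a_{i,t}\;+\;\sum_{t\in R}X_{i,t,j}a_{i,t}.
\]
By the coupling construction described at the start of Section \ref{sec:proofs}, there exist independent-across-$l$ versions $(\tilde Y_{i,t},\tilde X_{i,t})_{t\in H_l}$ (and similarly for $H_l'$) whose joint law on each block coincides with the original, with total variation error at most $\sum_{l=1}^{d_T}\gamma_i(c_T)$. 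Applying this per subject $i$ and union bounding over $i\in[n]$ and the two block families, the overall coupling cost is bounded by $2n\,d_T\,\sup_i\gamma_i(c_T)=O(nT c_T^{-\mu-1})$, which is absorbed into the $8npT\,c_T^{-\mu}$ term (the factor $p$ being a loose absorption that also covers the residual block $R$ by the analogous argument, since $|R|<2c_T$).

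Now I would argue concentration on the decoupled version. Write $\tilde a_{i,t}=\tau-\mathbf{1}\{\tilde Y_{i,t}\le \tilde X_{i,t}'\theta(\tau)+\Pi_{i,t}(\tau)\}$, and set $B_{i,l,j}:=\sum_{t\in H_l}\tilde X_{i,t,j}\tilde a_{i,t}$. By the definition of the conditional $\tau$-quantile, $\mathbb{E}[\tilde a_{i,t}\mid \tilde X_{i,t},\tilde X_{i,s},\tilde a_{i,s},s<t]=0$, so $\{\tilde X_{i,t,j}\tilde a_{i,t}\}_{t\in H_l}$ is a martingale-difference sequence with $|\tilde a_{i,t}|\le 1$. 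Hence each $B_{i,l,j}$ has mean zero and variance $\mathbb{E}[B_{i,l,j}^2]\le \sum_{t\in H_l}\mathbb{E}[\tilde X_{i,t,j}^2]=c_T$. Because the $B_{i,l,j}$ are independent across $(i,l)$ (after coupling), the total variance is at most $n\,d_T\,c_T$. Applying Bernstein's inequality (truncating $\tilde X$ at a slowly growing level if needed, with the truncation error trivially controlled by Assumption \ref{cond2}) at the threshold $s=9\,nT\sqrt{c_T\log(\max\{n,pc_T\})/(nd_T)}$ gives a tail $\le 2(\max\{n,pc_T\})^{-c}$ for a large constant $c$; repeating this for the $H_l'$ family and the residual $R$, then union bounding over $j\in[p]$, gives a total tail much smaller than $1/n$. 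Combining with $\mathbb{P}(E_0^c)$ and the coupling cost produces the stated $16/n+8npT\,c_T^{-\mu}$ bound.

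The main obstacle is the bookkeeping in the coupling step: verifying that the Yu coupling can be applied simultaneously to all $n$ subjects and both block families in a way that gives the comparatively generous bound $8npT/c_T^{\mu}$ in the statement, while still leaving the Bernstein-applied concentration error negligible. A secondary obstacle is the need to control the potentially unbounded entries $X_{i,t,j}$ inside the Bernstein bound; I would address this by a truncation argument at a level that is polylogarithmic in $n$, whose truncation bias is controlled through $\mathbb{E}[X_{i,t,j}^2]=1$ and Assumption \ref{cond2}, so the effective sub-exponential parameter is of order $c_T\mathrm{polylog}(n)$ and does not degrade the final threshold.
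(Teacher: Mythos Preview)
Your high-level plan (block the time axis via \eqref{eqn:intervals}, couple to independent blocks using Yu's lemmas, then concentrate and union bound over $j$) matches the paper's. The execution, however, diverges in two places where the paper's argument is cleaner, and in both places your version has a genuine gap.

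First, the martingale-difference claim $\mathbb{E}[\tilde a_{i,t}\mid \tilde X_{i,t},\tilde X_{i,s},\tilde a_{i,s},\,s<t]=0$ is not supplied by the model. Assumption~\ref{cond1} only gives the conditional-quantile identity $\mathbb{E}[a_{i,t}\mid X_{i,t}]=0$; it says nothing about conditioning on the whole past within a block. So your variance bound $\mathbb{E}[B_{i,l,j}^2]\le c_T$ is unjustified. Second, your truncation step for Bernstein does not go through with only second moments. With merely $\mathbb{E}[X_{i,t,j}^2]=1$ (and Assumption~\ref{cond2}, which is about empirical second moments, not tails), Markov's inequality gives $\mathbb{P}(|X_{i,t,j}|>M)\le M^{-2}$, so truncating all $nT p$ entries at a polylogarithmic $M$ leaves a non-negligible probability of failure; you would need $M$ polynomial in $n,T$, which wrecks the sub-exponential parameter.

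The paper sidesteps both issues at once by (i) conditioning on $X$ throughout, and (ii) further decomposing each block by the position $m\in[c_T]$ within the block. Concretely, write the block average as $\frac{1}{c_T}\sum_{m=1}^{c_T}(\cdot)$, swap the order of summation, and bound the average over $m$ by the maximum over $m$. For fixed $m$, the terms $X_{i,\,2lc_T+m,\,j}\,\tilde a_{i,\,2lc_T+m}/\hat\sigma_j$ are, after coupling, fully independent across $(i,l)$; conditional on $X$ they are bounded by $|X_{i,\,2lc_T+m,\,j}|/\hat\sigma_j$, and the sum of these squares is at most $3c_T\,n d_T$ by the identity $\hat\sigma_j^2=(nT)^{-1}\sum_{i,t}X_{i,t,j}^2$. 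Plain Hoeffding then applies with no truncation and no martingale assumption, and a union bound over $m\in[c_T]$ (which is why $p c_T$ appears inside the logarithm) finishes the argument. If you want to repair your route rather than switch to the paper's, the minimal fix is to condition on $X$ and use Hoeffding with ranges $\sum_{t\in H_l}|X_{i,t,j}|$; the sum of squared ranges is then $\le c_T\, nT\hat\sigma_j^2$ by Cauchy--Schwarz, giving the same rate without either of the two problematic steps.
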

\begin{proof}
	Notice  that
	\begin{equation}
		\label{eqn:s0}
		\begin{array}{l}
			\displaystyle     \mathbb{P}\left(    \underset{j = 1,\ldots, p}{\max }   \,\, \frac{1}{n T} \left\vert   \sum_{i=1}^n \sum_{t=1}^T   \frac{ X_{i,t,j}  a_{i,t}  }{ \hat{\sigma}_j  }    \right\vert  \,\geq \, \eta    |  X\right) \\
			\leq      \displaystyle  2p  \underset{j = 1,\ldots, p}{\max }\,\mathbb{P}\left(   \, \frac{1}{n d_T}\left\vert   \sum_{i=1}^n \sum_{l=1}^{d_T}   \left(  \frac{1}{c_T}  \sum_{ t = 2(l-1)+1  }^{ 2(l-1) + c_T }  \frac{ X_{i,t,j}  a_{it}  }{ \hat{\sigma}_j  }    \right)    \right\vert  \,\geq \, \frac{\eta}{9}    |  X       \right) \,+\,\\
			\displaystyle  p \underset{j = 1,\ldots, p}{\max } \, \mathbb{P}\left(   \,    \, \frac{1}{n d_T}\left\vert   \sum_{i=1}^n   \frac{1}{c_T} \left(\sum_{t \in R} \frac{ X_{i,t,j}   }{\hat{\sigma}_j  }   \right)         \right\vert  \,\geq \, \frac{\eta}{9}    |  X       \right)\\
			\leq      \displaystyle  4p  \,\underset{j = 1,\ldots, p}{\max }\,\mathbb{P}\left(   \, \underset{m = 1,\ldots, c_T}{\max }\, \frac{1}{n d_T}\left\vert   \sum_{i=1}^n \sum_{l=0}^{d_T-1}        \frac{ X_{i,\, (2lc_T+ m  ) ,\,j}  \tilde{a}_{i,\, (2lc_T+ m  )}  }{ \hat{\sigma}_j  }     \right\vert  \,\geq \, \frac{\eta}{9}    |  X       \right) \,+\,\\
			\displaystyle  2p  \underset{j = 1,\ldots, p}{\max } \,  \mathbb{P}\left(  \, \underset{m = 1,\ldots, \vert R\vert  }{\max }    \, \frac{1}{n d_T}\left\vert   \sum_{i=1}^n       \frac{ X_{i, \,(    2d_T c_T  + m  ) \, j}  \tilde{a}_{i,\, (2d_T c_T+ m  )}  }{ \hat{\sigma}_j  }      \right\vert  \,\geq \, \frac{\eta}{9}    |  X       \right)  \,+\,   8npT\left( \frac{1}{c_T} \right)^{\mu}  \\
		\end{array}
	\end{equation}
	where the first inequality  follows from union bound, and the second by Lemmas  4.1 and  4.2  from \cite{yu1994rates}.  Hence,
		\begin{equation}
		\label{eqn:s1}
		\begin{array}{l}
			\displaystyle     \mathbb{P}\left(    \underset{j = 1,\ldots, p}{\max }   \,\, \frac{1}{n T} \left\vert   \sum_{i=1}^n \sum_{t=1}^T   \frac{ X_{i,t,j}  a_{i,t}  }{ \hat{\sigma}_j  }    \right\vert  \,\geq \, \eta    |  X\right) \\
						\leq 	      \displaystyle  4p c_T \,\underset{j \in [p], m \in [c_T]}{\max }\,\mathbb{P}\left(   \,  \frac{1}{n d_T}\left\vert   \sum_{i=1}^n \sum_{l=0}^{d_T-1}        \frac{ X_{i,\, (2lc_T+ m  ), \,j}  \tilde{a}_{i,\, (2lc_T+ m  )}  }{ \hat{\sigma}_j  }     \right\vert  \,\geq \, \frac{\eta}{9}    |  X       \right) \,+\,\\
			\displaystyle  2p c_T \,\underset{j \in [p], m \in [ \vert R \vert  ]}{\max }\,\mathbb{P} \left(     \, \frac{1}{n d_T}\left\vert   \sum_{i=1}^n       \frac{ X_{i, \,(    2d_T c_T  + m  ), \, j}  \tilde{a}_{i,\, (2d_Tc_T+ m  )}  }{ \hat{\sigma}_j  }      \right\vert  \,\geq \, \frac{\eta}{9}    |  X       \right)  \,+\,   8npT\left( \frac{1}{c_T} \right)^{\mu}  \\
		\end{array}
	\end{equation}
	
	Therefore, since
	\[
	\displaystyle   \frac{1}{n d_T }  \sum_{i=1}^n \sum_{l=0}^{d_T-1}        X_{i,\, (2lc_T+ m  )\,, j}^2\,\leq \,   3 c_T  \hat{\sigma}_j^2,
	\]
	and with a similar argument  for the second  term in the last inequality of (\ref{eqn:s1}), we obtain the result  by Hoeffding's inequality and integrating over $X$.
\end{proof}

Next we proceed to control the complexity of the set $\{   \Delta  \in \mathbb{R}^{ n \times T }  \,:\,    \| \Delta \|_*  \leq 1\,\,\, \}$ in terms of the scores $\{a_{i,t}\}$.

\begin{lemma}
	\label{lem:latent_error}
	Supposes that Assumptions \ref{cond1}--\ref{cond3} hold, and  let
	\begin{equation}
		\label{eqn:set_g}
		\begin{array}{l}
			\mathcal{G}   \,=\, \Bigg\{   \Delta  \in \mathbb{R}^{ n \times T }  \,:\,    \| \Delta \|_*  \leq 1\,\,\, \Bigg\}.    	
		\end{array}
	\end{equation}
	Then there  exists positive constants $c_1$ and $c_2$ such that
	\[
	\begin{array}{lll}
		\displaystyle 	 \underset{ \Delta \in \mathcal{G}  }{\sup}\,\,  \frac{1}{n T}   \left\vert  \sum_{i=1}^n \sum_{t=1}^T   \Delta_{i,t} a_{i,t}  \right\vert &\leq & \displaystyle  \frac{100 c_T }{n T } \left( \sqrt{n}  + \sqrt{d_T}   \right) ,\\
	\end{array}
	\]
	with probability  at least
	\[
	1 \,-\, 2 n T  \left( \frac{1}{c_T} \right)^{\mu} \,-\, 2c_1\exp(-c_2 \max\{n,T\}    +  \log c_T ),
	\]
	for some positive constants $c_1$ and $c_2$.
\end{lemma}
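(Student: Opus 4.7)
\textbf{Proof plan for Lemma \ref{lem:latent_error}.}
The first move is to exploit nuclear--spectral duality: since $\mathcal{G}$ is the unit nuclear norm ball,
\[
\sup_{\Delta \in \mathcal{G}} \frac{1}{nT}\left| \sum_{i,t} \Delta_{i,t} a_{i,t} \right| = \frac{1}{nT} \,\|A\|_2, \qquad A \in \mathbb{R}^{n\times T}, \;\; A_{i,t}=a_{i,t}.
\]
Thus the lemma reduces to showing $\|A\|_2 \le 100\, c_T(\sqrt{n}+\sqrt{d_T})$ with the stated probability. Note that $\mathbb{E}(a_{i,t}\mid X_{i,t}) = 0$ (by the definition of the conditional $\tau$-quantile in (\ref{eqn:first_model})) and $|a_{i,t}|\le 1$ almost surely, so $A$ is a matrix of centered, uniformly bounded entries, dependent only across $t$.

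Next I would break the time dimension using the blocks $\Lambda = \{H_1,H_1',\ldots,H_{d_T},H_{d_T}',R\}$ introduced in (\ref{eqn:intervals}). Writing $A = A^{\text{even}} + A^{\text{odd}} + A^{R}$, where $A^{\text{even}}$, $A^{\text{odd}}$ and $A^{R}$ retain the columns indexed by $\cup_j H_j$, $\cup_j H_j'$ and $R$, respectively, the triangle inequality gives
\[
\|A\|_2 \;\le\; \|A^{\text{even}}\|_2 + \|A^{\text{odd}}\|_2 + \|A^{R}\|_2.
\]
For each of the first two terms, I further decompose along the $c_T$ within-block positions: write $A^{\text{even}} = \sum_{m=1}^{c_T} A^{\text{even},m}$, where $A^{\text{even},m}$ is the matrix supported on the columns $\{2(l-1)c_T+m : l=1,\ldots,d_T\}$. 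Each $A^{\text{even},m}$ has only $d_T$ non-zero columns, one per even block; and analogously for $A^{\text{odd},m}$. By sub-additivity of the spectral norm, it suffices to control the $2c_T$ matrices $A^{\text{even},m}, A^{\text{odd},m}$ separately, plus the remainder on $R$ (which is negligible: $|R|\le 2c_T$, so $\|A^{R}\|_2\le 2c_T$ deterministically from the entrywise bound $|a_{i,t}|\le 1$ and a crude Frobenius bound).

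Now I apply the Yu (1994) block-coupling argument (as was done in Lemma \ref{lem:penalty_tuning}): for each fixed $m$, couple $A^{\text{even},m}$ with the corresponding matrix $\widetilde{A}^{\text{even},m}$ built from the independent sequence $\{\tilde{a}_{i,t}\}$, at a total variation cost bounded by $n T(1/c_T)^{\mu}$, and likewise for $A^{\text{odd},m}$. After coupling, the columns of $\widetilde{A}^{\text{even},m}$ are independent across the $d_T$ block-positions and, by Assumption \ref{cond1}, its entries are independent across $i$ as well. Each entry is centered and bounded by $1$. Hence a standard concentration inequality for the operator norm of a random matrix with independent, centered, bounded entries (e.g.\ a matrix Bernstein/Rudelson-type argument, or the sub-Gaussian bound in Theorem 5.39 of Vershynin/Theorem 4.4.5 of Vershynin) yields
\[
\|\widetilde{A}^{\text{even},m}\|_2 \;\le\; C_0\bigl(\sqrt{n}+\sqrt{d_T}\bigr)
\]
with probability at least $1-c_1\exp(-c_2\max\{n,d_T\})$ for some absolute constants $C_0,c_1,c_2$.

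Taking a union bound over the $2c_T$ within-block positions $m$ and over the two parities (even/odd), and adding the deterministic remainder term $\|A^{R}\|_2\le 2c_T$, yields
\[
\|A\|_2 \;\le\; 2c_T \cdot C_0(\sqrt{n}+\sqrt{d_T}) + 2c_T \;\le\; 100\, c_T(\sqrt{n}+\sqrt{d_T})
\]
by choosing constants appropriately, on an event of probability at least $1 - 2nT(1/c_T)^{\mu} - 2c_1\exp(-c_2\max\{n,d_T\}+\log c_T)$, which matches the stated bound after dividing by $nT$.

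\textbf{Main obstacle.} The principal difficulty is controlling the supremum of a random linear functional indexed by the nuclear-norm ball in the presence of $\beta$-mixing dependence; the trick is the duality reduction to the spectral norm, after which the challenge becomes producing a sharp operator-norm tail bound for the coupled independent blocks. Care must be taken with the combinatorics of the $c_T$ within-block positions, since naively summing $c_T$ spectral-norm bounds is what produces the (unavoidable in this scheme) $c_T$ factor in the final rate. Keeping track of absolute constants to hit the explicit $100$ is routine but tedious; the substantive content is matching the $\sqrt{n}+\sqrt{d_T}$ scaling via a matrix concentration inequality on each slice.
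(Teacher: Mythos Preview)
Your proposal is correct and follows essentially the same route as the paper: reduce via nuclear--spectral duality, split the time index into the blocks $H_l,H_l',R$, couple to independent blocks via Yu (1994), and bound the spectral norm of each $n\times d_T$ slice by $C(\sqrt{n}+\sqrt{d_T})$ using a matrix concentration inequality (the paper cites Theorem~3.4 of \cite{chatterjee2015matrix} rather than Vershynin, but the content is the same). One small slip: your deterministic bound $\|A^{R}\|_2\le 2c_T$ is not what the Frobenius bound gives; with $|R|\le 2c_T$ and $|a_{i,t}|\le 1$ you get $\|A^{R}\|_2\le\|A^{R}\|_F\le\sqrt{2n c_T}$, which is still $\le c_T\sqrt{n}$ for $c_T\ge 2$ and hence harmless for the final bound.
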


\begin{proof}
	
	Notice that by Lemma  4.3 from \cite{yu1994rates},
	
	\begin{equation}
		\label{eqn:s02}
		\begin{array}{l}
			\displaystyle  \mathbb{P}\left(  \underset{  \Delta \in \mathcal{G}}{\sup } \left[ \frac{1}{n T   } \left\vert   \sum_{i=1}^n \sum_{l=1}^{d_T}   \sum_{t \in H_l}  \Delta_{i ,t}   a_{i,t}             \right\vert   \,+\,   \frac{1}{n T   } \left\vert   \sum_{i=1}^n \sum_{l=1}^{d_T}  \sum_{t \in H_l^{\prime}  }\Delta_{i, t }   a_{i,t}             \right\vert  \,+\,   \frac{1}{n T   }\left\vert   \sum_{i=1}^n \sum_{t \in R  }\Delta_{i, t }   a_{i,t}             \right\vert \right]   \geq \eta   \right) \\
			\displaystyle  \, \leq \, \mathbb{P}\left(  \underset{  \Delta \in \mathcal{G}}{\sup }\, \frac{1}{n T   } \left\vert   \sum_{i=1}^n \sum_{l=1}^{d_T}   \sum_{t \in H_l}  \Delta_{i, t}   \tilde{a}_{i,t}             \right\vert   \geq   \frac{\eta}{3} \right) \,+\,\mathbb{P}\left(  \underset{  \Delta \in \mathcal{G}}{\sup }  \, \frac{1}{n T   } \left\vert   \sum_{i=1}^n \sum_{l=1}^{d_T}   \sum_{t \in H_l^{\prime}  }  \Delta_{i, t^{\prime}   }   \tilde{a}_{i,t}             \right\vert   \geq   \frac{\eta}{3} \right) \,+\, \\
			\displaystyle 	\mathbb{P}\left(     \underset{  \Delta \in \mathcal{G}}{\sup }   \,\frac{1}{n T   } \left\vert   \sum_{i=1}^n   \sum_{t \in R}  \Delta_{i ,t}   \tilde{a}_{i,t}             \right\vert   \geq   \frac{\eta}{3} \right)  \,+\, 2 n T  \left( \frac{1}{c_T} \right)^{\mu}\\
		\end{array}
	\end{equation}
And so
	\begin{equation}
	\label{eqn:s2}
	\begin{array}{l}
		\displaystyle  \mathbb{P}\left(  \underset{  \Delta \in \mathcal{G}}{\sup } \left[ \frac{1}{n T   } \left\vert   \sum_{i=1}^n \sum_{l=1}^{d_T}   \sum_{t \in H_l}  \Delta_{i ,t}   a_{i,t}             \right\vert   \,+\,   \frac{1}{n T   } \left\vert   \sum_{i=1}^n \sum_{l=1}^{d_T}  \sum_{t \in H_l^{\prime}  }\Delta_{i, t }   a_{i,t}             \right\vert  \,+\,   \frac{1}{n T   }\left\vert   \sum_{i=1}^n \sum_{t \in R  }\Delta_{i, t }   a_{i,t}             \right\vert \right]   \geq \eta   \right) \\
		\leq  \displaystyle 2 c_T  \underset{m \in  [c_T]  }{\max }  \mathbb{P}\left(    \underset{  \Delta \in \mathcal{G}}{\sup }\, \frac{1}{n d_T   } \left\vert   \sum_{ i= 1}^n \sum_{l=0   }^{d_T-1} \Delta_{i ,\, ( 2c_T l + m  ) }  \tilde{a}_{i, \, ( 2c_T l + m  )} \right \vert \,\geq \,\frac{\eta}{9}      \right) \,+\,\\
		\displaystyle  \,\,  c_T  \underset{m \in  [\vert R \vert ]  }{\max }  \mathbb{P}\left(    \underset{  \Delta \in \mathcal{G}}{\sup }\, \frac{1}{n d_T   } \left\vert   \sum_{ i= 1}^n  \Delta_{i, \, ( 2c_T d_T + m  ) }  \tilde{a}_{i, \, ( 2c_T d_T + m  )} \right \vert \,\geq \,\frac{\eta}{9}      \right)\,+\, 2 n T  \left( \frac{1}{c_T} \right)^{\mu}.\\
	\end{array}
\end{equation}

	We now proceed to bound each of the  terms in the upper bound of (\ref{eqn:s2}). For the first term, notice  that for a fixed $m$
	\begin{equation}
		\label{eqn:s3}
		\begin{array}{lll}
			\displaystyle 	 \underset{  \Delta \in \mathcal{G}}{\sup } \,\,  \frac{1}{n d_T   } \left\vert   \sum_{ i= 1}^n \sum_{l=0   }^{d_T-1} \Delta_{i, \, ( 2c_T l + m  ) }  \tilde{a}_{i, \, ( 2c_T l + m  )}              \right\vert    & \leq&  \displaystyle \underset{  \Delta \in \mathcal{G}}{\sup }\,\, \frac{1}{n  d_T }    \left\|  \left\{   \tilde{a}_{i ,\, ( 2c_T l + m  )}    \right\}_{ i \in  [n],   l \in [d_T-1]    }   \right \|_2  \left\| \Delta  \right\|_*\\
			& \leq &  \displaystyle\frac{3 }{n d_T } \left( \sqrt{n}  + \sqrt{d_T}   \right),
		\end{array}
	\end{equation}
	where the first inequality  holds by the duality between the  nuclear norm and spectral norm, and the second inequality  happens  with probability  at least  $1-c_1 \exp\left( - c_2  \max\{ n,d_T \} \right)$  by Theorem  3.4 from \cite{chatterjee2015matrix}.

	On the other hand,
	\begin{equation}
		\label{eqn:s5}
		\begin{array}{lll}
			\displaystyle 	\underset{  \Delta \in \mathcal{G}}{\sup } \,  \frac{1}{n d_T   } \left\vert   \sum_{ i= 1}^n  \Delta_{i, \, ( 2c_T d_T + m  ) }  \tilde{a}_{i ,\, ( 2c_T d_T + m  )} \right \vert
			& \leq & \displaystyle    \underset{  \Delta \in \mathcal{G}}{\sup } \,\frac{    \sqrt{n}  \| \{ \Delta_{i ,\, ( 2c_T d_T + m  ) }  \}_{i\in [n]} \|   }{n  d_T }   \\
			& \leq & \displaystyle\frac{ \sqrt{n}  \|  \Delta  \|_*   }{n  d_T  }.  \\
		\end{array}
	\end{equation}
	
	The claim follows by combining (\ref{eqn:s2}), (\ref{eqn:s3}),  and (\ref{eqn:s5}), taking  $\eta\,=\,30(  \sqrt{n }+ \sqrt{d_T} )/\sqrt{n  d_T }$, and the fact that $c_T/T  \leq  1/3$.

\end{proof}

We will proceed to exploit Lemmas \ref{lem:penalty_tuning} and \ref{lem:latent_error}  to show that  $(\hat{\beta}(\tau) -\beta(\tau),   \hat{\Pi}(\tau) -\Pi(\tau) )$  belongs to the restricted set with probability approaching one. Before that, we recall an important property relating the nuclear norm to the rank of a matrix.

\begin{lemma}
	\label{lem:nuclear_norm}
	For  every  $\tilde{\Pi}, \check{\Pi}  \in  \mathbb{R}^{n \times  T}$, we have that
	\[
	\| \check{\Pi} - \tilde{\Pi}\|_*   +    \| \check{\Pi} \|_*  - \| \tilde{\Pi} \|_*   \,\leq \,  (6\sqrt{ \mathrm{rank}(\check{\Pi})  } +1)\| \check{\Pi}  - \tilde{\Pi}  \|_F
	\]
\end{lemma}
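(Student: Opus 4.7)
The plan is to invoke the standard decomposability of the nuclear norm with respect to the tangent space at $\check{\Pi}$. First I would write the thin SVD $\check{\Pi} = U\Sigma V^\top$ with $U \in \mathbb{R}^{n\times r}$ and $V \in \mathbb{R}^{T\times r}$, where $r = \mathrm{rank}(\check{\Pi})$, and define the orthogonal projections onto the tangent space and its complement,
\[
\mathcal{P}_T(M) = UU^\top M + M VV^\top - UU^\top M VV^\top, \qquad \mathcal{P}_T^\perp(M) = (I - UU^\top) M (I - VV^\top).
\]
These satisfy $M = \mathcal{P}_T(M) + \mathcal{P}_T^\perp(M)$, $\|\mathcal{P}_T(M)\|_F \leq \|M\|_F$, and $\mathrm{rank}(\mathcal{P}_T(M)) \leq 2r$.

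Next, set $\Delta = \tilde{\Pi} - \check{\Pi}$. The key structural fact is that the column space of $\check{\Pi}$ is orthogonal to that of $\mathcal{P}_T^\perp(\Delta)$, and similarly for the row spaces; consequently the nuclear norm decomposes as
\[
\|\check{\Pi} + \mathcal{P}_T^\perp(\Delta)\|_* = \|\check{\Pi}\|_* + \|\mathcal{P}_T^\perp(\Delta)\|_*.
\]
Combining this with the triangle inequality applied to $\tilde{\Pi} = \bigl(\check{\Pi} + \mathcal{P}_T^\perp(\Delta)\bigr) + \mathcal{P}_T(\Delta)$ gives
\[
\|\tilde{\Pi}\|_* \;\geq\; \|\check{\Pi}\|_* + \|\mathcal{P}_T^\perp(\Delta)\|_* - \|\mathcal{P}_T(\Delta)\|_*,
\]
which rearranges to $\|\check{\Pi}\|_* - \|\tilde{\Pi}\|_* \leq \|\mathcal{P}_T(\Delta)\|_* - \|\mathcal{P}_T^\perp(\Delta)\|_*$.

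Now I add this to the trivial bound $\|\check{\Pi} - \tilde{\Pi}\|_* = \|\Delta\|_* \leq \|\mathcal{P}_T(\Delta)\|_* + \|\mathcal{P}_T^\perp(\Delta)\|_*$, so that the $\mathcal{P}_T^\perp(\Delta)$ terms cancel and I obtain
\[
\|\check{\Pi} - \tilde{\Pi}\|_* + \|\check{\Pi}\|_* - \|\tilde{\Pi}\|_* \;\leq\; 2\,\|\mathcal{P}_T(\Delta)\|_*.
\]
Finally, since $\mathcal{P}_T(\Delta)$ has rank at most $2r$, the Cauchy–Schwarz-type inequality $\|A\|_* \leq \sqrt{\mathrm{rank}(A)}\,\|A\|_F$ together with the contraction $\|\mathcal{P}_T(\Delta)\|_F \leq \|\Delta\|_F$ yields $\|\mathcal{P}_T(\Delta)\|_* \leq \sqrt{2r}\,\|\Delta\|_F$, hence the left-hand side is at most $2\sqrt{2r}\,\|\check{\Pi} - \tilde{\Pi}\|_F \leq 6\sqrt{r}\,\|\check{\Pi} - \tilde{\Pi}\|_F$, as claimed.

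There is no real obstacle here — the argument is standard nuclear-norm decomposability; the only thing to be careful about is correctly verifying the orthogonality of row/column spaces that makes $\|\check{\Pi} + \mathcal{P}_T^\perp(\Delta)\|_* = \|\check{\Pi}\|_* + \|\mathcal{P}_T^\perp(\Delta)\|_*$, which follows because one can stack the SVDs of the two pieces. The constant $6$ in the statement is loose; the argument actually gives $2\sqrt{2}$, so there is slack to spare.
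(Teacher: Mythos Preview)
Your argument is correct and is precisely the standard nuclear-norm decomposability proof; the paper itself does not give an argument but simply cites Lemma~2.3 of \cite{elsener2018robust}, whose content is exactly what you have written out. Your observation that the constant can be sharpened to $2\sqrt{2}$ is also right.
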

\begin{proof}
	This follows directly  from  Lemma  2.3 in \cite{elsener2018robust}.
\end{proof}

\begin{lemma}
	\label{lem1}Assume  that \ref{cond1}--\ref{cond3} hold. Then, with probability approaching one,
	\begin{equation}
		\label{lem1:eq1}
		\frac{3}{4}\|\theta   \|_1 \,\leq \,  \|\theta\|_{1,n,T}  \,\leq \, \frac{5}{4}\|\theta   \|_1,
	\end{equation}
	for  all $\theta \in \mathbb{R}^p$.

	Moreover,  for  $c_0  \in  (0,1)$ letting
	$$
	\nu_{1} \,=\,\frac{9}{1-c_0} \sqrt{       \frac{   c_T \log( \max\{n,pc_T\} )  }{   n d_T}  }  ( \sqrt{n}  +  \sqrt{d_T} )   ,
	$$
	and 
	$$\nu_2 \,=\,  \frac{200 c_T}{n T}\left( \sqrt{n}  +  \sqrt{d_T}  \right),$$
	we have that
	\[
	( \hat{\theta}(\tau) - \theta(\tau),  \hat{\Pi}(\tau) -\Pi(\tau)  ) \in A_{\tau},
	\]
	with probability approaching one, where
	\[
	\begin{array}{l}
		A_{\tau}  = \Bigg\{ (\delta, \Delta ) \,:\,    \|\delta_{ T_{\tau}^c }\|_1  + \frac{ \| \Delta \|_* }{\sqrt{nT}  \sqrt{ \log(  \max\{n,p c_T\} )} }  \leq    C_0 \left(  \|\delta_{T_{\tau}}\|_1  +  \frac{ \sqrt{r_{\tau}}\|   \Delta\|_F }{\sqrt{nT}  \sqrt{ \log(  \max\{n,p c_T\} )} } \right)   \Bigg\},\\
	\end{array}
	\]
	and $C_0$ is a positive constant that depends  on $\tau$ and $c_0$.
\end{lemma}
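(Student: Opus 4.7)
The proof splits into the two claims. The norm equivalence in \eqref{lem1:eq1} is immediate from Assumption \ref{cond2}: on the event $\max_j |\hat\sigma_j^2 - 1| \leq 1/4$, each $\hat\sigma_j \in [\sqrt{3}/2,\sqrt{5}/2] \subset [3/4,5/4]$, so $\|\theta\|_{1,n,T} = \sum_j \hat\sigma_j|\theta_j|$ is sandwiched between $(3/4)\|\theta\|_1$ and $(5/4)\|\theta\|_1$ uniformly in $\theta$.

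For the restricted set containment, my plan is a joint basic inequality argument on the sparse and low-rank parts. Writing $\delta = \hat\theta(\tau)-\theta(\tau)$ and $\Delta = \hat\Pi(\tau)-\Pi(\tau)$, optimality of $(\hat\theta(\tau),\hat\Pi(\tau))$ for \eqref{eqn:prob_1} yields
\[
\hat Q_\tau(\hat\theta,\hat\Pi) - \hat Q_\tau(\theta,\Pi) \,\leq\, \nu_1\bigl(\|\theta\|_{1,n,T} - \|\hat\theta\|_{1,n,T}\bigr) + \nu_2\bigl(\|\Pi\|_* - \|\hat\Pi\|_*\bigr),
\]
while convexity of the check loss, through the subgradient scores $a_{i,t} = \tau - 1\{Y_{i,t} \leq X_{i,t}^\prime\theta(\tau) + \Pi_{i,t}(\tau)\}$, gives the lower bound
\[
\hat Q_\tau(\hat\theta,\hat\Pi) - \hat Q_\tau(\theta,\Pi) \,\geq\, -\frac{1}{nT}\sum_{i,t} a_{i,t}\bigl(X_{i,t}^\prime\delta + \Delta_{i,t}\bigr).
\]
The two noise inner products on the right are then controlled by the lemmas already at hand: Lemma \ref{lem:penalty_tuning} yields $(nT)^{-1}|\sum_{i,t} X_{i,t}^\prime \delta\, a_{i,t}| \leq S_1\|\delta\|_{1,n,T}$ with $S_1 = 9\sqrt{c_T\log(\max\{n,pc_T\})/(nd_T)}$, and Lemma \ref{lem:latent_error}, rescaled by homogeneity away from $\{\|\Delta\|_*\leq 1\}$, gives $(nT)^{-1}|\sum_{i,t}\Delta_{i,t}a_{i,t}| \leq S_2\|\Delta\|_*$ with $S_2 = 100 c_T(\sqrt{n}+\sqrt{d_T})/(nT)$.

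On the penalty side, I would apply decomposability of the weighted $\ell_1$ norm, $\|\hat\theta\|_{1,n,T}-\|\theta\|_{1,n,T} \geq \|\delta_{T_\tau^c}\|_{1,n,T}-\|\delta_{T_\tau}\|_{1,n,T}$, and Lemma \ref{lem:nuclear_norm} with $\check\Pi = \Pi(\tau)$, $\tilde\Pi = \hat\Pi(\tau)$ to get $\|\hat\Pi\|_* - \|\Pi\|_* \geq \|\Delta\|_* - 6\sqrt{r_\tau}\|\Delta\|_F$. Chaining these with the basic inequality above produces
\[
(\nu_1 - S_1)\|\delta_{T_\tau^c}\|_{1,n,T} + (\nu_2 - S_2)\|\Delta\|_* \,\leq\, (\nu_1 + S_1)\|\delta_{T_\tau}\|_{1,n,T} + 6\nu_2\sqrt{r_\tau}\|\Delta\|_F.
\]
The prescribed tunings are calibrated so that $\nu_1 \geq S_1/(1-c_0)$ (in fact much larger, because of the extra $(\sqrt{n}+\sqrt{d_T})$ in $\nu_1$) and $\nu_2 = 2S_2$, whence the coefficients on the left are proportional to $\nu_1$ and $\nu_2$ respectively while $\nu_1 + S_1 \leq \nu_1(1+o(1))/(1-c_0)$. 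Dividing through by $\nu_1$, converting $\|\cdot\|_{1,n,T}$ into $\|\cdot\|_1$ via Part 1, and exploiting that the $(\sqrt{n}+\sqrt{d_T})$ factors cancel in the ratio $\nu_2/\nu_1 \asymp 1/\sqrt{nT\log(\max\{n,pc_T\})}$ (using $c_T d_T \asymp T$), one recovers the defining inequality of $A_\tau$ with a universal constant $C_0$ depending only on $c_0$ and $\tau$.

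The main obstacle, in my view, is this last bookkeeping step: verifying that the ratio $\nu_2/\nu_1$ under the exact lemma constants matches precisely the normalization $\bigl(\sqrt{nT}\sqrt{\log(\max\{n,pc_T\})}\bigr)^{-1}$ appearing in $A_\tau$, so that the dense-part and sparse-part contributions enter $A_\tau$ with matching scalings. This is what forces the $(\sqrt{n}+\sqrt{d_T})$ factor into the choice of $\nu_1$, and what makes $C_0$ independent of $n,T,p,s_\tau,r_\tau$. The remaining steps are robust applications of the cited Lemmas \ref{lem:penalty_tuning}, \ref{lem:latent_error}, and \ref{lem:nuclear_norm} together with the basic inequality.
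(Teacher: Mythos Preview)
Your proposal is correct and follows essentially the same route as the paper: the paper also combines the basic inequality from optimality with the subgradient bound via the scores $a_{i,t}$, controls the two noise terms using Lemmas \ref{lem:penalty_tuning} and \ref{lem:latent_error}, handles the nuclear-norm penalty via Lemma \ref{lem:nuclear_norm}, and then normalizes by $\nu_1$ to land in $A_\tau$ using $c_Td_T\asymp T$. Your identification of the bookkeeping with the $(\sqrt{n}+\sqrt{d_T})$ factor and the ratio $\nu_2/\nu_1$ is exactly the point the paper resolves in its final display.
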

\begin{proof}
We notice that 
	 \begin{equation}
	 	\label{eqn:e1}
	 		\def\arraystretch{2.3}
	 	\begin{array}{lll}
	 		0 & \leq &  \displaystyle  \hat{Q}(\theta(\tau),\Pi(\tau))  \,-\,  \hat{Q}(\hat{\theta}(\tau),\hat{\Pi}(\tau))   \,+\, \nu_{1}\left(\|\theta(\tau)\|_{1,n}  - \|\hat{\theta}(\tau)\|_{1,n,T}    \right) \,+\, \nu_2 ( \|  \Pi(\tau)  \|_*   -  \| \hat{\Pi}(\tau) \|_*  )     \\
	 		& \leq &  \displaystyle \underset{1 \leq j \leq p}{\max}\left \vert  \frac{1}{nT}  \sum_{i=1}^{n} \sum_{t=1}^T  \frac{X_{i,t,j}  a_{i,t} }{\hat{\sigma}_j }    \right\vert \left[ \sum_{k=1}^{p}\hat{\sigma}_k\vert  \theta_k(\tau) - \hat{\theta}_k(\tau)  \vert   \right] \,+\, \nu_{1}\left(\|\theta(\tau)\|_{1,n,T}  - \|\hat{\theta}(\tau)\|_{1,n,T}    \right)  \,+\,\\
	 		& & \displaystyle \left\vert   \frac{1}{nT}   \sum_{i=1}^{n}\sum_{t=1}^{T}  a_{i,t} (\Pi_{i,t}(\tau)- \hat{\Pi}_{i,t}(\tau))  \right\vert   \,+\, \nu_2 ( \|  \Pi(\tau)  \|_*   -  \| \hat{\Pi}(\tau) \|_*  )    \\
	 		& \leq &\displaystyle  9 \sqrt{       \frac{   c_T \log( \max\{n,p c_T\} )  }{n d_T}  } \left[ \sum_{k=1}^{p}\hat{\sigma}_k\vert  \theta_k(\tau) - \hat{\theta}_k(\tau)  \vert   \right]\,+\,   \nu_{1} \left(\|\theta(\tau)\|_{1,n,T}  - \|\hat{\theta}(\tau)\|_{1,n,T}    \right) \\
	 		& &\displaystyle  \,+\,    \|  \Pi(\tau)  -  \hat{\Pi}(\tau) \|_{*} \left(  \underset{  \| \tilde{\Delta} \|_* \leq 1  }{\sup}\, \left\vert   \frac{1}{nT}   \sum_{i=1}^{n}\sum_{t=1}^{T}  a_{i,t}  \tilde{\Delta}_{i,t} \right\vert        \right)  +  \nu_2 ( \|  \Pi(\tau)  \|_*   -  \| \hat{\Pi}(\tau) \|_*  )      \\
	 		& \leq &\displaystyle  9 \sqrt{       \frac{   c_T \log( \max\{n,p c_T\} )  }{n d_T}  } \left[ \sum_{k=1}^{p}\hat{\sigma}_k\vert  \theta_k(\tau) - \hat{\theta}_k(\tau)  \vert   \right]\,+\,   \nu_{1} \left(\|\theta(\tau)\|_{1,n,T}  - \|\hat{\theta}(\tau)\|_{1,n,T}    \right) \\
	 		& &\displaystyle  \,+\, \frac{200 c_T }{n T } \left( \sqrt{n}  + \sqrt{d_T}   \right)\|  \Pi(\tau)  -  \hat{\Pi}(\tau)  \|_*  + \frac{200 c_T }{n T } \left( \sqrt{n}  + \sqrt{d_T}   \right)\left( \| \Pi(\tau)  \|_*    -   \| \hat{\Pi}(\tau)  \|_*  \right),\\
	 		&  & \displaystyle \,-\, \,\, \frac{100 c_T }{n T } \left( \sqrt{n}  + \sqrt{d_T}   \right)\|  \Pi(\tau) -  \hat{\Pi}(\tau)  \|_*\\
	 		& \leq& \displaystyle  9 \sqrt{       \frac{   c_T \log( \max\{n,p c_T\} )  }{n d_T}  } \left[ \sum_{k=1}^{p}\hat{\sigma}_k\vert  \theta_k(\tau) - \hat{\theta}_k(\tau)  \vert   \right]\,+\,   \nu_{1} \left(\|\theta(\tau)\|_{1,n,T}  - \|\hat{\theta}(\tau)\|_{1,n,T}    \right) \\
	 		& &\displaystyle  \,+\, \frac{1200 c_T  \sqrt{r_{\tau}}  + 200c_T}{n T } \left( \sqrt{n}  + \sqrt{d_T}   \right)\|  \Pi(\tau)  -  \hat{\Pi}(\tau)  \|_F\\
	 		&  & \displaystyle \,-\,  \frac{100 c_T }{n T } \left( \sqrt{n}  + \sqrt{d_T}   \right)\|  \Pi(\tau) -  \hat{\Pi}(\tau)  \|_*\\
	 	\end{array}
	 \end{equation} 
	with probability at least
	\[
	1 -  \gamma - \frac{16}{ n } - 8npT\left( \frac{1}{c_T} \right)^{\mu}- 2 n T  \left( \frac{1}{c_T} \right)^{\mu} -2c_1\exp(-c_2 \max\{n,T\} + \log c_T ),
	\]
	where  in  (\ref{eqn:e1}) the first inequality follows from optimality of the estimator,  the second from basic properties of the function $\rho_{\tau}$,  the third from the  duality between the spectral  and  nuclear norms and Lemma \ref{lem:penalty_tuning}, the fourth by Lemma \ref{lem:latent_error}, and  the fifth by Lemma \ref{lem:nuclear_norm}.
	
	
	Therefore, with probability  approaching one, for positive constants $C_1$ and $C_2$, we have
	\[
	\begin{array}{lll}
		0 & \leq &\displaystyle \left[ \sum_{j=1}^{p}\left(  (1-c_0) \hat{\sigma}_j \vert  \hat{\theta}_j(\tau) - \theta_j(\tau)    \vert     + \hat{\sigma}_j \vert   \theta_j(\tau)    \vert  -  \hat{\sigma}_j \vert  \hat{\theta}_j(\tau)    \vert   \right)   \right]\\
		& & +  \left[\frac{3 C_1  \sqrt{r_{\tau}} \|\Pi(\tau)- \hat{\Pi}(\tau)\|_F }{\sqrt{nT}  \sqrt{ \log(  \max\{n,pc_T \} )} } - \frac{C_2 \|\Pi(\tau)- \hat{\Pi}(\tau)\|_* }{\sqrt{nT}  \sqrt{ \log(  \max\{n,pc_T \} )} } \right]  , \\
	\end{array}
	\]	
	and the claim follows.
\end{proof}

Our next result shows how the function $	Q_{\tau}$ changes locally in the restricted set around $\theta(\tau), \Pi(\tau)$.

\begin{lemma}
	\label{lem:lowerbound}
	Under Assumption \ref{cond3}, for  all $(\delta,\Delta) \in A_{\tau}$, we have that
	\[
	Q_{\tau}(\theta(\tau) +  \delta,  \Pi(\tau) + \Delta)        -   Q_{\tau}(\theta(\tau), \Pi(\tau))  \,\geq \,   \min\left\{\frac{ \left( J_{\tau}^{1/2}(\delta,\Delta)  \right)^2  }{4} ,   q  J_{\tau}^{1/2}(\delta,\Delta)    \right \}.
	\]
\end{lemma}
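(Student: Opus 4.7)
The strategy is the classical three-step argument from \cite{belloni2011l1} combining Knight's identity, a second-order expansion of the conditional distribution, and a rescaling trick based on convexity of $\rho_\tau$. Write $v_{i,t} := X_{i,t}'\delta + \Delta_{i,t}$ and denote by $F_{i,t}$ the conditional distribution function of $Y_{i,t}$ given $X_{i,t}$. The first step is to apply Knight's identity, $\rho_\tau(u-v)-\rho_\tau(u) = -v(\tau-1\{u\le 0\}) + \int_0^v(1\{u\le s\}-1\{u\le 0\})\,ds$, to $u = Y_{i,t}-X_{i,t}'\theta(\tau)-\Pi_{i,t}(\tau)$ and $v=v_{i,t}$. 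Taking expectations kills the linear score term because $\mathbb{P}(u\le 0\mid X_{i,t})=\tau$ at the true quantile, leaving
\[
Q_\tau(\theta(\tau)+\delta,\Pi(\tau)+\Delta)-Q_\tau(\theta(\tau),\Pi(\tau)) \;=\; \frac{1}{nT}\sum_{i,t}\mathbb{E}\int_0^{v_{i,t}}\!\bigl(F_{i,t}(x_{i,t}+s)-F_{i,t}(x_{i,t})\bigr)\,ds,
\]
where $x_{i,t}=X_{i,t}'\theta(\tau)+\Pi_{i,t}(\tau)$.

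The second step is a second-order Taylor bound: by Assumption \ref{cond1}, the conditional density satisfies $f_{i,t}(x_{i,t}+u)\ge \underline f - \bar f'|u|$, so for every real $v$,
\[
\int_0^v\bigl(F_{i,t}(x_{i,t}+s)-F_{i,t}(x_{i,t})\bigr)\,ds \;\ge\; \frac{\underline f}{2}v^2 - \frac{\bar f'}{6}|v|^3.
\]
Summing and taking expectations gives the raw lower bound $\tfrac{\underline f}{2}E_2 - \tfrac{\bar f'}{6}E_3$, with $E_2 := (nT)^{-1}\sum_{i,t}\mathbb{E}[v_{i,t}^2]$ and $E_3 := (nT)^{-1}\sum_{i,t}\mathbb{E}[|v_{i,t}|^3]$. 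Note $\underline f E_2 = J_\tau(\delta,\Delta)$.

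The third step — and the only one requiring a genuine idea — is the rescaling that prevents the cubic term from swamping the quadratic one when $v_{i,t}$ is large. By convexity of $\rho_\tau$ in its argument, for every $\alpha\in[0,1]$,
\[
Q_\tau(\theta(\tau)+\alpha\delta,\Pi(\tau)+\alpha\Delta)-Q_\tau(\theta(\tau),\Pi(\tau)) \;\le\; \alpha\bigl[Q_\tau(\theta(\tau)+\delta,\Pi(\tau)+\Delta)-Q_\tau(\theta(\tau),\Pi(\tau))\bigr].
\]
Applying the second-order bound to the rescaled $(\alpha\delta,\alpha\Delta)$ and dividing by $\alpha$ yields the family of lower bounds $\tfrac{\underline f\alpha}{2}E_2 - \tfrac{\bar f'\alpha^2}{6}E_3$, which is valid for every $\alpha\in[0,1]$. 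Optimizing, the unconstrained maximizer is $\alpha^\star = 3\underline f E_2/(2\bar f' E_3)$. If $\alpha^\star\ge 1$, taking $\alpha=1$ gives at least $\tfrac{\underline f}{4}E_2 = \tfrac{1}{4}J_\tau(\delta,\Delta)$, which matches the first branch of the min. If $\alpha^\star<1$, taking $\alpha=\alpha^\star$ gives $\tfrac{3(\underline f E_2)^2}{8\bar f' E_3}$, which I rewrite as $\tfrac{3\underline f^{3/2}}{8\bar f'}\cdot\tfrac{E_2^{3/2}}{E_3}\cdot(\underline f E_2)^{1/2}$; since $(\delta,\Delta)\in A_\tau$, Assumption \ref{cond3} bounds the first two factors below by $q$, leaving $qJ_\tau^{1/2}(\delta,\Delta)$ — the second branch of the min.

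The only nontrivial point is recognizing that the raw Taylor bound $\tfrac{\underline f}{2}E_2 - \tfrac{\bar f'}{6}E_3$ is insufficient on its own (it can be negative for large $v$), and that the convexity-based rescaling by $\alpha\in[0,1]$ is exactly what converts a local quadratic growth statement into the global min-type lower bound. Everything else is bookkeeping: Knight's identity, the uniform density bounds from Assumption \ref{cond1}, and the infimum defining $q$ in (\ref{eqn:nonlinearity}).
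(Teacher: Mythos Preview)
Your proof is correct and follows essentially the same approach as the paper: both combine Knight's identity with a second-order Taylor bound $\tfrac{\underline f}{2}E_2-\tfrac{\bar f'}{6}E_3$, and then exploit convexity of $Q_\tau$ along rays to convert the local quadratic bound into the global $\min$-type bound. The only cosmetic difference is that the paper packages the convexity step by defining the auxiliary radius $v_{A_\tau}=\sup\{v:\,Q_\tau(\theta+\tilde\delta,\Pi+\tilde\Delta)-Q_\tau(\theta,\Pi)\ge J_\tau(\tilde\delta,\tilde\Delta)/4\text{ whenever }J_\tau^{1/2}\le v\}$, shows $v_{A_\tau}\ge 4q$ from the Taylor bound, and then extends linearly beyond $v_{A_\tau}$; your explicit optimization over $\alpha\in[0,1]$ is the same argument written in dual form and is arguably cleaner.
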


\begin{proof}
	Let
	\[
	\begin{array}{l}
		v_{A_{\tau}} \,=\,   \underset{v}{\sup}     \Bigg\{    v\,:\,       Q_{\tau}(\theta(\tau) +  \tilde{\delta},  \Pi(\tau) + \tilde{\Delta})        -   Q_{\tau}(\theta(\tau), \Pi(\tau))  \geq  \frac{ \left( J_{\tau}^{1/2}(\tilde{\delta},\tilde{\Delta}) \right)^2  }{4},\,\,\forall   (\tilde{\delta},\tilde{\Delta})\in A_{\tau}, \\
		\,\,\,\,\,\,\,\,\,\,\,\,\,\,\,\,\,\,	\,\,\,\,\,\,\,\,\,\,\,\,\,\,\,\,\,\,\,\,\,\,\,\,\, J_{\tau}^{1/2}(\tilde{\delta},\tilde{\Delta})   \leq  v      \Bigg\}.	
	\end{array}
	\]
	Then by the convexity of $Q_{\tau}(\cdot)$ and the definition of $v_{A_u}$, we have that
	\[
	\begin{array}{l}
		Q_{\tau}(\theta(\tau) +  \tilde{\delta},  \Pi(\tau) + \tilde{\Delta})        -   Q_{\tau}(\theta(\tau), \Pi(\tau))  	  \\
		\geq  \, \frac{ \left( J_{\tau}^{1/2}(\delta,\Delta)  \right)^2  }{4}   \wedge \left\{ \frac{  J_{\tau}^{1/2}(\delta,\Delta)   }{v_{A_{\tau}} }  \cdot       \underset{(\tilde{\delta},\tilde{\Delta})\in A_{\tau},\, J_{\tau}^{1/2}(\tilde{\delta},\tilde{\Delta}) \geq v_{A_{\tau}}       }{\inf }   Q_{\tau}(\theta(\tau) +  \tilde{\delta},  \Pi(\tau) + \tilde{\Delta})        -   Q_{\tau}(\theta(\tau), \Pi(\tau)  )   \right\}\\
		\geq \,\, \frac{ \left( J_{\tau}^{1/2}(\delta,\Delta)  \right)^2  }{4}   \wedge  \left\{   \frac{ J_{\tau}^{1/2}(\delta,\Delta)    }{ v_{A_{\tau}} }  \frac{   v_{A_{\tau}}^2  }{4} \right\}\\
		\geq  \frac{ \left( J_{\tau}^{1/2}(\delta,\Delta)  \right)^2  }{4} \wedge  q  J_{\tau}^{1/2}(\delta,\Delta),
	\end{array}
	\]
	where  in last inequality we have used the fact that $v_{A_{\tau}} \,\geq \, 4q$. To  see why this is true, notice that  there exists  $z_{X_{it} ,z } \in [0,z]$  such that
	\begin{equation}
		\label{eqn:lower_b}
		\begin{array}{l}
			Q_{\tau}(\theta(\tau) +  \delta,  \Pi(\tau) + \Delta)        -   Q_{\tau}(\theta(\tau), \Pi(\tau))   \\
			=\,\, \displaystyle \frac{1}{nT}  \sum_{i=1}^{n}\sum_{t=1}^T  \mathbb{E}   \left(  \int_0^{X_{i,t}^{\prime} \delta  + \Delta_{it}   }  \left(F_{Y_{i,t}|X_{i,t  } ,\Pi_{i,t} } (X_{i,t}^{\prime} \theta(\tau)  + \Pi_{i,t}   +z ) - F_{Y_{i,t}|X_{i,t  }, \Pi_{i,t} }(X_{i,t}^{\prime} \theta(\tau)  + \Pi_{i,t}(\tau)    )   \right ) dz       \right) \\
			=\,\, \displaystyle \frac{1}{nT}  \sum_{i=1}^{n}\sum_{t=1}^{T}  \mathbb{E} \Bigg(     \int_0^{X_{i,t}^{\prime} \delta  + \Delta_{i,t}   }  \left(     z f_{Y_{i,t} |X_{i,t}, \Pi_{i,t}(\tau)   }( X_{i,t}^{\prime} \theta(\tau) + \Pi_{i,t}(\tau)   )   \right)  +  \\
			\,\,\frac{z^2}{2}f_{Y_{i,t} |X_{i,t},\Pi_{i,t}   }^{\prime}(    X_{i,t}^{\prime}\theta(\tau)   + \Pi_{i,t}(\tau) +z_{X_{i,t} ,z }  ) dz  \Bigg)\\
			\geq      \,\,  \displaystyle  \frac{\underline{f}}{nT}  \sum_{i=1}^{n}\sum_{t=1}^T \mathbb{E} \left(   \left(X_{i,t}^{\prime} \delta  + \Delta_{i,t}  \right)^2  \right)    \,-\,  \frac{1}{6}  \frac{\bar{f}^{\prime}  }{nT}  \sum_{i=1}^{n}\sum_{t=1}^T \mathbb{E} \left(   \left\vert X_{i,t}^{\prime} \delta  + \Delta_{i,t}  \right\vert^3  \right).
		\end{array}
	\end{equation}
	
	Hence,  if  $(\delta,\Delta) \in A_{\tau}$  with  $ J_{\tau}^{1/2}(\delta,\Delta)     \,\leq \, 4q$ then
	\[
	\sqrt{  \frac{ \underline{f}  }{nT}  \sum_{i=1}^n\sum_{t=1}^{T}   \mathbb{E}\left( \left( X_{i,t}^{\prime}\delta + \Delta_{i,t}   \right)^2      \right)       }   \,\leq \,  \frac{3}{2} \frac{  \underline{f}^{3/2}  }{\bar{f}^{\prime}  }\cdot \underset{ (\delta,\Delta) \in A_{\tau},  \delta \neq 0   }{\inf}\,\frac{  \left( \mathbb{E}   \left( \frac{1}{nT}\sum_{i=1}^{n}\sum_{t=1}^{T}  (  X_{i,t}^{\prime} \delta  + \Delta_{i,t}  )^2   \right) \right)^{3/2}   }{ \mathbb{E} \left(    \frac{1}{nT}   \sum_{i=1}^{n}\sum_{t=1}^{T}     \vert  X_{i,t}^{\prime} \delta    + \Delta_{i,t} \vert^3     \right)      }
	\]
	combined with (\ref{eqn:lower_b}) implies
	\[
	Q_{\tau}(\theta(\tau) +  \delta,  \Pi(\tau) + \Delta)        -   Q_{\tau}(\theta(\tau), \Pi(\tau)) \,\geq \, \frac{ \left( J_{\tau}^{1/2}(\delta,\Delta)  \right)^2  }{4} .
	\]
	
\end{proof}

Next we study the behavior of the $\ell_2$ and  nuclear norms in the restricted set $A_{\tau}$.

\begin{lemma}
	\label{lem:norm1}
	Under Assumption \ref{cond3},	for  all $(\delta,\Delta)  \in A_{\tau}$,  we have
	\[
	\|\delta\|_{1,n,T}\,\leq \, \frac{2(C_0+1)}{\kappa_0}\left(\sqrt{s_{\tau}+1}  +  \frac{\sqrt{r_{\tau}}}{\sqrt{\log(n \vee p c_T) }}   \right)  J_{\tau}^{1/2}(\delta,\Delta),
	\]
	and
	\[
	\| \Delta \|_{*}\,\leq \, (C_0+1)\sqrt{n T \log(\max\{ p c_T,n \}   )   } \kappa_0^{-1} \left(\sqrt{s_{\tau}+1}  +  \frac{\sqrt{r_{\tau}}}{\sqrt{\log(n \vee p c_T) }}   \right)  J_{{\tau}}^{1/2}(\delta,\Delta),
	\]
	with  $C_0$ as in Lemma \ref{lem1}.
\end{lemma}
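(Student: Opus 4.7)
\textbf{Proof plan for Lemma \ref{lem:norm1}.}

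The plan is to combine the restricted set definition in (\ref{eqn:restricted_set}) with the Cauchy--Schwarz inequality on $T_\tau$ and the identifiability condition (\ref{eqn:identifiability}) specialized to $m=0$. Write $L := \sqrt{\log(\max\{n, pc_T\})}$ for brevity, and let $J := J_\tau^{1/2}(\delta,\Delta)$. Since $(\delta,\Delta) \in A_\tau$, the defining inequality of $A_\tau$ yields both
\[
\|\delta_{T_\tau^c}\|_1 \leq C_0\|\delta_{T_\tau}\|_1 + \frac{C_0\sqrt{r_\tau}\|\Delta\|_F}{\sqrt{nT}\,L} \quad\text{and}\quad \frac{\|\Delta\|_*}{\sqrt{nT}\,L} \leq C_0\|\delta_{T_\tau}\|_1 + \frac{C_0\sqrt{r_\tau}\|\Delta\|_F}{\sqrt{nT}\,L},
\]
which are the two anchor inequalities for the two bounds to be proved. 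Next, applying Cauchy--Schwarz on $T_\tau$ gives $\|\delta_{T_\tau}\|_1 \leq \sqrt{s_\tau}\,\|\delta_{T_\tau}\|$. Finally, the identifiability condition in (\ref{eqn:identifiability}) with $m=0$ ensures that
\[
\|\delta_{T_\tau}\| \leq \frac{J}{\kappa_0} \quad \text{and} \quad \frac{\|\Delta\|_F}{\sqrt{nT}} \leq \frac{J}{\kappa_0},
\]
since the denominator $\|\delta_{T_\tau}\| + \|\Delta\|_F/\sqrt{nT}$ in $\kappa_0$ dominates each summand.

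For the first bound, write $\|\delta\|_1 = \|\delta_{T_\tau}\|_1 + \|\delta_{T_\tau^c}\|_1$ and use the first anchor inequality together with Cauchy--Schwarz to obtain
\[
\|\delta\|_1 \leq (1+C_0)\sqrt{s_\tau}\,\|\delta_{T_\tau}\| + \frac{C_0\sqrt{r_\tau}\|\Delta\|_F}{\sqrt{nT}\,L} \leq \frac{(1+C_0)}{\kappa_0}\left(\sqrt{s_\tau} + \frac{\sqrt{r_\tau}}{L}\right) J.
\]
The elementary inequality $\sqrt{s_\tau} + \sqrt{r_\tau}/L \leq 2\sqrt{s_\tau+1}\,\max\{1, \sqrt{r_\tau}/L\}$ then produces the desired shape, and Lemma \ref{lem1}, which gives $\|\delta\|_{1,n,T} \leq \frac{5}{4}\|\delta\|_1$ on an event of probability approaching one, converts the bound to the weighted $\ell_1$-norm (absorbing the $5/4$ factor into the stated constant).

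For the second bound, I would multiply the second anchor inequality through by $\sqrt{nT}\,L$ to obtain
\[
\|\Delta\|_* \leq C_0\sqrt{nT}\,L\,\sqrt{s_\tau}\,\|\delta_{T_\tau}\| + C_0\sqrt{r_\tau}\,\|\Delta\|_F,
\]
again using Cauchy--Schwarz. Substituting $\|\delta_{T_\tau}\| \leq J/\kappa_0$ and $\|\Delta\|_F \leq \sqrt{nT}\,J/\kappa_0$ from identifiability, and factoring out $\sqrt{nT}\,L/\kappa_0$, yields
\[
\|\Delta\|_* \leq \frac{C_0\sqrt{nT}\,L}{\kappa_0}\left(\sqrt{s_\tau} + \frac{\sqrt{r_\tau}}{L}\right) J,
\]
and the same elementary combination delivers the claimed form.

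The argument is essentially algebraic bookkeeping; there is no hard technical obstacle, since both the restricted-set structure (Lemma \ref{lem1}) and the identifiability modulus $\kappa_0$ (Assumption \ref{cond3}) are already available. The only point that requires care is keeping track of which norm-on-which-coordinates is being bounded and reconciling the constants so that a single prefactor $2(C_0+1)$ absorbs the $5/4$ from the weighted-to-unweighted $\ell_1$ passage and the $1+C_0$ from the restricted-set inequality.
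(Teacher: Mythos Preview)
Your proposal is correct and follows essentially the same route as the paper: split $\|\delta\|_1$ over $T_\tau$ and $T_\tau^c$, invoke the restricted-set inequality (\ref{eqn:restricted_set}) to control the off-support part and $\|\Delta\|_*$, apply Cauchy--Schwarz on $T_\tau$, and close with the identifiability bound $\kappa_0(\|\delta_{T_\tau}\|+\|\Delta\|_F/\sqrt{nT})\leq J_\tau^{1/2}(\delta,\Delta)$ from Assumption~\ref{cond3} at $m=0$. The only cosmetic difference is ordering: the paper applies the $\tfrac{5}{4}$ passage from $\|\cdot\|_{1,n,T}$ to $\|\cdot\|_1$ \emph{first} and then the $A_\tau$ inequality, so that $\tfrac{5}{4}(1+C_0)\leq 2(1+C_0)$ delivers the stated prefactor exactly; your ordering (bound $\|\delta\|_1$ first, then multiply by $\tfrac{5}{4}$) yields $\tfrac{5}{2}(1+C_0)$, which is harmless but does not match the displayed constant literally.
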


\begin{proof}
	By Cauchy-Schwartz's inequality,  the definition of $A_{\tau}$  and  $J_{{\tau}}^{1/2}(\delta,\Delta)$, and Assumption \ref{cond3},  we have
	\[
	\begin{array}{lll}
		\|\delta\|_{1,n,T} &\leq & \frac{5}{4} \left(\|\delta_{T_{\tau}} \|_{1}  \,+\, \|\delta_{T_{\tau}^c} \|_{1}\right)\\
		&\leq  &    \frac{5}{4} \|\delta_{T_{\tau}} \|_{1}\,  + \, \frac{5C_0}{4} \left(  \|\delta_{T_{\tau}}\|_1 +  \frac{  \sqrt{r_{\tau}} \|  \Delta \|_{F}}{\sqrt{ nT\log ( n \vee p c_T) } }\, \right)\\
		& \leq & 2(C_0+1)\sqrt{s_{\tau}}\|\delta_{T_{\tau}}\|_2 \,+\,2C_0\left(   \frac{\sqrt{r_{\tau}}\| \Delta  \|_{F}}{\sqrt{ nT\log(n \vee p c_T) }}  \right)\\
		& \leq & 2(C_0+1)\left( \sqrt{s_{\tau}+1}  + \frac{  \sqrt{r_{\tau}}}{\sqrt{\log(n \vee p c_T) }} \right)\left(\|\delta_{T_{\tau}}\|_2 \,+\,  \frac{ 1}{\sqrt{nT }} \| \Delta \|_{F}   \right)\\
		& \leq&  2(C_0+1)  \left( \sqrt{s_{\tau}+1}  + \frac{  \sqrt{r_{\tau}}}{\sqrt{\log(n \vee p c_T) }} \right)   \frac{J_{{\tau}}^{1/2}(\delta,\Delta)}{\kappa_0}.
	\end{array}
	\]
	On the other hand, by the triangle  inequality,  the construction of the set $A_{\tau}$, and  Cauchy-Schwartz's inequality
	\[
	\begin{array}{lll}
		\|\Delta\|_*   & \leq & C_0\sqrt{n T \log(n \vee p c_T)   } \left(   \|\delta_{T_{\tau}}\|_1  +  \frac{ \sqrt{r_{\tau}} \| \Delta \|_F}{ \sqrt{n T \log(n \vee p c_T )   } }   \right)\\
		& \leq&  (C_0+1) \sqrt{n T \log(n \vee p c_T)   } \left(   \sqrt{s_{\tau}+1}\|\delta_{T_{\tau}}\|_2  +  \frac{  \sqrt{r_{\tau}} \| \Delta\|_F}{ \sqrt{n T \log(n \vee p c_T)   } }   \right)\\
		& \leq &(C_0+1) \sqrt{n T \log( n \vee p c_T )   }    \left(\sqrt{s_{\tau}+1}   +  \frac{  \sqrt{r_{\tau}}}{\sqrt{\log(n \vee p c_T) }}\right)    \frac{J_{{\tau}}^{1/2}(\delta,\Delta)}{\kappa_0}.
	\end{array}
	\]
\end{proof}

Using all the previous lemmas our next results provides the control of the empirical process associated with our estimator. Control of this empirical process leads to the convergence rates obtained in the paper.

\begin{lemma}
	\label{lem:empiricalbound}
	Let
	\[
	\begin{array}{l}
		\epsilon(\eta) \,=\, \underset{   (\delta,\Delta) \in A_{\tau}  \,:\,     J_{\tau}^{1/2}(\delta,\Delta) \leq \eta  }{\sup} \,\,\  \Bigg \vert  \hat{Q}_{\tau}(\theta(\tau)+ \delta,\Pi(\tau) + \Delta) - \hat{Q}_{\tau}( \theta,\Pi  ) -\\
		\,\,\,\,\,\,\,	\,\,\,\,\,\,\,\,\,\,\,\,\,\,\, Q_{\tau}(\theta(\tau)+ \delta,\Pi(\tau) + \Delta) + Q_{\tau}( \theta(\tau),\Pi(\tau)  )   \Bigg\vert,
	\end{array}
	\]
	and  $\{\phi_{n}\}$  a sequence  with $\phi_n/(\sqrt{\underline{f} } \log(c_T +1)  ) \,\rightarrow \, \infty$. Then for all $\eta>0$
	\[
	\epsilon(\eta) \,\leq \, \frac{\tilde{C}_0 \eta c_T  \phi_{n} \sqrt{\log(p c_T\vee n)}(\sqrt{1+s_{\tau}  }+   \sqrt{ r_{\tau}/ \log(p c_T\vee n) }  )(\sqrt{n} +\sqrt{d_T})  }{  \sqrt{nT} \kappa_{0}   \underline{f}^{1/2} },
	\]
	for some constant $\tilde{C}_0 >0$, 	with probability at least $1-  \alpha_{n}$. Here, the sequence $ \{\alpha_{n}\}$ is independent of $\eta$, and   $\alpha_{n}  \,\rightarrow 0$.
\end{lemma}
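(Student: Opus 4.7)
\textbf{Proof proposal for Lemma \ref{lem:empiricalbound}.} The plan is to bound the centered empirical process by (i) decoupling the temporal dependence via the blocking scheme already used in Lemma \ref{lem:penalty_tuning} and Lemma \ref{lem:latent_error}, (ii) symmetrizing and contracting out the non-smooth quantile loss, and (iii) splitting the resulting Rademacher process into a ``sparse'' piece controlled by $\ell_1/\ell_\infty$ duality and a ``low-rank'' piece controlled by nuclear/spectral duality. Throughout, I would write
\[
\hat{Q}_\tau(\theta(\tau)+\delta,\Pi(\tau)+\Delta)-\hat{Q}_\tau(\theta(\tau),\Pi(\tau)) \,=\, \frac{1}{nT}\sum_{i,t}\bigl[\rho_\tau(\varepsilon_{i,t}-X_{i,t}^\prime\delta-\Delta_{i,t})-\rho_\tau(\varepsilon_{i,t})\bigr],
\]
with $\varepsilon_{i,t}=Y_{i,t}-X_{i,t}^\prime\theta(\tau)-\Pi_{i,t}(\tau)$, and subtract the expectation.

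First, I would partition $[T]$ into the alternating blocks $H_j,H_j',R$ defined in \eqref{eqn:intervals} and invoke Lemma~4.3 of \cite{yu1994rates} to couple $\{(Y_{i,t},X_{i,t})\}$ with the independent-blocks version $\{(\widetilde{Y}_{i,t},\widetilde{X}_{i,t})\}$; this costs at most $2nT c_T^{-\mu}$ in probability, which is absorbed into $\alpha_n$. On each block the data are now independent across $i$ and across blocks, so standard symmetrization (Lemma~2.3.6 in \cite{wellner2013weak}) bounds the supremum by a Rademacher process times two. Because $\rho_\tau$ is $1$-Lipschitz, the Ledoux--Talagrand contraction principle lets me replace $\rho_\tau(\varepsilon-u)-\rho_\tau(\varepsilon)$ by the linear functional $u=X_{i,t}^\prime\delta+\Delta_{i,t}$, giving a bound driven by
\[
\mathbb{E}\sup_{(\delta,\Delta)\in A_\tau,\,J_\tau^{1/2}(\delta,\Delta)\leq\eta}\,\Biggl|\frac{1}{nT}\sum_{i,t}\epsilon_{i,t}\bigl(X_{i,t}^\prime\delta+\Delta_{i,t}\bigr)\Biggr|,
\]
where $\{\epsilon_{i,t}\}$ are Rademacher.

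Next I would split the sup via the triangle inequality into a sparse and a low-rank term. For the sparse term, duality yields
\[
\Biggl|\frac{1}{nT}\sum_{i,t}\epsilon_{i,t}X_{i,t}^\prime\delta\Biggr|\,\leq\,\|\delta\|_{1,n,T}\cdot\max_{1\leq j\leq p}\Biggl|\frac{1}{nT}\sum_{i,t}\frac{\epsilon_{i,t}X_{i,t,j}}{\hat{\sigma}_j}\Biggr|,
\]
and the maximum is bounded, exactly as in Lemma \ref{lem:penalty_tuning}, by $C\sqrt{c_T\log(pc_T\vee n)/(nd_T)}$ up to a $\beta$-mixing tail. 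For the low-rank term, nuclear/spectral duality gives
\[
\Biggl|\frac{1}{nT}\sum_{i,t}\epsilon_{i,t}\Delta_{i,t}\Biggr|\,\leq\,\frac{\|\Delta\|_*}{nT}\,\bigl\|\{\epsilon_{i,t}\}\bigr\|_{2},
\]
and the spectral norm of the sign matrix on each block is $O(\sqrt{n}+\sqrt{d_T})$ by Theorem~3.4 of \cite{chatterjee2015matrix}, so the blocked version is $O(c_T(\sqrt{n}+\sqrt{d_T}))$. Then I plug in Lemma \ref{lem:norm1}, which converts $\|\delta\|_{1,n,T}$ and $\|\Delta\|_*$ into quantities bounded by $\sqrt{s_\tau+1}\max\{\sqrt{r_\tau/\log(pc_T\vee n)},1\}\,\eta/\kappa_0$; multiplying through gives the two duality bounds the common factor $\sqrt{(1+s_\tau)\max\{\log(pc_T\vee n),r_\tau\}}(\sqrt{n}+\sqrt{d_T})$ that appears in the statement, while the $c_T/\sqrt{nT}\cdot 1/\underline{f}^{1/2}$ prefactor comes from passing from $J_\tau^{1/2}$ (which carries $\underline{f}$) back to $\|\cdot\|_F$ together with the block length.

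Finally, I would promote this expectation bound to a high-probability bound: the slack factor $\phi_n$ with $\phi_n/(\sqrt{\underline{f}}\log(c_T+1))\to\infty$ is the room I need when applying Talagrand-type concentration (or a Bernstein bound combined with a union bound over a peeling of the $J_\tau^{1/2}\leq\eta$ shell) so that the deviation from the expected supremum is absorbed; the $\log(c_T+1)$ is the generic chaining cost coming from the block length. Assembling all three contributions -- the Yu coupling error $2nTc_T^{-\mu}$, the spectral-norm bound tail $\exp(-c\max\{n,d_T\}+\log c_T)$, and the concentration tail for the Rademacher process -- gives a single $\alpha_n\to 0$ uniform in $\eta$. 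The main obstacle is making sure the two duality-derived terms really assemble into the \emph{same} geometric factor $\sqrt{(1+s_\tau)\max\{\log(pc_T\vee n),r_\tau\}}$: the sparse term naturally produces $\sqrt{(1+s_\tau)\log(pc_T\vee n)}$ and the low-rank term naturally produces $\sqrt{(1+s_\tau)r_\tau}$, and the $\max\{\cdot,\cdot\}$ bound in Lemma \ref{lem:norm1} is precisely what makes these two contributions compatible when summed.
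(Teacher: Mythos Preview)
Your plan is essentially the same as the paper's: block via Yu's coupling, symmetrize, contract out the quantile loss, then split into a sparse piece handled by $\ell_1/\ell_\infty$ duality and a low-rank piece handled by nuclear/spectral duality, with Lemma~\ref{lem:norm1} supplying the common factor $\sqrt{(1+s_\tau)\max\{\log(pc_T\vee n),r_\tau\}}$. Two execution differences are worth noting. First, the paper does not apply Ledoux--Talagrand in one shot: it writes $\rho_\tau(\varepsilon-u)-\rho_\tau(\varepsilon)=\tau u+v_{i,t}+w_{i,t}$ with $|v_{i,t}|\le|\Delta_{i,t}|$ and $|w_{i,t}|\le|X_{i,t}^\prime\delta|$, yielding four Rademacher processes $B_1^0,\ldots,B_4^0$; contraction is then applied to $v$ and $w$ separately. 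Your one-step contraction is a legitimate shortcut. Second, the paper never computes an expected supremum and then concentrates: it uses the tail-probability symmetrization (Lemma~2.3.7 of \cite{wellner2013weak}, not 2.3.6) together with explicit moment-generating-function bounds for the low-rank pieces (via Theorem~1.2 of \cite{vu2007spectral}) and Hoeffding for the sparse pieces, and finishes with Markov/union bounds. Your expectation-then-Talagrand route is a valid alternative.

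There is one point you should tighten. After Yu's coupling the blocks are independent, but observations \emph{within} a block are still dependent, so symmetrization has to be at the block level: the Rademachers are $\varepsilon_{i,l}$, $l\in[d_T]$, each multiplying a block sum $\sum_{t\in H_l}(\cdot)$. Ledoux--Talagrand contraction as you invoke it is for scalar contractions, and a block sum of $c_T$ terms is not a $1$-Lipschitz function of a scalar argument. The paper resolves this by writing $\sum_{t\in H_l}=\sum_{m=1}^{c_T}(\text{term at offset }m)$, taking a maximum over $m\in[c_T]$ and paying a union-bound factor $c_T$; for each fixed $m$ the observations $\{(i,2lc_T+m)\}_{i,l}$ are genuinely independent and scalar contraction applies. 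Your write-up with individual-level signs $\epsilon_{i,t}$ obscures this step; once you insert the offset decomposition, the rest of your argument goes through and matches the paper's bound.
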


\begin{proof}
	Let $\Omega_1$ be the event  $   \max_{j \leq p} \vert \hat{\sigma}_j - 1\vert  \,\leq \, 1/4$. Then, by Assumption ,  $P(\Omega_1) \,\geq \, 1- \gamma$ .  
	Next let $\kappa >0$, and  $f = (\delta, \Delta) \in A_{\tau}$ and write
	\[
	\mathcal{F}\,=\,  \{ (\delta,\Delta) \in A_{\tau}\,:\,    \,\,\,\,\,\,\,   \,\,J_{\tau}^{1/2}(\delta,\Delta) \leq \eta  \}.
	\]
	Then notice that by Lemmas  4.1 and 4.2 from \cite{yu1994rates},	
	\begin{equation}
		\label{eqn:s6}
		\begin{array}{lll}
			\mathbb{P}\left( \epsilon(\eta) \sqrt{nT} \,\geq \, \kappa  \right) &\leq &   \displaystyle  2 \,\mathbb{P}\left(   \underset{ f \in \mathcal{F}   }{\sup}\,  \frac{1}{ \sqrt{n d_T} }\left\vert    \sum_{ i= 1}^n \sum_{l=1}^{d_T}  \sum_{ t \in H_l }  \frac{ Z_{i,t}(f) }{  \sqrt{c_T} }   \right\vert  \,\geq \,  \frac{  \kappa }{3}    \right) \,+\,  \\
			& &   \displaystyle  \mathbb{P}\left(   \underset{ f \in \mathcal{F}   }{\sup}\,  \frac{1}{ \sqrt{n d_T} }\left\vert    \sum_{ i= 1}^n \sum_{ t\in  R }  \frac{ Z_{i,t}(f) }{  \sqrt{c_T} }   \right\vert  \,\geq \,  \frac{  \kappa }{3}    \right) \,+\,2 nT \left( \frac{1}{c_T} \right)^{\mu}\\
			&=: & A_1  + A_2 +2 nT \left( \frac{1}{c_T} \right)^{\mu},
		\end{array}
	\end{equation}	
	where
	\[
	\begin{array}{lll}
		Z_{i,t}(f)&=&   \rho_{\tau}( \widetilde{Y}_{i,t} -  \widetilde{X}_{i,t}^{\prime}(\theta(\tau) +\delta)  -(\Pi_{i,t}(\tau) +\Delta_{i,t} ) ) - \rho_{\tau}( \widetilde{Y}_{i,t} -  \widetilde{X}_{i,t}^{\prime}\theta(\tau)   - \Pi_{i,t}(\tau) ) )\\
		& & - \mathbb{E}\left(  \rho_{\tau}( \widetilde{Y}_{i,t} - \widetilde{X}_{i,t}^{\prime}(\theta(\tau) +\delta)  -(\Pi_{i,t}(\tau) +\Delta_{i,t} ) ) - \rho_{\tau}(\widetilde{Y}_{i,t} - \widetilde{X}_{i,t}^{\prime}\theta(\tau)   - \Pi_{i,t}(\tau) ) ) \right).
	\end{array}
	\]
	Next we proceed  to bound each term in (\ref{eqn:s6}). To that end, notice that
	\[
	\begin{array}{lll}
		\displaystyle	\text{Var}\left(  \sum_{ i= 1}^n \sum_{l=1}^{d_T}  \sum_{ t \in H_l }  \frac{ Z_{i,t}(f) }{  \sqrt{c_T} }  \right) & \leq & \displaystyle \sum_{ i= 1}^n \sum_{l=1}^{d_T} \mathbb{E}\left(   \left(  \frac{1}{\sqrt{c_T}} \sum_{ t \in H_l }     Z_{i,t}(f)  \right)^2\right)   \\
		& \leq&   \displaystyle \sum_{ i= 1}^n \sum_{l=1}^{d_T} \sum_{ t \in H_l } \mathbb{E}\left(  \left(   \tilde{X}_{i,t}^{\prime}\delta  +  \Delta_{i ,t}  \right)^2 \right)\\
		& \leq&  \displaystyle   \frac{  nT }{  \underline{f} } \left(J_{\tau }^{1/2}(\delta,\Delta)\right)^2.
	\end{array}
	\]
	Let $\{\varepsilon_{i,l}\}_{i \in [n],\, l \in [d_T]}$ be i.i.d  Rademacher variables independent of  the data.
	Therefore, by Lemma  2.3.7  in \cite{wellner2013weak}
	\begin{equation}
		\label{eqn:symmetrization}
		\begin{array}{lll}
			\displaystyle 	\,\mathbb{P}\left(   \underset{ f \in \mathcal{F}   }{\sup}\,  \frac{1}{ \sqrt{n d_T} }\left\vert    \sum_{ i= 1}^n \sum_{l=1}^{d_T}  \sum_{ t \in H_l }  \frac{ Z_{i,t}(f) }{  \sqrt{c_T} }   \right\vert \geq \kappa\right)& \leq&  \displaystyle  \frac{ \mathbb{P}\left(  \underset{f \in \mathcal{F}}{\sup} \left\vert \frac{1}{\sqrt{n d_T}}  \sum_{i=1}^{n}\sum_{l=1}^{d_T}  \varepsilon_{i,l} \left(\sum_{t \in H_l} \frac{ Z_{i,t}(f)}{ \sqrt{c_T} }\right)  \right\vert     \,\geq \,  \frac{\kappa}{4}  \right)  }{ 1 -   \frac{4}{nT  \kappa^2 }  \,\underset{f  \in \mathcal{F} }{\sup} \, \text{Var}( \sum_{i=1}^{n}\sum_{l=1}^{d_T}   \sum_{t \in H_l}  \frac{Z_{i,t}(f) }{\sqrt{c_T} }   )   }\\
			&\leq  &  \frac{   \mathbb{P}\left(   A^0(\eta )    \,\geq \,  \frac{\kappa}{12}  | \Omega_1\right) \,+\,   \mathbb{P} (\Omega_1^c) }{ 1-  \frac{24  c_T \eta ^2 }{\underline{f}  \kappa^2} },
		\end{array}
	\end{equation}
	where
	\[
	\begin{array}{l}	A^0(\eta ) \,:=\, \\
		\displaystyle  \underset{f \in \mathcal{F}}{\sup} \left\vert \frac{1}{\sqrt{n d_T}}  \sum_{i=1}^{n}\sum_{l=1}^{d_T}  \varepsilon_{i,l}  \left(\sum_{t \in H_l} \frac{	\rho_{\tau}(\widetilde{Y}_{i,t} - \widetilde{X}_{i,t}^{\prime} (\theta(\tau) + \delta )   -   (\Pi_{i,t}(\tau)+ \Delta_{i,t}) )   -  \rho_{\tau}(\widetilde{Y}_{i,t} - \widetilde{X}_{i,t}^{\prime} \theta(\tau)    -   \Pi_{i,t}(\tau) )}{ \sqrt{c_T} }\right) \right\vert .
	\end{array}
	\]
	Next, note that
	\[
	\begin{array}{lll}
		\rho_{\tau}(\widetilde{Y}_{i,t} - \widetilde{X}_{i,t}^{\prime} (\theta(\tau) + \delta )   -   (\Pi_{i,t}(\tau)+ \Delta_{i,t}) )   -  \rho_{\tau}(\widetilde{Y}_{i,t} - \widetilde{X}_{i,t}^{\prime} \theta(\tau)    -   \Pi_{i,t}(\tau) )   & =& \tau\left(\widetilde{X}_{i,t}^{\prime}\delta +  \Delta_{i,t}  \right)\\
		& &  +    v_{i,t}(\delta,\Delta)   \,+\,  w_{i,t}(\delta,\Delta) ,
	\end{array}
	\]
	where
	\begin{equation}
		\label{eqn:contraction1}
		\begin{array}{lll}
			\vert  v_{i,t}(\delta,\Delta) \vert & =& \left\vert  (\widetilde{Y}_{i,t} - \widetilde{X}_{i,t}^{\prime} (\theta(\tau) + \delta )   -   (\Pi_{i,t}(\tau)+ \Delta_{i,t})  )_{-} -  (Y_{i,t} - \widetilde{X}_{i,t}^{\prime} (\theta(\tau) + \delta )   -   \Pi_{i,t}(\tau)   )_{-}  \right\vert\\
			&\leq  & \vert \Delta_{i,t}\vert.
		\end{array}
	\end{equation}
	and
	\begin{equation}
		\label{eqn:contraction2}
		\begin{array}{lll}
			\vert  w_{i,t}(\delta,\Delta) \vert & =& \left\vert  (\widetilde{Y}_{i,t} - \widetilde{X}_{i,t}^{\prime} (\theta(\tau) + \delta )   -   \Pi_{i,t}(\tau)   )_{-} -  (\widetilde{Y}_{i,t} - \widetilde{X}_{i,t}^{\prime}\theta(\tau)    -   \Pi_{i,t}(\tau)   )_{-}  \right\vert\\
			&\leq  & \vert \widetilde{X}_{i,t}^{\prime} \delta\vert.
		\end{array}
	\end{equation}
	
	Moreover, notice that by Lemma \ref{lem:norm1},
	\[
	\{ (\delta,\Delta ) \in A_{\tau} \,:\,   J_{\tau}^{1/2}(\delta,\Delta) \leq \eta  \} \,\subset   \{ (\delta,\Delta ) \in A_{\tau} \,:\,    \|\delta\|_{1,n,T} \leq  \eta \upsilon \},
	\]
	where
	\[
	\upsilon \,:=\,  \frac{2(C_0+1)}{\kappa_0} \left(\sqrt{1+s_{\tau}} +  \frac{  \sqrt{r_{\tau}}}{\sqrt{\log(n \vee p c_T) }}\right)  .
	\]
	Also by Lemma \ref{lem:norm1}, for  $(\delta,\Delta) \in A_{\tau}$
	\[
	\|\Delta\|_{*} \,\leq \,   \frac{ (C_0+1)   J^{1/2}_{\tau}(\delta,\Delta)   \sqrt{  nT  \log(p c_T\vee n) }}{\kappa_0}\left(\sqrt{1+s_{\tau}} +  \frac{  \sqrt{r_{\tau}}}{\sqrt{\log(n \vee p c_T) }}\right) ,
	\]
	and so,
	\[
	\begin{array}{l}
		\{ (\delta,\Delta ) \in A_{\tau} \,:\,   \,\,  J_{\tau}^{1/2}(\delta,\Delta) \leq \eta  \} \,\subset  \,\\
		\left\{  (\delta,\Delta ) \in A_{\tau} \,:\,  \,\,\|\Delta\|_* \leq  (C_0+1)\eta \sqrt{nT   \log(p c_T\vee n)} (\sqrt{s_{\tau}+1} +  \sqrt{r_{\tau}/\log(p c_T\vee n)} )/\kappa_0  \right \}.
		
	\end{array}
	\]
	Hence, defining
	\[
	\begin{array}{l}
		\displaystyle  B_1^0(\eta) \,=\,  \sqrt{c_T}  \underset{  \delta \,:\,  \|\delta\|_{1,n,T} \leq  \eta\upsilon }{\sup} \left\vert \frac{1}{\sqrt{n d_T}} \sum_{i=1}^{n}\sum_{l=1}^{d_T} \varepsilon_{i,l}\left(\sum_{t \in H_l} \frac{ \widetilde{X}_{i,t}^{\prime}\delta }{c_T}  \right) \right\vert, \,\,\\
		\displaystyle 	  B_2^0(\eta) \,=\, \sqrt{c_T} \underset{   \Delta    \,:\,  \|\Delta\|_* \leq  (C_0+1)\eta \sqrt{nT   \log(p c_T\vee n)} (\sqrt{s_{\tau}+1} +  \sqrt{r_{\tau}/\log(p c_T\vee n)} )/\kappa_0  }{\sup} \left\vert \frac{1}{\sqrt{n d_T}} \sum_{i=1}^{n}\sum_{l=1}^{d_T} \varepsilon_{i,l} \left(\sum_{t \in H_l} \frac{ \Delta_{i,t}}{c_T} \right) \right\vert\\
		\displaystyle  B_3^0(\eta) \,=\, \sqrt{c_T} \underset{ \Delta   \,:\,  \|\Delta\|_* \leq  (C_0+1)\eta \sqrt{nT   \log(p c_T\vee n)} (\sqrt{s_{\tau}+1} +  \sqrt{r_{\tau}/\log(p c_T\vee n)} )/\kappa_0  }{\sup} \left\vert \frac{1}{\sqrt{n d_T}} \sum_{i=1}^{n}\sum_{l=1}^{d_T}  \varepsilon_{i,l} \left( \sum_{t \in H_l} \frac{ v_{i,t}(\delta,\Delta)}{c_T  }\right)    \right\vert, \,\,\\
		\displaystyle  B_4^0(\eta) \,=\, \sqrt{c_T} \underset{ \delta \,:\,  \|\delta\|_{1,n,T} \leq  \eta \upsilon }{\sup} \left\vert \frac{1}{\sqrt{n d_T}} \sum_{i=1}^{n}\sum_{l=1}^{d_T}  \varepsilon_{i,l}\left(\sum_{t\in H_l} \frac{ w_{i,t}(\delta,\Delta)}{  c_T }\right)   \right\vert.\\
	\end{array}
	\]
	By union bound	we obtain that
	\begin{equation}
		\label{eqn:s7}
		\mathbb{P}(A^0(\eta)   \,\geq \, \kappa | \Omega_1) \,\leq \, \displaystyle    \sum_{j=1}^{4} \mathbb{P}(B_j^0(\eta)   \,\geq \, \kappa | \Omega_1),
	\end{equation}
	so we proceed to bound each term in the right hand side of the inequality above.
	
	
	First, notice that
	\[
	\begin{array}{lll}
		B_1^0(\eta) & \leq &\displaystyle   2 c_T \underset{ m \in  [c_T]   }{\max} \,\,\,\underset{  \delta \,:\,  \|\delta\|_{1,n,T} \leq  \eta\upsilon }{\sup} \left\vert \frac{1}{\sqrt{n T}} \sum_{i=1}^{n}\sum_{l=0}^{d_T- 1} \varepsilon_{i,l}\widetilde{X}_{i ,\,( 2l c_T +m )}^{\prime}\delta \right\vert,\\
	\end{array}
	\]
	and hence  by  a union bound and the same  argument on the proof of Lemma 5 in \cite{belloni2011l1}, we have that
	
	\begin{equation}
		\label{eqn:s8}
		\mathbb{P}(B_1^0(\eta)   \,\geq \, \kappa | \Omega_1)  \,\leq \,  2p c_T \exp\left(   -  \frac{\kappa^2}{  4c_T^2 (16\sqrt{2}  \eta \upsilon)^2   }  \right).
	\end{equation}
	
	Next we proceed to bound  $B_3^0(\eta)$, by noticing that
	\[
	\begin{array}{lll}
		B_3^0(\eta) & \leq & \displaystyle   \underset{ m \in  [c_T]   }{\max} \,\,\,  \underset{ \Delta   \,:\,  \|\Delta\|_* \leq  (C_0+1)\eta \sqrt{nT   \log(p c_T\vee n)} (\sqrt{s_{\tau}+1} +  \sqrt{r_{\tau}/\log(p c_T\vee n)} )/\kappa_0  }{\sup} \left\vert \frac{\sqrt{c_T}}{\sqrt{n d_T}} \sum_{i=1}^{n}\sum_{l=0}^{d_T- 1}  \varepsilon_{i,l}  v_{i, \,  \,( 2l c_T +m )}(\delta,\Delta)    \right\vert.
	\end{array}
	\]
	Towards  that end we  proceed to bound the moment  generating function of $B_3^0(\eta)$ and the use that to obtain an upper bound on $B_3^0(\eta)$.
	Now  fix  $m \in [d_T]$  and notice that
	\begin{equation}
		\label{eqn:s10}
		\begin{array}{l}
			\displaystyle	   \mathbb{E} \left(\exp\left( \lambda   \underset{ \Delta    \,:\,  \|\Delta\|_* \leq  (C_0+1)\eta \sqrt{nT   \log(p c_T\vee n)} (\sqrt{s_{\tau}+1} +  \sqrt{r_{\tau}/\log(p c_T\vee n)} )/\kappa_0    }{\sup} \left\vert  \frac{\sqrt{c_T}}{\sqrt{n d_T}} \sum_{i=1}^{n}\sum_{l=0}^{d_T- 1}  \varepsilon_{i,l}  v_{i ,\,  \,( 2l c_T +m )}(\delta,\Delta)  \right\vert \right) \right) \\
			\displaystyle  \leq \,   \mathbb{E} \left(\exp\left( \lambda   \underset{ \Delta    \,:\,  \|\Delta\|_* \leq  (C_0+1)\eta \sqrt{nT   \log(p c_T\vee n)} (\sqrt{s_{\tau}+1} +  \sqrt{r_{\tau}/\log(p c_T\vee n)} )/\kappa_0  }{\sup} \left\vert  \frac{ \sqrt{c_T}}{\sqrt{n d_T}} \sum_{i=1}^{n}\sum_{l=1}^{d_T}  \varepsilon_{i,l} \, \Delta_{i, \,  \,( 2l c_T +m )}   \right\vert \right) \right) \\
			\displaystyle  \leq \,  \mathbb{E} \Bigg(\underset{ \Delta    \,:\,  \|\Delta\|_* \leq (C_0+1)\eta \sqrt{nT   \log(p c_T\vee n)} (\sqrt{s_{\tau}+1} +  \sqrt{r_{\tau}/\log(p c_T\vee n)} )/\kappa_0   }{\sup}  \Bigg(\exp\left( \frac{\lambda  \sqrt{c_T} \| \Delta \|_* \,\mathbb{E}( \|\{\varepsilon_{il}\}\|_2)}{\sqrt{n d_T}}\right)\,  \\
			\,\,\,\,\,\,\,\,\,\displaystyle  \exp \left( \lambda \frac{ \sqrt{c_T} \|\Delta\|_*  ( \|\{\varepsilon_{i,l}\}\|_2  - \mathbb{E}( \|\{\varepsilon_{i,l}\}\|_2 ) )}{\sqrt{n d_T}}  \right)  \Bigg) \Bigg)\\
			\displaystyle  \leq \,\exp\left(   \lambda  \frac{ (C_0+1) c_4\eta  c_T \sqrt{3 \log(p c_T\vee n)  } (   \sqrt{1+s_{\tau}}  +   \sqrt{r_{\tau}/   \log(p c_T\vee n)   }  ) \left( \sqrt{n} + \sqrt{d_T} \right)  }{  \kappa_0 } \right)\cdot \\
			\displaystyle\exp\left(   \frac{ (C_0+1)^2 c_4 \lambda^2  c_T^2 \eta^2   \log(p c_T\vee n)    ( s_{\tau}+1 + r_{\tau}/  \log(p c_T\vee n)   ) }{\kappa_0^2}   \right),
		\end{array}
	\end{equation}
	for a positive constant $c_4>0$, and	where the first inequality holds  by  Ledoux-Talagrand's contraction inequality, the  second   by the  the duality of  the spectral and nuclear norms and the triangle inequality, the third  by Theorem 1.2 in \cite{vu2007spectral}  and  by basic properties of sub-Gaussian random variables.
	
	Therefore,  by Markov's inequality  and (\ref{eqn:s10}),
	\begin{equation}
		\label{eqn:s11}
		\begin{array}{l}
			\,\,\,\,\,\, \mathbb{P}\left( B_3^0(\eta ) \geq \kappa | \Omega_1 \right) \\
			\leq  \displaystyle  c_T  \underset{  m \in [c_T] }{ \max }  \mathbb{P}\Bigg(    \underset{ \Delta\,:\,  \|\Delta\|_* \leq     (C_0+1)\eta \sqrt{nT   \log(p c_T\vee n)} (\sqrt{s_{\tau}+1} +  \sqrt{r_{\tau}/\log(p c_T\vee n)} )/\kappa_0    }{\sup} \left\vert  \frac{\sqrt{c_T}}{\sqrt{n d_T}} \sum_{i=1}^{n}\sum_{l=0}^{d_T- 1}  \varepsilon_{i,l}  v_{i, \,  \,( 2l c_T +m )}(\delta,\Delta)  \right\vert  \\
			 \,\,\,\,\,\,\,\,\,\,\,\, \,\,\,\,\,\,\,\,\,\,\,\,\, \,\,\,\,\,\,\,\,\,\,\,\,\, \,\,\,\geq \, \kappa \Bigg)\\
			\leq    \underset{  \lambda>0 }{  \inf }\,\Bigg[  \exp\left( -  \lambda \kappa \right)  \exp\left(   \lambda   \frac{(C_0+1) c_4\eta  c_T \sqrt{3 \log(p c_T\vee n)  }(\sqrt{1+s_{\tau}}   +\sqrt{r_{\tau}/ \log(p c_T\vee n)    }    )  \left( \sqrt{n} + \sqrt{d_T} \right)  }{  \kappa_0 } \right)\cdot \\
			\,\,\,\,\,\,\,		 \displaystyle\exp\left(   \frac{ (C_0+1)^2 c_4 \lambda^2 c_T^2  \eta^2   \log(p c_T\vee n) ((1 +s_{\tau})   +   r_{\tau}/ \log(p c_T\vee n)     ) }{\kappa_0^2}   +  \log c_T \right) \Bigg]\\
			\leq  c_5\exp\left( -  \frac{  \kappa \kappa_0  }{    (C_0+1)  \eta c_T \sqrt{3  \log(p c_T\vee n) }(\sqrt{1+s_{\tau}}  + \sqrt{r_{\tau}/\log(p c_T\vee n) }  )\left( \sqrt{n}  +  \sqrt{d_T} \right)        }    +  \log c_T  \right),
		\end{array}		
	\end{equation}
	for a positive constant $c_5 >0$.

	Furthermore,  we observe that
	\[
	B_2^0(\eta )\,\leq \,   \underset{  m \in [d_T] }{ \max } 	 \underset{   \Delta    \,:\,  \|\Delta\|_* \leq  (C_0+1)\eta \sqrt{nT   \log(p c_T\vee n)} (\sqrt{s_{\tau}+1} +  \sqrt{r_{\tau}/\log(p c_T\vee n)} )/\kappa_0 }{\sup} \left\vert \frac{\sqrt{c_T}}{\sqrt{n d_T}} \sum_{i=1}^{n}\sum_{l=0}^{d_T-1} \varepsilon_{i,l} \Delta_{i ,  \,( 2l c_T +m )} \right\vert.
	\]
	Hence, with the same argument  for bounding $B_3^0(\eta )$, we have
	\begin{equation}
		\label{eqn:s12}
		\begin{array}{l}
				\mathbb{P}\left( B_2^0(\eta ) \geq 
			\kappa | \Omega_1 \right)\\ \,\leq\,  c_5\exp\left( -  \frac{  \kappa \kappa_0  }{   (C_0+1) \eta c_T \sqrt{3   \log(p c_T\vee n) }( \sqrt{1+s_{\tau}}  + \sqrt{r_{\tau}/\log(p c_T\vee n)}     )\left( \sqrt{n}  +  \sqrt{d_T} \right)        }    +  \log c_T  \right).
		\end{array}
		\end{equation}
	
	Finally,  we proceed to bound $B_4^0(\eta ) $. To  that end, notice that
	\[
	B_4^0(\eta )  \,\leq \, \underset{  m \in [d_T] }{ \max } \,\,	 \underset{ \delta \,:\,  \|\delta\|_{1,n,T} \leq  \eta \upsilon }{\sup} \left\vert \frac{ \sqrt{c_T}}{\sqrt{n d_T}} \sum_{i=1}^{n}\sum_{l=0}^{d_T -1}  \varepsilon_{i,l}  w_{i , \, (  2lc_T +m )   }  (\delta,\Delta) \right\vert,
	\]
	and  by  (\ref{eqn:contraction2}) and Ledoux-Talagrand's inequality,  as in (\ref{eqn:s8}), we obtain
	
	\begin{equation}
		\label{eqn:s13}
		\mathbb{P}(B_4^0(\eta)   \,\geq \, \kappa | \Omega_1)  \,\leq \,  2p c_T \exp\left(   -  \frac{\kappa^2}{  4c_T^2 (16\sqrt{2}  \eta \upsilon)^2   }  \right).
	\end{equation}

	Therefore,  letting
	\[
	\kappa \,=\, \frac{  \eta c_T  \phi_{n} (1+C_0)^2\sqrt{    \log(p c_T\vee n)}(\sqrt{(1+s_{\tau})  }+  \sqrt{r_{\tau}/  \log(p c_T\vee n)} )(\sqrt{n} +\sqrt{d_T})  }{\kappa_{0}   \underline{f}^{1/2} },
	\]
	and repeating the argument  above   for bounding $A_2$ in (\ref{eqn:s6}), combining (\ref{eqn:s6}), (\ref{eqn:symmetrization}), \eqref{eqn:s7}, \eqref{eqn:s8}, \eqref{eqn:s11}, \eqref{eqn:s12} and \eqref{eqn:s13},
	we obtain that
	\[
	\begin{array}{lll}
		\mathbb{P}(  \epsilon(\eta  )  \,\geq \,  \frac{\kappa}{\sqrt{nT}} )  & \leq &  5\frac{\gamma + 4\exp\left(  \log(p c_T\vee n)  -  C_1\frac{ \phi_{n}^2\log(p c_T\vee n)}{  \underline{f}   } (\sqrt{n} +\sqrt{d_T})^2 \right)   +     2 c_5\exp\left(- C_2 \frac{2\phi_{n} }{\underline{f}^{1/2} }  \right)  }{1 - \frac{3\kappa_0^2 }{  c_T\phi_{n} \,(1+C_0)^2( \sqrt{n} +\sqrt{d_T})^2 \log(p c_T\vee n) ( \sqrt{s_{\tau}+1} +   \sqrt{r_{\tau}/  \log(p c_T\vee n) }  )^2   }}   \\
		& &\,+\,   nT \left( \frac{1}{c_T} \right)^{\mu},
	\end{array}
	\]
	for some positive constants $C_1$  and $C_2$.
\end{proof}

Combining all the previous lemmas  we prove  Theorem \ref{thm:1} in the next subsection.

\subsection{Proof of Theorem \ref{thm:1}}
\label{sec:proog_thm1}
\begin{proof}
	
	Recall  from Lemma \ref{lem1}, our choices of $ \nu_{1}$ and  $\nu_2$ are
	\[
	\nu_{1} \,=\,C_{0}^{\prime}\sqrt{       \frac{   c_T \log( \max\{n,pc_T\} )  }{   n d_T}  },
	\]
	and
	\[
	\nu_2 \,=\,\frac{200 c_T}{n T}\left( \sqrt{n}  +  \sqrt{d_T}  \right),
	\]
	for $C_{0}^{\prime}  \,=\, 9/(1-c_0)$, and  $c_0$  as in Lemma  \ref{lem1}.
	
	Let
	\begin{equation}
		\label{eqn:eta_def}
		\eta \,=\, \frac{8    \phi_{n} ( C_{0}^{\prime}(1+C_0) +  \tilde{C}_0 + 200(1+C_0))\sqrt{c_T \,  \log(p c_T\vee n)}( \sqrt{(1+s_{\tau})}+  \sqrt{r_{\tau}/\log(p c_T\vee n)}   )(\sqrt{n} +\sqrt{d_T})  }{ \sqrt{n d_T} \kappa_{0}   \underline{f}^{1/2} },
	\end{equation}
	for  $C_0$ as in Lemma  \ref{lem1},   and $\tilde{C}_{0}$ as in Lemma \ref{lem:empiricalbound}.
	
	Throughout  we assume that the  following events happen:
	\begin{itemize}
		\item $\Omega_1\,:=\,$   the event that  $(\hat{\theta}(\tau) -  \theta(\tau),  \hat{\Pi}(\tau) -\Pi(\tau)) \in A_{\tau }$.
		\item $\Omega_2\,:=\,$  the event  for which the upper bound on $\epsilon(\eta )$  in Lemma \ref{lem:empiricalbound}  holds.
	\end{itemize}

	Suppose  that
	\begin{equation}
		\label{eqn:eta}
		\vert J_{\tau}^{1/2} (\hat{\theta}(\tau) -  \theta(\tau),  \hat{\Pi}(\tau) -\Pi(\tau)) \vert   \,>\, \eta.
	\end{equation}
	Then, by the convexity of $A_{\tau }$, and of the objective  $\hat{Q}$  with its constraint, we obtain that
	\[
	\begin{array}{l}
		0  \,>\,    \underset{  (\delta,\Delta)\in A_{\tau} \,:\, \vert J_{\tau}^{1/2} (\delta,\Delta)\vert = \eta  }{\min} \hat{Q}_{\tau}(\theta(\tau)+\delta, \Pi(\tau)+\Delta)  -    \hat{Q}(\theta(\tau),\Pi(\tau))  +   \nu_{1} \left[ \| \theta(\tau)+ \delta\|_{1,n,T} - \|\theta(\tau)\|_{1,n,T}   \right]  \\
		\,\,\,\,\,\,\,\,\,\,\,\,\,\,\,\,\,\,\,\,\,\,\,\,\,\,\,\,\,\,\,\,\,\,\,\,\,\,\,\,\,\,\,\,\,\,\,\,\,\,\,\,\,\,\,\,\,\,\,\,\,\,\,+\nu_2\left[  \| \Pi(\tau)+ \Delta\|_{*} - \|\Pi(\tau)\|_{*}  \right]
	\end{array}
	\]
	Moreover,  by Lemma \ref{lem:norm1}  and the triangle inequality,
	\[
	\begin{array}{lll}
		\|\theta(\tau)\|_{1,n,T} - \| \theta(\tau)+ \delta\|_{1,n,T}& \,\leq \,&  \|\delta_{T_{\tau}}\|_{1,n,T}\\
		& \,\leq \, & 2(1+C_0) \frac{ J^{1/2}_{\tau}(\delta,\Delta)   }{\kappa_0}\left(\sqrt{1+s_{\tau}}+  \frac{  \sqrt{r_{\tau}}}{\sqrt{\log(n \vee p c_T) }}\right) ,
	\end{array}
	\]
	and
	\[
	\begin{array}{lll}
		\|\Pi(\tau)\|_{*} - \| \Pi(\tau)+ \Delta\|_{*}& \leq &  \|\Delta\|_{*}\\
		&\leq &  (1+C_0) \sqrt{n T  \log(p c_T\vee n) }\frac{ J^{1/2}_{\tau}(\delta,\Delta)   }{\kappa_0}\left(\sqrt{1+s_{\tau}}+  \frac{  \sqrt{r_{\tau}}}{\sqrt{\log(n \vee p c_T) }}\right).
	\end{array}
	\]
	\vspace{-0.1in}
	Therefore,
	\[
	\begin{array}{lll}
		0   & >& \displaystyle    \underset{  (\delta,\Delta)\in A_{\tau} \,:\, \vert J_{\tau}^{1/2} (\delta,\Delta)\vert = \eta }{\min} \hat{Q}(\theta(\tau)+\delta,\Delta+ \Pi(\tau))  - \\
		& &\,\,\,\,\,\,\,\,\,\,\,\,\,\,\,\,\,\,\,\,\,\, \,\,\,\,\,\,\,\,\,  \,\,\,\,\,\,\,\,\,  \,\,\,\,\,\,\,\,\,    \hat{Q}(\theta(\tau),\Pi(\tau))  - 2 \nu_{1}  (1+C_0) \frac{ J^{1/2}_{\tau}(\delta,\Delta)   }{\kappa_0}\left(\sqrt{1+s_{\tau}}+\frac{  \sqrt{r_{\tau}}}{\sqrt{\log(n \vee p c_T) }}\right),\\
		& &\,\,\,\,\,\,\,\,\,\,\,\,\,\,\,\,\,\,\,\,\,\, \,\,\,\,\,\,\,\,\,  \,\,\,\,\,\,\,\,\,  \,\,\,\,\,\,\,\,\,  - \nu_2(1+C_0)\sqrt{n T  \log(p c_T\vee n)  }\frac{ J^{1/2}_{\tau}(\delta,\Delta)   }{\kappa_0}\left( \sqrt{s_{\tau}+1} +    \frac{ \sqrt{r_{\tau}+1} }{ \sqrt{ \log(p c_T\vee n) }  }  \right)\\
		&=  & \displaystyle  \underset{  (\delta,\Delta)\in A_{\tau} \,:\, \vert J_{\tau}^{1/2} (\delta,\Delta)\vert = \eta  }{\min} \Bigg[  \hat{Q}_{\tau}(\theta(\tau)+\delta,\Delta+ \Pi(\tau) )  -    \hat{Q}(\theta(\tau),\Pi(\tau)  )  \\
		& &\,\,\,\,\,\,\,\,\,\,\,\,\,\,\,\,\,\,\,\,\, \,\,\,\,\,\,\,\,\,  \,\,\,\,\,\,\,\,\,  \,\,\,\,\,\,\,\,\,  - Q(\theta(\tau)+\delta,\Delta+ \Pi(\tau))  +   Q(\theta(\tau),\Pi(\tau) )  \\
		& & \,\,\,\,\,\,\,\,\,\,\,\,\,\,\,\,\,\,\,\,\, \,\,\,\,\,\,\,\,\,  \,\,\,\,\,\,\,\,\,  \,\,\,\,\,\,\,\,\,  \displaystyle  +  Q(\theta(\tau)+\delta,\Delta+ \Pi(\tau) )  -  Q(\theta(\tau),\Pi(\tau) )\, - \\
		& &\,\,\,\,\,\,\,\,\,\,\,\,\,\,\,\,\,\,\,\,\, \,\,\,\,\,\,\,\,\,  \,\,\,\,\,\,\,\,\,  \,\,\,\,\,\,\,\,\,  \,2C_{\tau} (1+ C_0)\sqrt{\frac{c_T \,  \log(p c_T\vee n)}{n d_T}}  (\sqrt{n} +\sqrt{d_T})\frac{ J^{1/2}_{\tau}(\delta,\Delta)   }{\kappa_0}\left( \sqrt{s_{\tau}+1} +     \sqrt{r_{\tau}/\log(p c_T\vee n)} \right)-\\
		& &\,\,\,\,\,\,\,\,\,\,\,\,\,\,\,\,\,\,\,\,\, \,\,\,\,\,\,\,\,\,  \,\,\,\,\,\,\,\,\,  \,\,\,\,\,\,\,\,\,  \,  \frac{200 c_T}{n T}\left( \sqrt{n}  +  \sqrt{d_T}  \right) (C_0+1)\sqrt{n T \log(p c_T\vee n) }\frac{J^{1/2}_{\tau}(\delta,\Delta)   }{\kappa_0}\cdot \\
		 && \,\,\,\,\,\,\,\,\,\,\,\,\,\,\,\,\,\,\,\,\, \,\,\,\,\,\,\,\,\,  \,\,\,\,\,\,\,\,\,  \,\,\,\,\,\,\,\,\,  \, \left( \sqrt{s_{\tau}+1} +     \sqrt{r_{\tau}/\log(p c_T\vee n)} \right)\Bigg]\\
	\end{array}
	\]
	\[
	\begin{array}{lll}
		& \geq &\displaystyle \underset{  (\delta,\Delta)\in A_{\tau} \,:\, \vert J_{\tau}^{1/2} (\delta,\Delta)\vert = \eta  }{\min} Q(\theta(\tau)+\delta,\Delta+ \Pi(\tau) )  -   Q(\theta(\tau),\Pi(\tau) ) \\
		& &\,\,\,\,\,\,\,\,\,\,\,\,\,\,\,\,\,\,\,\,\, \,\,\,\,\,\,\,\,\,  \,\,\,\,\,\,\,\,\,  \,\,\,\,\,\,\,\,\,     - [2 C_{0}^{\prime}(1+ C_0)  +  200(C_0+1)]\sqrt{\frac{c_T \,\log(p c_T\vee n)}{n d_T}}  (\sqrt{n} +\sqrt{d_T})\frac{ J^{1/2}_{\tau}(\delta,\Delta)   }{\kappa_0}\cdot\\
		 & &\,\,\,\,\,\,\,\,\,\,\,\,\,\,\,\,\,\,\,\,\, \,\,\,\,\,\,\,\,\,  \,\,\,\,\,\,\,\,\,  \,\,\,\,\,\,\,\,\, \left( \sqrt{s_{\tau}+1} + \sqrt{r_{\tau}/\log(p c_T\vee n) } \right)\\
		&  &\,\,\,\,\,\,\,\,\,\,\,\,\,\,\,\,\,\,\,\,\, \,\,\,\,\,\,\,\,\,  \,\,\,\,\,\,\,\,\,  \,\,\,\,\,\,\,\,\,  \displaystyle  - \frac{\tilde{C}_0 \eta c_T  \phi_{n} \sqrt{\log(p c_T\vee n)}(\sqrt{1+s_{\tau}  }+   \sqrt{ r_{\tau}/ \log(p c_T\vee n) }  )(\sqrt{n} +\sqrt{d_T})  }{  \sqrt{nT} \kappa_{0}   \underline{f}^{1/2} }\\
	\end{array}
	\]
	\begin{equation}
		\label{eqn:contradiction}
		\begin{array}{lll}
			&  \geq& \displaystyle \frac{\eta ^2}{4} \wedge  (\eta q)  - \,[2 C_{0}^{\prime} (1+ C_0)  +  200(C_0+1)]\sqrt{\frac{c_T (1+ s_{\tau})\, \log(p c_T\vee n)}{n d_T}}  (\sqrt{n} +\sqrt{d_T})\frac{ \eta  }{\kappa_0} \cdot\\
			 & &  \left( \sqrt{s_{\tau}+1} + \sqrt{r_{\tau}/\log(p c_T\vee n) } \right)-  \frac{\tilde{C}_0 \eta c_T  \phi_{n} \sqrt{\log(p c_T\vee n)}(\sqrt{1+s_{\tau}  }+   \sqrt{ r_{\tau}/ \log(p c_T\vee n) }  )(\sqrt{n} +\sqrt{d_T})  }{  \sqrt{nT} \kappa_{0}   \underline{f}^{1/2} }\\
			& \geq & \displaystyle  \frac{\eta ^2}{4}   -  \frac{2 \eta   \phi_{n} (  C_{0}^{\prime}(1+C_0) +  \tilde{C}_0+200(C_0+1))\sqrt{c_T \,\log(p c_T\vee n)}(\sqrt{n} +\sqrt{d_T})  }{ \sqrt{n d_T} \kappa_{0}   \underline{f}^{1/2} }\cdot\\
			 & &(\sqrt{(1+s_{\tau}) }  +  \sqrt{r_{\tau}  /\log(p c_T\vee n) }     )\\
			& =& 0,
		\end{array}
	\end{equation}
	where the the second inequality follows  from Lemma \ref{lem:empiricalbound}, the third from Lemma \ref{lem:lowerbound},   the fourth from our choice of $\eta$  and (\ref{eqn:signal}), and the equality also from our choice of $\eta$. Hence,  (\ref{eqn:contradiction})   leads to a contradiction  which shows that (\ref{eqn:eta})  cannot happen in the first place. As a result, by Assumption \ref{cond3},
	\[
	\frac{\|\hat{\Pi}(\tau)\, -\, \Pi(\tau)  \|_F}{\sqrt{nT}}   \,\leq \,     \frac{ 1 }{\kappa_0 }\vert J_{\tau}^{1/2} (\hat{\theta}(\tau)\, -\,  \theta(\tau),  \hat{\Pi}(\tau) -\Pi(\tau) ) \vert   \,\leq \,   \frac{ \eta }{\kappa_0},
	\]
	which holds  with probability  approaching one.
	
	To conclude the proof,  let  $\hat{\delta} \,=\,  \hat{\theta} -\theta $ and notice that
	\[
	\begin{array}{lll}
		\displaystyle     \|  \hat{\delta}_{  (T_{\tau} \cup \overline{T}_{\tau}(\hat{\delta},m))^c }  \|^2  & \,\leq\, & \displaystyle \sum_{k \geq m+1}  \frac{  \| \hat{\delta}_{T_{\tau}^c}\|_1^2 }{k^2} \\
		&\,\leq \, &  \displaystyle \frac{\|\hat{\delta}_{T_{\tau}^c}\|_1^2}{m}\\
		&\,\leq \,&  \displaystyle  \frac{4C_0}{m} \left[\|\hat{\delta}_{T_{\tau}}\|_1^2  +  \frac{  r_{\tau} \|\Pi(\tau)  -  \hat{\Pi}(\tau) \|_F^2}{nT \log( p c_T \vee n ) } \right]\\
		&\,\leq \, & \displaystyle \frac{4C_0}{m}\left[  s_{\tau}\, \|\hat{\delta}_{  T_{\tau} \cup \overline{T}_{\tau}(\hat{\delta},m) }   \|^2  + \frac{  r_{\tau} \|\Pi(\tau) -\hat{\Pi}(\tau) \|_F^2}{nT \log( p c_T \vee n ) }   \right],
	\end{array}
	\]
	which implies
	\[
	\begin{array}{lll}\displaystyle  \|\hat{\delta}\|\,&\leq& \, \left( 1+  2C_0 \sqrt{\frac{s_{\tau}}{m}}  \right)\left(  \|\hat{\delta}_{  T_{\tau} \cup \overline{T}_{\tau}(\hat{\delta},m) }   \|  + \frac{  \sqrt{r_{\tau}}\|\Pi(\tau) -\hat{\Pi}(\tau) \|_F}{\sqrt{nT \log( c_T p \vee n )} } \right) \\
		&\leq&      \frac{  J_{{\tau}}^{1/2}(\hat{\delta}, \hat{\Pi}(\tau)\, -\, \Pi(\tau) )  }{\kappa_m   } \left( 1+  2C_0 \sqrt{\frac{s_{\tau}}{m}}  \right),
	\end{array}
	\]
	and the  result  follows.
\end{proof}

\section{Proof  of Theorem   \ref{thm3}  }

\subsection{Auxiliary  lemmas  for proof of Theorem  \ref{thm3}}
\begin{lemma}
	\label{lem:low_rank}
	Suppose  that Assumptions \ref{cond1}--\ref{cond2} and \ref{cond5} hold.
	Let
	$$
	\nu_{1} \,=\,\displaystyle \frac{2}{nT }\sum_{i=1}^n \sum_{t=1}^T     \|X_{i,t}\|_{\infty}.
	$$
	and 
	$$\nu_2 \,=\,  \frac{200 c_T}{n T}\left( \sqrt{n}  +  \sqrt{d_T}  \right).$$
	We have that
	\[
	(   \hat{\Pi}(\tau) -\Pi(\tau)  - X\theta(\tau) ) \in A^{\prime}_{\tau},
	\]
	with probability approaching one, where
	\[
	\begin{array}{l}
		A_{\tau}^{\prime }  = \Bigg\{  \Delta  \in \mathbb{R}^{n \times T } \,:\,   \| \Delta \|_*    \leq     c_0\sqrt{r_{\tau}}\left(    \|\Delta \|_F+   \| \xi\|_* \right),\,\,\, \| \Delta \|_{\infty} \leq c_1   \Bigg\},\\
	\end{array}
	\]
	and $c_0$ and $c_1$ are positive constants that depend  on $\tau$. Furthermore,  $\hat{\theta}(\tau )  = 0$.
\end{lemma}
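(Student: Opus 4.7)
My plan is to prove the two assertions in order: first that $\hat{\theta}(\tau) = 0$, then that $\Delta := \hat{\Pi}(\tau) - \Pi(\tau) - X\theta(\tau)$ satisfies the restricted-set conditions defining $A'_{\tau}$. The overall structure mirrors Lemma \ref{lem1}, but now the chosen $\nu_{1}$ is large enough to annihilate the sparse component entirely, while the nuclear-norm inequality is driven by the low rank of a shifted benchmark rather than by $\Pi(\tau)$ alone.

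\textbf{Step 1: $\hat{\theta}(\tau) = 0$.} I would verify the KKT subgradient condition at $(\tilde{\theta}, \tilde{\Pi}) = (0, \hat{\Pi}(\tau))$. The subgradient of the quantile loss with respect to $\theta_{j}$ at $\theta = 0$ is $-(1/(nT))\sum_{i,t} X_{i,t,j} g_{i,t}$ for some $g_{i,t} \in [\tau-1, \tau] \subset [-1, 1]$, and is therefore bounded in absolute value by $(1/(nT)) \sum_{i,t} \|X_{i,t}\|_{\infty}$. On the event from Assumption \ref{cond2} where $\hat{\sigma}_{j} \geq 3/4$, the choice $\nu_{1} = (2/(nT)) \sum \|X_{i,t}\|_{\infty}$ produces $\nu_{1} \hat{\sigma}_{j} \geq (3/2)(1/(nT)) \sum \|X_{i,t}\|_{\infty}$, strictly dominating the subgradient. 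Hence $\theta = 0$ is a strict minimizer in the $\theta$-block for fixed $\hat{\Pi}$, so any joint optimum of the convex problem must have $\hat{\theta}(\tau) = 0$.

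\textbf{Step 2: Controlling $\Delta$.} With $\hat{\theta} = 0$ the estimator solves the pure nuclear-norm penalized problem. I would compare to the benchmark $M^{*} := \Pi(\tau) + X\theta(\tau) + \xi$, which by Assumption \ref{cond5} has $\mathrm{rank}(M^{*}) \leq r_{\tau} + \mathrm{rank}(X\theta(\tau) + \xi) = O(r_{\tau})$; set $\Delta^{*} := \hat{\Pi} - M^{*} = \Delta - \xi$. The basic inequality from optimality reads $\hat{Q}_{\tau}(0, \hat{\Pi}) - \hat{Q}_{\tau}(0, M^{*}) \leq \nu_{2}(\|M^{*}\|_{*} - \|\hat{\Pi}\|_{*})$, and Knight's identity gives a matching lower bound $\hat{Q}_{\tau}(0, \hat{\Pi}) - \hat{Q}_{\tau}(0, M^{*}) \geq -(1/(nT)) \sum \Delta^{*}_{i,t} \tilde{a}_{i,t}$ with $\tilde{a}_{i,t} := \tau - 1\{Y_{i,t} \leq M^{*}_{i,t}\}$. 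Splitting $\tilde{a} = a + (\tilde{a} - a)$, I would apply Lemma \ref{lem:latent_error} to the centered part (valid since $|a_{i,t}| \leq 1$) to obtain a bound of order $(c_{T}/(nT))(\sqrt{n} + \sqrt{d_{T}}) \|\Delta^{*}\|_{*} = (\nu_{2}/2) \|\Delta^{*}\|_{*}$, which is absorbed on the left. Combining with Lemma \ref{lem:nuclear_norm} (which yields $\|\Delta^{*}\|_{*} + \|M^{*}\|_{*} - \|\hat{\Pi}\|_{*} \leq 6 \sqrt{r_{\tau}} \|\Delta^{*}\|_{F}$) then delivers $\|\Delta^{*}\|_{*} \lesssim \sqrt{r_{\tau}} \|\Delta^{*}\|_{F}$. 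Triangle inequalities $\|\Delta\|_{*} \leq \|\Delta^{*}\|_{*} + \|\xi\|_{*}$ and $\|\Delta^{*}\|_{F} \leq \|\Delta\|_{F} + \|\xi\|_{F} \leq \|\Delta\|_{F} + \|\xi\|_{*}$ produce $\|\Delta\|_{*} \lesssim \sqrt{r_{\tau}}(\|\Delta\|_{F} + \|\xi\|_{*})$, the first defining inequality of $A'_{\tau}$. The bound $\|\Delta\|_{\infty} \leq c_{1}$ follows from $\|\hat{\Pi}\|_{\infty} \leq C$ (imposed in the optimization) combined with the $O_{\mathbb{P}}(1)$ boundedness of $X\theta(\tau) + \Pi(\tau)$ from Assumption \ref{cond5}.

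\textbf{Main obstacle.} The delicate step is handling the bias piece $(1/(nT)) \sum \Delta^{*}_{i,t}(\tilde{a}_{i,t} - a_{i,t})$ arising from comparing to $M^{*}$ rather than to the true quantile surface $\Pi(\tau) + X\theta(\tau)$. The indicator $\tilde{a} - a$ is supported on the thin event $\{e_{i,t} \in [\xi_{i,t} \wedge 0,\, \xi_{i,t} \vee 0]\}$, whose conditional probability is at most $\bar{f} |\xi_{i,t}|$ under Assumption \ref{cond1}. Combined with $\|\Delta^{*}\|_{\infty} \leq 2C$ from the feasibility constraint, this discrepancy contributes a term of order $\|\xi\|_{*}/\sqrt{nT}$, which is precisely why $\|\xi\|_{*}$ enters the definition of $A'_{\tau}$. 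The remaining checks---strictness of the KKT inequality, feasibility $\|M^{*}\|_{\infty} \leq C$, and tracking constants on the intersection of the good events from Assumptions \ref{cond1}--\ref{cond2}---are routine.
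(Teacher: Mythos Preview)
Your Step 1 is essentially the paper's argument in subgradient form: the paper uses the $1$-Lipschitz property of $\rho_\tau$ to show $\hat{Q}_\tau(0,\check{\Pi}) - \hat{Q}_\tau(\check{\theta},\check{\Pi}) - \nu_1\|\check{\theta}\|_{1,n,T} < 0$ for every $\check{\theta}\neq 0$, which is your KKT condition in integrated form.

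In Step 2 you diverge from the paper, and this divergence \emph{creates} the obstacle you flag. The paper compares $\hat{\Pi}$ directly to the \emph{true} quantile surface $X\theta(\tau)+\Pi(\tau)$. That benchmark is automatically feasible for the box constraint (Assumption~\ref{cond5} gives $\|X\theta(\tau)+\Pi(\tau)\|_\infty = O_{\mathbb{P}}(1)$, and $C$ is chosen accordingly) and, crucially, makes the scores $a_{i,t}$ exactly mean-zero, so Lemma~\ref{lem:latent_error} applies with no bias correction whatsoever. The perturbation $\xi$ enters only afterwards, via two triangle inequalities, at the moment Lemma~\ref{lem:nuclear_norm} is invoked on the genuinely low-rank matrix $X\theta(\tau)+\Pi(\tau)+\xi$. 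Your ``main obstacle'' never arises in the paper's route.

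By comparing instead to $M^* = X\theta(\tau)+\Pi(\tau)+\xi$ you incur two real gaps. First, the basic inequality requires $M^*$ to satisfy $\|M^*\|_\infty \leq C$, but Assumption~\ref{cond5} bounds only $\|\xi\|_*/\sqrt{nT}$, not $\|\xi\|_\infty$; this is not a routine check and in general fails. Second, your control of $(1/(nT))\sum \Delta^*_{i,t}(\tilde{a}_{i,t}-a_{i,t})$ is an expectation bound (via $\mathbb{P}(e_{i,t}\in[\xi_{i,t}\wedge 0,\xi_{i,t}\vee 0])\leq \bar f\,|\xi_{i,t}|$); you still owe a concentration argument for the realized sum of indicators. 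Both problems vanish if you postpone $\xi$ to the nuclear-norm stage, as the paper does.
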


\begin{proof}
	First, we observe that $C$ in the statement of Theorem \ref{thm3}  can be take as  $C = \| X\theta(\tau) +  \Pi(\tau)  \|_{\infty}$. And so,
	\[
	\left\|   X\theta(\tau) +  \Pi(\tau)   -  \hat{\Pi}(\tau)\right\|_{\infty}  \,\leq \,  2C \,=:\,c_1.
	\]
	
	Next,  notice that  for any $\check{\Pi} \in \mathbb{R}^{n \times  T}$ and  $\check{\theta} \in \mathbb{R}^p \backslash \{0\}$,
	\[
	\begin{array}{lll}
		\displaystyle    \hat{Q}_{\tau}(0, \check{\Pi}  )   -    \hat{Q}_{\tau}( \check{\theta} , \check{\Pi}  )   -   \nu_{1} \|\check{\theta }\|_{1,n,T}     & \leq & 	\displaystyle  \frac{1}{nT}  \sum_{i=1}^n \sum_{t=1}^T  \vert  X_{i,t}^{\prime} \check{\theta}\vert    -  \nu_{1} \|\check{\theta}\|_1\\
		& \leq & 	\displaystyle  \frac{1}{nT}  \sum_{i=1}^n \sum_{t=1}^T  \vert  X_{i,t}^{\prime} \check{\theta}\vert    -  \nu_{1} \|\check{\theta}\|_{1,n,T}  \\
		& \leq  & \displaystyle  \frac{1}{nT}  \sum_{i=1}^n \sum_{t=1}^T     \underset{j =1 \ldots,p}{\max} \left \vert\frac{  X_{i,t,j}  }{ \hat{\sigma}_j  } \right\vert   \| \check{\theta}\|_{1,n,T}    -  \nu_{1} \|\check{\theta}\|_{1,n,T}  \\
		& <&0,
	\end{array}
	\]
	where   the   first inequality follows since  $\rho_{\tau}$ is a contraction  map.  Therefore,   $\hat{\theta}(\tau) = 0$.  Furthermore,    we have
	\[
	\begin{array}{lll}
		0 & \leq & \displaystyle  \hat{Q}_{\tau}( 0,  X\theta(\tau)  +  \Pi(\tau)  )   - \hat{Q}_{\tau}(  0,  \hat{\Pi}(\tau) )  \,+\,  \nu_2\left(  \| X\theta(\tau)  +  \Pi(\tau)  \|_*  - \|\hat{\Pi}(\tau)\|_*  \right)\\
		& \leq&\displaystyle  \left\vert   \frac{1}{nT}    \sum_{ i= 1}^n \sum_{t=1}^T    a_{i,t} \left(  ( X\theta(\tau)  +  \Pi(\tau)) -  \hat{\Pi}(\tau)  \right)  \right\vert    \\
		& &   \,+\, \nu_2\left(  \| X\theta(\tau)  +  \Pi(\tau)  \|_*  - \|\hat{\Pi}(\tau)\|_*  \right)\\
		& \leq&  \displaystyle   \|  X\theta(\tau)  +  \Pi(\tau)  -  \hat{\Pi}(\tau) \|_{*}\left(  \underset{  \| \tilde{\Delta} \|_* \leq 1  }{\sup}\, \left\vert   \frac{1}{nT}   \sum_{i=1}^{n}\sum_{t=1}^{T}  a_{i,t}  \tilde{\Delta}_{i,t} \right\vert        \right)\\
		& &  \,+\, \nu_2\left(  \| X\theta(\tau)  +  \Pi(\tau)  \|_*  - \|\hat{\Pi}(\tau)\|_*  \right)\\
		& \leq&   \displaystyle   \frac{200 c_T}{nT}\left(  \sqrt{n}+ \sqrt{d_T} \right) \left(  \|  X\theta(\tau)  +  \Pi(\tau)  -  \hat{\Pi}(\tau) \|_{*}   +    \|  X\theta(\tau)  +  \Pi(\tau)  \|_{*}  -\| \hat{\Pi}(\tau) \|_*      \right)   \\
		& &  \displaystyle \,\,\,\,-\,\,  \frac{100 c_T}{n T   }\left(  \sqrt{n}+ \sqrt{d_T} \right)   \|  X\theta(\tau)  +  \Pi(\tau)  -  \hat{\Pi}(\tau) \|_{*}\\
		& \leq& \displaystyle   \frac{200 c_T}{nT}\left(  \sqrt{n}+ \sqrt{d_T} \right) \left(  \|  X\theta(\tau)  +  \Pi(\tau)  + \xi  -  \hat{\Pi}(\tau) \|_{*}   +    \|  X\theta(\tau)  +  \Pi(\tau)  + \xi \|_{*}  -\| \hat{\Pi}(\tau) \|_*      \right) \\
		& &\displaystyle   \,\,\,\,+\,\frac{400 c_T}{nT}\left(  \sqrt{n}+ \sqrt{d_T} \right) \|\xi\|_*\,\,-\,\, \frac{100 c_T}{n T   }\left(  \sqrt{n}+ \sqrt{d_T} \right)   \|  X\theta(\tau)  +  \Pi(\tau)  -  \hat{\Pi}(\tau) \|_{*}\\
		& \leq& \displaystyle   \frac{c_1\,c_T}{nT}\left(  \sqrt{n}+ \sqrt{d_T} \right) \sqrt{r_{\tau}} \| X\theta(\tau)  +  \Pi(\tau)  + \xi  -  \hat{\Pi}(\tau)\|_F \\
		& &\displaystyle   \,\,\,\,+\,\frac{400 c_T}{nT}\left(  \sqrt{n}+ \sqrt{d_T} \right) \|\xi\|_*\,\,-\,\, \frac{100 c_T}{n T   }\left(  \sqrt{n}+ \sqrt{d_T} \right)   \|  X\theta(\tau)  +  \Pi(\tau)  -  \hat{\Pi}(\tau) \|_{*}\\
	\end{array}
	\]
	for some  positive  constant  $c_1$,  where the first inequality follows from the optimality  of the estimator, the second as in the proof of  Lemma \ref{lem1},  the third by a basic property of the nuclear norm,  the fourth by Lemma \ref{lem:latent_error},  the fifth by the triangle inequality,   and the six by Assumption \ref{cond5} and Lemma  \ref{lem:nuclear_norm}.
\end{proof}

\begin{lemma}
	\label{lem:empiricalbound2}
	Let
	\[
	\begin{array}{l}
		\epsilon^{\prime}(\eta) \,=\, \underset{   (\delta,\Delta) \in A_{\tau}  \,:\,     J_{\tau}^{1/2}(\delta,\Delta) \leq \eta  }{\sup} \,\,\  \Bigg \vert  \hat{Q}_{\tau}(0,X\theta(\tau) + \Pi(\tau) + \Delta) - \hat{Q}_{\tau}( 0,  X\theta(\tau) + \Pi(\tau)) -\\
		\,\,\,\,\,\,\,	\,\,\,\,\,\,\,\,\,\,\,\,\,\,\, Q_{\tau}(0,X\theta(\tau) + \Pi(\tau) + \Delta) + Q_{\tau}(0, X\theta(\tau) + \Pi(\tau))   \Bigg\vert,
	\end{array}
	\]
	and  $\{\phi_{n}\}$  a sequence  with $\phi_n/ (\sqrt{\underline{f} }\log(c_T+1)    )\,\rightarrow \, \infty$. Then for all $\eta>0$
	\[
	\epsilon^{\prime}(\eta) \,\leq \, \frac{\tilde{C}_0 \eta c_T  \phi_{n} \sqrt{ r_{\tau}}(\sqrt{n} +\sqrt{d_T})  }{  \sqrt{nT} \underline{f}},
	\]
	for some constant $\tilde{C}_0 >0$, 	with probability at least $1-  \alpha_{n}$. Here, the sequence $ \{\alpha_{n}\}$ is independent of $\eta$, and   $\alpha_{n}  \,\rightarrow 0$.
\end{lemma}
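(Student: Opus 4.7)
The plan is to mirror the proof of Lemma \ref{lem:empiricalbound} almost verbatim, exploiting the fact that here $\tilde\theta$ is frozen at $0$ in both arguments of $\hat Q_\tau$ (and $Q_\tau$). Because no $X_{i,t}^{\prime}\delta$ contribution appears, the entire sparse/$\ell_1$ machinery that produced the $\sqrt{(1+s_\tau)\max\{\log(pc_T\vee n),r_\tau\}}$ factor in Lemma \ref{lem:empiricalbound} drops out, and the rate is governed solely by the nuclear-norm side of the analysis. This is what turns that factor into $\sqrt{r_\tau}$, and why the only modelling input from the correlated-predictors setting is the constraint on $\|\Delta\|_*$ provided by Lemma \ref{lem:low_rank} together with Assumption \ref{cond5}.

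First, I would apply the blocking decomposition $\Lambda=\{H_1,H_1^{\prime},\ldots,H_{d_T},H_{d_T}^{\prime},R\}$ together with Lemmas 4.1--4.3 of \cite{yu1994rates} to reduce the supremum of the centred empirical process to one over an independent coupling $(\widetilde Y_{i,t},\widetilde X_{i,t})$, at the cost of an additive $2nT c_T^{-\mu}$ in the probability. Next, by the symmetrization inequality (Lemma 2.3.7 of \cite{wellner2013weak}) applied to the coupled process, it suffices to control a Rademacher-averaged functional indexed by $\Delta$. Using the Lipschitz decomposition
\[
\rho_\tau(y-\Pi_{i,t}(\tau)-\Delta_{i,t}) - \rho_\tau(y-\Pi_{i,t}(\tau)) \,=\, \tau\,\Delta_{i,t} + v_{i,t}(\Delta),\qquad |v_{i,t}(\Delta)|\leq|\Delta_{i,t}|,
\]
the Rademacher process splits into two structurally identical blocks, each dominated, for fixed residue $m\in[c_T]$, by
\[
\sqrt{c_T}\,\sup_{\Delta\in\mathcal{A}}\,\Big|\frac{1}{\sqrt{n d_T}}\sum_{i=1}^{n}\sum_{l=0}^{d_T-1}\varepsilon_{i,l}\,\Delta_{i,\,(2l c_T+m)}\Big|,
\]
where $\mathcal{A}$ is the index set induced by $A_\tau^{\prime}$ and the level set $\{J_\tau^{1/2}(0,\Delta)\leq\eta\}$. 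Duality of the nuclear and spectral norms bounds this expression by $\sqrt{c_T/(nd_T)}\,\|\Delta\|_*\,\|\{\varepsilon_{i,l}\}\|_2$, and Theorem 1.2 of \cite{vu2007spectral} gives $\|\{\varepsilon_{i,l}\}\|_2=O(\sqrt{n}+\sqrt{d_T})$ with sub-Gaussian tails; a moment-generating-function bound in the spirit of the $B_3^0$ step in Lemma \ref{lem:empiricalbound}, followed by Markov's inequality and a union bound over the $c_T$ residue classes, produces the desired sub-exponential deviation.

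It remains to translate the constraint $J_\tau^{1/2}(0,\Delta)\leq\eta$ into a nuclear-norm bound on $\mathcal{A}$. Since $\delta=0$, Assumption \ref{cond3} yields $J_\tau^{1/2}(0,\Delta)=\sqrt{\underline f}\,\|\Delta\|_F/\sqrt{nT}$, hence $\|\Delta\|_F\leq\sqrt{nT}\,\eta/\sqrt{\underline f}$. Combining this with the membership $\Delta\in A_\tau^{\prime}$ from Lemma \ref{lem:low_rank} gives
\[
\|\Delta\|_*\,\leq\,c_0\sqrt{r_\tau}\Bigl(\|\Delta\|_F+\|\xi\|_*\Bigr)\,\leq\,c_0\sqrt{r_\tau}\Bigl(\sqrt{nT}\,\eta/\sqrt{\underline f}+\|\xi\|_*\Bigr).
\]
Plugging this into the spectral-norm/Rademacher bound above multiplies the result by $\sqrt{r_\tau}(\sqrt{n}+\sqrt{d_T})/\sqrt{nT}$, and Assumption \ref{cond5} is precisely calibrated so that $\|\xi\|_*/\sqrt{nT}=O\bigl(c_T\phi_n\sqrt{r_\tau}(\sqrt{n}+\sqrt{d_T})/(\sqrt{nT}\underline f)\bigr)$ is absorbed into the same order as the $\|\Delta\|_F$ piece up to the slack factor $\phi_n$, producing the stated bound on $\epsilon^{\prime}(\eta)$.

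The main obstacle I anticipate is exactly this last point: unlike in Lemma \ref{lem:empiricalbound}, the restricted set $A_\tau^{\prime}$ is not scale-homogeneous in $\eta$ because of the additive $\|\xi\|_*$ term, so the Rademacher complexity splits into a ``variance'' piece that scales linearly in $\eta$ and a ``bias'' piece that does not. Verifying carefully that Assumption \ref{cond5} makes the bias piece of the \emph{same} order in $\eta$ as the variance piece (so that the final bound can be written as $C\eta\cdot(\text{rate})$ rather than $C\eta+\text{const}$) is where the technical care is needed, but the resulting algebra is an immediate consequence of the definition of $\phi_n$ and the choice of tuning parameters in Theorem \ref{thm3}.
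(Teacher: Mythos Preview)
Your proposal is correct and follows the same approach as the paper, whose proof is the single sentence ``This follows similarly to the proof of Lemma~\ref{lem:empiricalbound}.'' You have in fact supplied more detail than the paper does: you correctly identify that with the design matrix frozen at zero the $B_1^0$ and $B_4^0$ terms of Lemma~\ref{lem:empiricalbound} vanish, that the Rademacher process is controlled via nuclear--spectral duality and the Vu spectral-norm bound exactly as in the $B_2^0,B_3^0$ steps, and that the relevant restricted set here is $A_\tau'$ (or its subset $A_\tau''$) from Lemma~\ref{lem:low_rank} rather than the original $A_\tau$---the paper's proof of Theorem~\ref{thm3} makes this substitution explicit when it says to treat ``$A_\tau''$ as $A_\tau$'' and ``$\kappa_0=\underline f^{1/2}$.'' Your observation about the additive $\|\xi\|_*$ breaking scale-homogeneity is a genuine subtlety the paper glosses over; your resolution via Assumption~\ref{cond5} is the intended one.
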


\begin{proof}
	This follows similarly to the proof of Lemma \ref{lem:empiricalbound}.
\end{proof}

\begin{lemma}
	\label{lem:lower_bound}	
	Let
	\[
	A_{\tau}^{\prime \prime } \,=\,\left\{    \Delta \in A_{\tau}^{\prime }   \,:\,  q(\Delta )  \geq 2\eta_0  ,  \,\,\,\Delta \neq 0  \right\},
	\]
	with
	\[
	\eta_0   \,=\, \frac{\tilde{C}_1 c_T  \phi_{n} \sqrt{ r_{\tau}}(\sqrt{n} +\sqrt{d_T})  }{  \sqrt{nT}    \underline{f} },
	\]
	for an appropriate  constant $\tilde{C}_1 >0$,
	and
	\[
	q(\Delta) \,=\, \frac{3}{2} \frac{  \underline{f}^{3/2}  }{\bar{f}^{\prime}  }\,\frac{  \left(   \frac{1}{nT}\sum_{i=1}^{n}\sum_{t=1}^{T}  (  \Delta_{i,t}  )^2  \right)^{3/2}   }{     \frac{1}{nT}   \sum_{i=1}^{n}\sum_{t=1}^{T}   \vert \Delta_{i,t} \vert^3        }.
	\]
	Under Assumptions  \ref{cond1}-\ref{cond2} and  \ref{cond5},
	for any   $\Delta \in A_{\tau}^{\prime \prime }$  we have that
	\[
	\displaystyle    Q_{\tau}(0,X\theta(\tau)  +  \Pi(\tau)   +  \Delta   )    -   Q_{\tau}(  0,X\theta(\tau)  +  \Pi(\tau) ) \,\geq \,  \min\left\{\frac{    \underline{f}  \|\Delta\|^2  }{4 nT} ,    \frac{ 2\eta   \underline{f}^{1/2}  \|\Delta\| }{ \sqrt{n T} }   \right \}.
	\]
\end{lemma}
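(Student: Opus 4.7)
The argument follows the template of Lemma \ref{lem:lowerbound}, adapted to the setting where only the $\Pi$-coordinate is perturbed (since $\hat{\theta}(\tau) = 0$ by Lemma \ref{lem:low_rank}), and where the infimum in Assumption \ref{cond3} is replaced by the pointwise ratio $q(\Delta)$ built into the set $A_{\tau}^{\prime\prime}$. First I would write the second-order Taylor expansion of $Q_\tau$ in its $\Pi$-argument, following exactly the steps of equation \eqref{eqn:lower_b} with $\delta=0$. Using $f\ge\underline{f}$ and $|f'|\le\bar{f}^{\prime}$ from Assumption \ref{cond1}(ii), this yields
\[
Q_{\tau}(0,X\theta(\tau)+\Pi(\tau)+\Delta)-Q_{\tau}(0,X\theta(\tau)+\Pi(\tau))\,\ge\,\frac{\underline{f}}{2nT}\sum_{i,t}\Delta_{i,t}^{2}-\frac{\bar{f}^{\prime}}{6nT}\sum_{i,t}|\Delta_{i,t}|^{3}.
\]

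Second, I would identify the quadratic regime. A direct algebraic manipulation using the definition of $q(\Delta)$ shows that the cubic remainder above is bounded by half of the quadratic term precisely when $\underline{f}^{1/2}\|\Delta\|_F/\sqrt{nT}\le q(\Delta)$. Combined with the assumption $q(\Delta)\ge 2\eta_0$ built into $A_{\tau}^{\prime\prime}$, this gives
\[
Q_{\tau}(0,X\theta+\Pi+\Delta)-Q_{\tau}(0,X\theta+\Pi)\,\ge\,\frac{\underline{f}\,\|\Delta\|_F^{2}}{4nT}
\]
for every $\Delta\in A_{\tau}^{\prime\prime}$ satisfying $\underline{f}^{1/2}\|\Delta\|_F/\sqrt{nT}\le 2\eta_0$.

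Third, for $\Delta\in A_{\tau}^{\prime\prime}$ outside this ball I would use the convexity/rescaling argument analogous to the one used to control $v_{A_{\tau}}$ in Lemma \ref{lem:lowerbound}. Set $t=2\eta_0\sqrt{nT}/(\underline{f}^{1/2}\|\Delta\|_F)\in(0,1)$ and let $\Delta^{\prime}=t\Delta$. Since $q(\cdot)$ is positively homogeneous of degree zero, $q(\Delta^{\prime})=q(\Delta)\ge 2\eta_0$; and the constraints $\|\cdot\|_{*}\le c_0\sqrt{r_{\tau}}(\|\cdot\|_F+\|\xi\|_{*})$ and $\|\cdot\|_{\infty}\le c_1$ defining $A_{\tau}^{\prime}$ are preserved under scaling by $t\in[0,1]$. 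Hence $\Delta^{\prime}\in A_{\tau}^{\prime\prime}$ with $\underline{f}^{1/2}\|\Delta^{\prime}\|_F/\sqrt{nT}=2\eta_0$, so the quadratic bound of the previous step applies at $\Delta^{\prime}$. Convexity of the map $\Delta\mapsto Q_{\tau}(0,X\theta+\Pi+\Delta)$ along the ray $[0,\Delta]$ gives
\[
Q_{\tau}(0,X\theta+\Pi+\Delta)-Q_{\tau}(0,X\theta+\Pi)\,\ge\, t^{-1}\bigl[Q_{\tau}(0,X\theta+\Pi+\Delta^{\prime})-Q_{\tau}(0,X\theta+\Pi)\bigr],
\]
and plugging the quadratic bound at $\Delta^{\prime}$ together with the value of $t$ produces the linear-in-$\|\Delta\|_F$ term of order $\eta_0\,\underline{f}^{1/2}\|\Delta\|_F/\sqrt{nT}$, matching the stated conclusion up to absolute constants that are absorbed into $\tilde{C}_1$ inside the definition of $\eta_0$.

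The principal obstacle is verifying that the restricted set $A_{\tau}^{\prime\prime}$ is stable under the contraction $\Delta\mapsto t\Delta$ with $t\in(0,1)$: one must check that $q(\Delta)$ is genuinely scale invariant (which is immediate from its homogeneous form) and that the nuclear-norm and sup-norm constraints on $A_{\tau}^{\prime}$ survive the scaling even in presence of the fixed approximation error $\xi$, so that the convexity transfer in step three is legal. Once these points are settled, the combination of the quadratic bound from step two (valid when $\underline{f}^{1/2}\|\Delta\|_F/\sqrt{nT}\le 2\eta_0$) with the linear bound from step three (valid in the complementary region) yields the stated minimum, the numerical constants in the linear term being calibrated through the constant $\tilde{C}_1$ hidden inside $\eta_0$.
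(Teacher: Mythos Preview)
Your proposal is correct and follows essentially the same approach as the paper, which simply asserts that the argument proceeds as in Lemma \ref{lem:lowerbound}. Your explicit verification that $A_{\tau}^{\prime\prime}$ is closed under the contraction $\Delta\mapsto t\Delta$ for $t\in(0,1)$---via the degree-zero homogeneity of $q(\cdot)$ and the monotonicity of the nuclear-norm and sup-norm constraints defining $A_{\tau}^{\prime}$ in the presence of the fixed term $\|\xi\|_*$---supplies the detail the paper leaves implicit when invoking the convexity/rescaling step.
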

\begin{proof}
	This follows as the proof of Lemma \ref{lem:lowerbound}.
\end{proof}

\subsection{Proof of Theorem \ref{thm3}}

The proof  of Theorem  \ref{thm3}  proceeds by exploiting  Lemmas \ref{lem:low_rank} and \ref{lem:lower_bound}.  By Lemma  \ref{lem:low_rank}, we have that
\begin{equation}
	\label{eqn:event}
	\displaystyle 	\hat{\Delta} \,:=\,\hat{\Pi}(\tau) -X\theta(\tau) -  \Pi(\tau)      \in  A_{\tau}^{\prime },
\end{equation}
with high probability. Therefore,  we assume that (\ref{eqn:event})  holds.  Hence, if $	\hat{\Delta} \notin  A_{\tau}^{\prime \prime}$,
then
\begin{equation}
	\label{eqn:bound1}
	\displaystyle    \frac{1}{ \sqrt{nT} }    \|\hat{\Delta}\|_F\,< \,    \frac{  4\,\eta\,\left( \sum_{i=1}^n \sum_{t=1}^T   \vert \hat{\Delta}_{i,t} \vert^3   \right)     }{3\,\left( \sum_{i=1}^n \sum_{t=1}^T   \hat{ \Delta}_{i,t}^2   \right)  }  \frac{    \overline{f}^{\prime} }{    \underline{f}^{3/2} }\,\leq \,\frac{4      \overline{f}^{\prime}  \|        \hat{ \Delta} \|_{\infty}  \,\eta  }{  3  \underline{f}^{3/2}  }.
\end{equation}

If $	\hat{\Delta} \in  A_{\tau}^{\prime \prime}$,  then  we proceed as in the proof of Theorem \ref{thm:1} by  exploiting Lemma \ref{lem:empiricalbound2}, and treating
$X\theta(\tau) + \Pi(\tau)$ as the latent factors matrix, the design matrix as the matrix zero, $A_{\tau}^{\prime \prime}$ as $A_{\tau}$, and
\[
\kappa_0   \,=\,   \underline{f}^{1/2}, 
\]
in Assumption \ref{cond3}.  This  leads to
\begin{equation}
	\label{eqn:bound2}
	\displaystyle    \frac{1}{ \sqrt{nT} }    \|\hat{\Delta}\|_F\,\leq \,  \eta,
\end{equation}
and the claim in Theorem  \ref{thm3}  follows  combining (\ref{eqn:bound1})  and (\ref{eqn:bound2}).

\subsection{Proof of  Corollary  \ref{cor1} }

First notice that  by  Theorem  \ref{thm:1}  and Theorem 3  in \cite{yu2014useful},
\begin{equation}
	\label{eqn:cor_step1}
	v \,:=\,	\max\left\{ \underset{  O  \in \mathbb{O}_{r_{\tau}}  }{\min}  \|  \hat{g}(\tau) O  - g(\tau)\|_F , \underset{  O  \in \mathbb{O}_{r_{\tau}}  }{\min}  \|  \tilde{\hat{\lambda}}(\tau) O  - \tilde{\lambda}(\tau)\|_F   \right \}   \,=\,     O_{\mathbb{P}}\left(  \frac{   (\sigma_1( \tau)  +
		\sqrt{r_{\tau} }\, \mathrm{Err}   )  \mathrm{Err}}{(\sigma_{r_{\tau} -1}( \tau))^2  - (\sigma_{r_{\tau} }( \tau))^2 }    \right).	
\end{equation}
Furthermore,
\[
\begin{array}{lll}
	\displaystyle \frac{\|   \hat{\lambda}(\tau) -  \lambda(\tau)\|_F^2}{nT}  &  =  &   \displaystyle   \frac{1}{nT} \sum_{j=1}^{ r_{\tau} } \|  \lambda_{\cdot,j}(\tau)  -  \hat{\lambda}_{\cdot,j}(\tau) \|^2  \\
	& \leq&  \displaystyle    \frac{2}{nT}\sum_{j=1}^{ r_{\tau} } (\sigma_j  -\hat{\sigma}_j  )^2  \,+\, \frac{2}{nT}\sum_{j=1}^{ r_{\tau}  } \,\sigma_j^2 \| \tilde{\hat{\lambda}}_j(\tau)  -  \tilde{ \lambda}_j(\tau) \|^2\\
	& \leq&  \displaystyle    \frac{2}{nT}\sum_{j=1}^{ r_{\tau} } (\sigma_j  -\hat{\sigma}_j  )^2  \,+\, \frac{2\,\sigma_1^2 }{nT} \sum_{j=1}^{ r_{\tau} }  \| \tilde{\hat{\lambda}}_j(\tau)  -  \tilde{ \lambda}_j(\tau) \|^2\\
	& \leq  &   \displaystyle  \frac{2\, r_{\tau}}{nT} (  \sigma_1 - \hat{\sigma}_1 )^2  \,+\, \frac{2\, \sigma_1^2 }{nT}\| \tilde{\hat{\lambda}}(\tau) -  \tilde{ \lambda}(\tau) \|_F^2\\ 
	& \leq &  \displaystyle  \frac{2\, r_{\tau}}{nT}   \|  \Pi(\tau)  \,-\, \hat{\Pi}(\tau)   \|_F^2  \,+\, \frac{2\, \sigma_1^2 }{nT}\| \tilde{\hat{\lambda}}(\tau) -  \tilde{ \lambda}(\tau) \|_F^2\\
	& =&  \displaystyle O_{ \mathbb{P} } \left(      \frac{   r_{\tau} \phi_{n}^2  c_T  \log(p c_T \vee n  ) (1+s_{\tau}  +   r_{\tau}/  \log(p c_T \vee n  )   )  }{\kappa_{0}^4 \, \, \underline{f} }\left( \frac{1}{n} +\frac{1}{d_T} \right )    \right) \,+\, \\
	&    &  \displaystyle O_{\mathbb{P}}\left( \frac{ \sigma_1^2}{nT}     \frac{   (\sigma_1( \tau)  +
		\sqrt{r_{\tau} }\, \mathrm{Err}  )^2  \mathrm{Err} ^2 }{  \left( (\sigma_{r_{\tau} -1}( \tau))^2  - (\sigma_{r_{\tau} }( \tau))^2  \right)^2  }       \right),
\end{array}
\]
where  the third  inequality  follows from  Weyl's inequality, and the last one  from (\ref{eqn:cor_step1}).

\section{Proof  of Theorem \ref{thm_c} }

Conditioning on $\Pi$ and $\{X_{i,t}\}$, 
as in \cite{yu1994rates},  we  define the  sequence $\{\tilde{\varepsilon}_{i,t} \}_{i \in [n], t \in [T]}$
such  that  
\begin{itemize}
	\item   $\{\tilde{\varepsilon}_{i,t}\}_{i \in [n], t \in [T]}$  is independent  of   $\{\varepsilon_{i,t}\}_{i \in [n], t \in [T]}$; 
	\item  for a fixed  $t$ the random variables  $\{\tilde{\varepsilon}_{i,t} \}_{i \in [n]}$  are independent;
	\item  for a fixed $i$:
	\[
	\mathcal{L}(  \{ \tilde{\varepsilon}_{i,t}\}_{ t \in H_l   } ) \,=\,  \mathcal{L}(  \{ \varepsilon_{i,t} \}_{ t \in H_l   } ) \,=\, \mathcal{L}(  \{ \varepsilon_{i,t} \}_{ t \in H_1  } ) \,\,\,\,\forall  l \,\,\in [d_T],
	\]
	and  the blocks $ \{ \tilde{\epsilon}_{i,t} \}_{ t \in H_1   },\ldots,  \{ \tilde{\epsilon}_{i,t} \}_{ t \in H_{d_T}   }$ are independent.
\end{itemize}

Here,  we  define  $ \Lambda  \,:=\,   \{ H_1,H_1^{\prime},\ldots,H_{d_T},H_{d_T}^{\prime},R  \}$ with
\begin{equation}
	\label{eqn:intervals}
	\begin{array}{lll}
		H_j &=&  \left\{   t \,:\,    1+ 2(j-1)c_T \leq t \leq (2j-1)c_T \right\},\,\,\,	 \\
		H_j^{\prime} &=&  \left\{   t \,:\,    1+ (2j-1)c_T \leq t \leq 2j c_T \right\}, \,\,j = 1,\ldots, d_T,\\
		&&  \,\,\,\,\text{and}\,\,\, R \,=\,\{ t \,:\,    2c_T d_T +1\,\leq t \leq T  \}.
	\end{array}
\end{equation}
We then let
\[
\tilde{Y}_{i,t}\,  = \, X_{i,t}^{\prime} \theta(\tau)+\Pi_{i,t}(\tau) +  \tilde{\varepsilon}_{i,t} G(X_{i,t})
\]
for all $i\in [n]$ and $t\in [T]$.

Furthermore, we define the scores $a_{i,t} \,=\,  \tau -  1\{ Y_{i,t}  \leq  X_{i,t}^{\prime} \theta(\tau) + \Pi_{i,t}(\tau)  \} $, and   $\tilde{a}_{i,t} \,=\,  \tau -  1\{ \widetilde{Y}_{i,t}  \leq  X_{i,t}^{\prime} \theta(\tau) + \Pi_{i,t}(\tau)  \} $.


\subsection{Auxiliary lemmas for proof of Theorem \ref{thm_c} }

Throughout  we use the notation  from Section \ref{sec:proofs} and  $(\hat{\beta}(\tau), \hat{\Pi}(\tau) )$ denotes the estimator defined in (\ref{eqn:prob_1}).

We begin by defining 
\[
\displaystyle  \hat{M}( \tilde{\beta},\tilde{ \Pi}  ) \,=\,  \frac{1}{nT}\sum_{i=1}^n \sum_{t=1}^{T} \left[ \rho_{\tau}(Y_{i,t}  - X_{i,t}^{\prime} \tilde{\beta} - \tilde{ \Pi}_{i,t}     )  - \rho_{\tau}(Y_{i,t}  - X_{i,t}^{\prime} \beta(\tau) - \Pi_{i,t}(\tau)     )  \right]
\]	
and 
\[
\displaystyle  M(\tilde{\beta},\tilde{ \Pi})   \,=\,  \mathbb{E}\left(  \hat{M}( \tilde{\beta},\tilde{ \Pi}  )   \big|  \{ X_{i,t}\}  \right).
\]
We also set
\begin{equation}
	\label{mat_e}
	\mathcal{E} \,:=\, \left\{ \underset{  1\leq  j\leq p }{\max}\, \,\vert\hat{\sigma}^2_{j}-1 \vert \,\leq \, \frac{1}{4}\right\},
\end{equation}
with the notation in Assumption 2.

We   start by recalling a result from \cite{padilla2020adaptive} involving the 	behavior of  $M$ locally around the true quantiles.

\begin{lemma}[Lemma  13 in  \cite{padilla2020adaptive}]
	\label{lem10}
	With the notation and  assumptions of  Theorem \ref{thm_c} we have that  
	\[
	\displaystyle  M(\hat{\beta},\hat{ \Pi})   \,\geq \,  \frac{c_0}{nT}  \sum_{i=1}^n \sum_{t=1}^{T}\min\{   \vert   q_{i,t} -\hat{q}_{i,t}  \vert,(q_{i,t} -\hat{q}_{i,t} )^2    \} ,
	\]
	for some constant  $c_0>0$.
\end{lemma}
We now proceed to construct  a restricted set $K$ where the solution  $(\hat{\beta}(\tau),\hat{\Pi}(\tau)) $ lies with high probability.

\begin{lemma}
	\label{lem12}
	Let  
	\[
	\phi_{n,T} \,=\, \|\beta(\tau)\|_1      +     \frac{1}{\sqrt{  nT \log(  \max\{n,pc_T \} )    }}     \|\Pi(\tau)\|_*,
	\]
	and
	\[
	\psi_{n,T}\,=\, \sqrt{  nT \log(  \max\{n,pc_T \} )    }   \|\beta(\tau)\|_1 \,+\,\|\Pi(\tau)\|_*.
	\]
	Then there exists a positive constant  $C_0$  such that the event  
	\[
	\mathcal{B} \,=\,   \left\{  (\hat{\beta}(\tau),\hat{\Pi}(\tau)) \in K    \right\}
	\]
	holds with probability at least 
	\[
	1 -  \gamma - \frac{16}{ n } - 8npT\left( \frac{1}{c_T} \right)^{\mu}- 2 n T  \left( \frac{1}{c_T} \right)^{\mu} -2c_1\exp(-c_2 \max\{n,T\} + \log c_T ),
	\]
	with the notation from  Lemma \ref{lem1}, and where
	\[
	K\,:=\,\left\{      (\tilde{\beta},\tilde{ \Pi}) \,:\,      \|\tilde{\beta}\|_{1}\,\leq \, C_0 \phi_{n,T},\,\,\,\,   \|\tilde{\Pi}\|_*\,\leq \, C_0 \psi_{n,T} \right\}
	\]
	and provided that
	\[
	\nu_1 \,=\, 18\sqrt{       \frac{   c_T \log( \max\{n,p c_T\} )  }{n d_T}  } \left( \sqrt{n}  + \sqrt{d_T}   \right),
	\]
	and
	\[
	\nu_2\,=\,\frac{400 c_T }{n T } \left( \sqrt{n}  + \sqrt{d_T}   \right).
		\]
\end{lemma}
\begin{proof}
	First notice that  by the proof Lemma \ref{lem1} we have that
	\begin{equation}
		\label{eqn:basic}
		\begin{array}{lll}
			0 &\leq  &\displaystyle \underset{1 \leq j \leq p}{\max}\left \vert  \frac{1}{nT}  \sum_{i=1}^{n} \sum_{t=1}^T  \frac{X_{i,t,j}  a_{i,t} }{\hat{\sigma}_j }    \right\vert \left[ \sum_{k=1}^{p}\hat{\sigma}_k\vert  \theta_k(\tau) - \hat{\theta}_k(\tau)  \vert   \right] \,+\,
			\\
			& & \nu_{1}\left(\|\theta(\tau)\|_{1,n,T}  - \|\hat{\theta}(\tau)\|_{1,n,T}    \right)  \,+\,\\
			& & \displaystyle \left\vert   \frac{1}{nT}   \sum_{i=1}^{n}\sum_{t=1}^{T}  a_{i,t} (\Pi_{i,t}(\tau)- \hat{\Pi}_{i,t}(\tau))  \right\vert   \,+\, \nu_2 ( \|  \Pi(\tau)  \|_*   -  \| \hat{\Pi}(\tau) \|_*  ).    \\
		\end{array}
	\end{equation}
	Therefore,
	\[
	\begin{array}{lll}
		\displaystyle \frac{\nu_1}{2}\|\hat{\theta}(\tau)\|_{1,n,T}   +   \frac{\nu_2}{2} \| \hat{\Pi}(\tau) \|_*     & \leq& \displaystyle \displaystyle \underset{1 \leq j \leq p}{\max}\left \vert  \frac{1}{nT}  \sum_{i=1}^{n} \sum_{t=1}^T  \frac{X_{i,t,j}  a_{i,t} }{\hat{\sigma}_j }    \right\vert \left[ \sum_{k=1}^{p}\hat{\sigma}_k\vert  \theta_k(\tau) - \hat{\theta}_k(\tau)  \vert   \right] \,+\,   \\  & &\nu_{1}\left(\|\theta(\tau)\|_{1,n,T}  - \frac{1}{2}\|\hat{\theta}(\tau)\|_{1,n,T}    \right)   \\
		& &\displaystyle  \,+\,     \left\vert   \frac{1}{ \nu_1  nT}   \sum_{i=1}^{n}\sum_{t=1}^{T}  a_{i,t} (\Pi_{i,t}(\tau)- \hat{\Pi}_{i,t}(\tau))  \right\vert  \,+\, \nu_2 ( \|  \Pi(\tau)  \|_*   -  \frac{1}{2} \| \hat{\Pi}(\tau) \|_*  ) \\
		&\leq &  \displaystyle 9 \sqrt{       \frac{   c_T \log( \max\{n,p c_T\} )  }{n d_T}  } \left[ \sum_{k=1}^{p}\hat{\sigma}_k\vert  \theta_k(\tau) - \hat{\theta}_k(\tau)  \vert   \right]\,+\, 
		\\  & &\displaystyle \nu_{1}\left(\|\theta(\tau)\|_{1,n,T}  - \frac{1}{2}\|\hat{\theta}(\tau)\|_{1,n,T}    \right)   \,+\, \frac{200 c_T }{n T } \left( \sqrt{n}  + \sqrt{d_T}   \right)\|  \Pi(\tau)  -  \hat{\Pi}(\tau)  \|_* \\
		& &  \displaystyle     \,+\, \nu_2 ( \|  \Pi(\tau)  \|_*   -  \frac{1}{2} \| \hat{\Pi}(\tau) \|_*  ) \\
		&\leq & \displaystyle 27\sqrt{       \frac{   c_T \log( \max\{n,p c_T\} )  }{n d_T}  }(\sqrt{n} + \sqrt{d_T})   \|  \theta(\tau)\|_{1,n,T}  \\
		& & \displaystyle \,+\,   \frac{600 c_T }{n T } \left( \sqrt{n}  + \sqrt{d_T}   \right)  \|  \Pi(\tau)  \|_*  \\
		& &
	\end{array}
	\]
	where the second inequality holds by Lemma \ref{lem:penalty_tuning}, Lemma \ref{lem:latent_error}, and Assumption \ref{cond2},
	with probability at least 
	\[
	1 -  \gamma - \frac{16}{ n } - 8npT\left( \frac{1}{c_T} \right)^{\mu}- 2 n T  \left( \frac{1}{c_T} \right)^{\mu} -2c_1\exp(-c_2 \max\{n,T\} + \log c_T ),
	\]
	and third inequality holds by triangle inequality. 	The claim then follows.
\end{proof}

Next we combine the previous two results to arrive at an upper bound on the estimation error of the quantiles. 
\begin{lemma}
	\label{lem11}
	Let  $\eta$ such that
	\[
	\displaystyle  \eta  \,>\,  \frac{3 \nu_2}{c_0}\|\Pi(\tau) \|_* 
	\]
	with $c_0$ as in Lemma \ref{lem10}.   Then 
	\begin{equation}
		\label{eqn:uper}\begin{array}{l}
			\displaystyle   \mathbb{P}\left(\frac{1}{nT}  \sum_{i=1}^n \sum_{t=1}^{T}\min\{   \vert   q_{i,t} -\hat{q}_{i,t}  \vert,(q_{i,t} -\hat{q}_{i,t} )^2    \}  \geq 
			\eta\right)   \\
			\displaystyle       \,\, \leq  \,  \mathbb{P}\left(  \left\{    \underset{(\tilde{\beta},\tilde{ \Pi}) \in K  }{\sup}\left[   M(\tilde{\beta},\tilde{ \Pi})     - \hat{M}(\tilde{\beta},\tilde{ \Pi})     \right] \geq \frac{c_0 \eta}{3} \right\} \cap   \mathcal{E} \cap \mathcal{B} \right)   \,+\,  \mathbb{P}\left(  \left\{\frac{\nu_1}{c_0} \|\beta(\tau) \|_{1,n,T} \geq   \frac{c_0 \eta}{3}\right\} \cap \mathcal{E} \right) \\
			\displaystyle 	\,\, \,+\, \gamma \,+\, \mathbb{P}(\mathcal{B}^c),
		\end{array}
	\end{equation}
	with $\mathcal{E}$ as in (\ref{mat_e}),  $\mathcal{B}$ as in Lemma \ref{lem12}, and $\gamma$ as in Assumption \ref{cond2}.
\end{lemma}

	\begin{proof}
	First, 	notice that
	\[
	\begin{array}{lll}
		\displaystyle  \frac{1}{nT}  \sum_{i=1}^n \sum_{t=1}^{T}\min\{   \vert   q_{i,t} -\hat{q}_{i,t}  \vert,(q_{i,t} -\hat{q}_{i,t} )^2    \}    & \leq&  	\displaystyle c_0^{-1}M( \hat{\beta}(\tau),\hat{\Pi}  )  \\ 
		&=& 	\displaystyle  c_0^{-1}\left[  M( \hat{\beta}(\tau),\hat{\Pi}  )  -  \hat{M}( \hat{\beta}(\tau),\hat{\Pi}  )     +   \hat{M}( \hat{\beta}(\tau),\hat{\Pi}  )   \right]\\
		& \leq & 	\displaystyle  c_0^{-1} \left[  M( \hat{\beta}(\tau),\hat{\Pi}  )  -  \hat{M}( \hat{\beta}(\tau),\hat{\Pi}  )     \right] \,+\,\\
		& & \displaystyle \nu_1  c_0^{-1} \left[ \|\beta(\tau) \|_{1,n,T}  -\|\hat{\beta}(\tau) \|_{1,n,T}      \right] \,+\,\\
		& & \displaystyle      \nu_2  c_0^{-1} \left[ \|\Pi(\tau) \|_{*}  -\|\hat{\Pi}(\tau) \|_{*}      \right] \\
		& \leq& \displaystyle  c_0^{-1}\left[  M( \hat{\beta}(\tau),\hat{\Pi}  )  -  \hat{M}( \hat{\beta}(\tau),\hat{\Pi}  )     \right] \,+\,\\
		& &  \displaystyle  \nu_1  c_0^{-1} \|\beta(\tau) \|_{1,n,T}\,+\,  \frac{\nu_2}{c_0}\|\Pi(\tau)\|_*,
	\end{array}
	\]
	where  the first inequality follows from Lemma \ref{lem10} and the second by optimality of  $(\hat{\beta}(\tau),\hat{\Pi}(\tau))$. Therefore,  by Lemma \ref{lem12} and Assumption \ref{cond2},
	\[
	\begin{array}{l}
		\displaystyle   \mathbb{P}\left(\frac{1}{nT}  \sum_{i=1}^n \sum_{t=1}^{T}\min\{   \vert   q_{i,t} -\hat{q}_{i,t}  \vert,(q_{i,t} -\hat{q}_{i,t} )^2    \}  \geq 
		\eta\right)   \\
		\displaystyle       \,\, \leq  \,  \mathbb{P}\left(  \left\{    \underset{(\tilde{\beta},\tilde{ \Pi}) \in K  }{\sup}\left[   M(\tilde{\beta},\tilde{ \Pi})     - \hat{M}(\tilde{\beta},\tilde{ \Pi})     \right] \geq \frac{c_0 \eta}{3} \right\} \cap   \mathcal{E} \cap \mathcal{B} \right)   \,+\,  \mathbb{P}\left(  \left\{\frac{\nu_1}{c_0} \|\beta(\tau) \|_{1,n,T} \geq    \frac{c_0 \eta}{3}\right\} \cap \mathcal{E} \right) \\
		\displaystyle 	\,\, \,+\, \mathbb{P}(\mathcal{E}^c)   \,+\, \mathbb{P}(\mathcal{B}^c),
	\end{array}
	\]
	and the claim follows.
	
\end{proof}

	We now proceed to give an  upper bound on the second   term in  the right hand side of (\ref{eqn:uper}).

\begin{lemma}
	\label{lem13}
	It holds that 
	\[
	\mathbb{P}\left(  \left\{\frac{\nu_1}{c_0} \|\beta(\tau) \|_{1,n,T} \geq    \frac{c_0 \eta}{3}\right\} \cap \mathcal{E} \right) \,= \,0,
	\]
	provided that 
	\[
	\eta  \,>\,  \frac{15\nu_1}{4c_0} \|\beta(\tau) \|_{1}. 
	\]
\end{lemma}
 
\begin{proof}
	The claim follows since in the event  $\mathcal{E}$ it holds  that  $\|\beta(\tau) \|_{1,n,T}  \leq  5\|\beta(\tau) \|_{1}/4$.
\end{proof}

	We now proceed to control the first term in the right hand side of  (\ref{eqn:uper}).

\begin{lemma}
	\label{lem14}
	With the notation from before we have that 
	\[
	\begin{array}{lll}
		\mathbb{P}\left(  \left\{    \underset{(\tilde{\beta},\tilde{ \Pi}) \in K  }{\sup}\left[   M(\tilde{\beta},\tilde{ \Pi})     - \hat{M}(\tilde{\beta},\tilde{ \Pi})     \right] \geq \frac{c_0 \eta}{3} \right\} \cap   \mathcal{E} \cap \mathcal{B} \right) & \leq& \displaystyle \frac{\tilde{C} }{\eta}\left[ \frac{   \phi_{n,T} \sqrt{c_T\log p}   }{\sqrt{ nd_T}}  \, +\,\frac{\psi_{n,T} (  \sqrt{n} +\sqrt{d_T} )}{  nd_T  } \right]\\
		& &\,+\, 2 n T  \left( \frac{1}{c_T} \right)^{\mu},
	\end{array}
	\]
	for a positive constant $\tilde{C}$.
\end{lemma}
 
\begin{proof}
	First, we notice that
	\[
	\begin{array}{l}
		\mathbb{P}\left(  \left\{    \underset{(\tilde{\beta},\tilde{ \Pi}) \in K  }{\sup}\left[   M(\tilde{\beta},\tilde{ \Pi})     - \hat{M}(\tilde{\beta},\tilde{ \Pi})     \right] \geq \frac{c_0 \eta}{3} \right\} \cap   \mathcal{E} \cap \mathcal{B} \right)\\
		\, \leq\,    \mathbb{P}\left(  \left\{\underset{(\tilde{\beta},\tilde{ \Pi}) \in K  }{\sup}\left[   M(\tilde{\beta},\tilde{ \Pi})     - \hat{M}(\tilde{\beta},\tilde{ \Pi})     \right] \geq \frac{c_0 \eta}{3}  \right\}  \,\bigg| \mathcal{E}\right)   \mathbb{P}(\mathcal{E}). \\
	\end{array}
	\]
	Next let  
	\[
     \begin{array}{lll}
     	U_1(X)& :=& \mathbb{P}\left(  \underset{(\tilde{\beta},\tilde{ \Pi}) \in K  }{\sup}\left[   M(\tilde{\beta},\tilde{ \Pi})     - \hat{M}(\tilde{\beta},\tilde{ \Pi})     \right] \geq \frac{c_0 \eta}{3}    \bigg|   \{X_{i,t}\},\mathcal{E} \right).
     \end{array}
	\]
	
	Using the notation from Section \ref{sec:proofs}, we define
	$t_{l,m} = 2c_T l + m$, 
	for  $l=1,\ldots,d_T -1$ and  $m=1,\ldots,c_T$. We also set  
	\[
	\begin{array}{lll}
		Z_{i,t }(\tilde{\beta},\tilde{ \Pi})    &=&  \rho_{\tau}\left( \tilde{Y}_{i,t} - X_{i,t}^{\prime } \tilde{\beta} - \tilde{ \Pi}_{i,t} \right) -  \rho_{\tau}\left( \tilde{Y}_{i,t} - X_{i,t}^{\prime } \beta(\tau) -  \Pi_{i,t}(\tau) \right)\\
		&=&  \rho_{\tau}\left(    \varepsilon_{i,t}G(X_{i,t})   +  X_{i,t}^{\prime } (\beta-\tilde{\beta})   +  (\Pi_{i,t}- \tilde{\Pi}_{i,t} )  \right) -  \rho_{\tau}\left( \varepsilon_{i,t}G(X_{i,t}) \right).
	\end{array}
		\]
		Hence by Assumption \ref{cond1}, conditioning on $X_{i,t}$ and $\Pi_{i,t}$, $	Z_{i,t }(\tilde{\beta},\tilde{ \Pi})  $  belongs to the sigma algebra generated by $\varepsilon_{i,t}$.
	
	Then by Lemma 4.3 from \cite{yu1994rates},
	\[
	\begin{array}{l}
		U_1(X)\\
		\,  \leq \,\displaystyle   2\mathbb{P}\left( \frac{   C_1}{c_T d_T n}  \sum_{m=1}^{c_T}  \underset{ (\tilde{\beta},\tilde{ \Pi}) \in K   }{\sup} \left\{    \sum_{i=1}^{n} \sum_{l=1}^{d_T-1}     \,\,\,\,\,\,\left[\mathbb{E}\left(Z_{i,t_{l,m}}(\tilde{\beta},\tilde{ \Pi})  \,\big|   \{X_{i,t}\}\right)  - Z_{i,t_{l,m}}(\tilde{\beta},\tilde{ \Pi}) \right]\right\}   \geq  \frac{c_0 \eta}{9}    \,\bigg|   \{X_{i,t}\}, \mathcal{E}\right) \\
		\,\,	\,\,\,\,	\,+\,
		\displaystyle   \mathbb{P}\left(\frac{   C_1}{c_T  d_T n}\sum_{t^{\prime} \in R}^{ }   \underset{ (\tilde{\beta},\tilde{ \Pi}) \in K   }{\sup} \left\{    \sum_{i=1}^{n} \left[ \mathbb{E}\left(Z_{i, t^{\prime}   }(\tilde{\beta},\tilde{ \Pi}) \,\big|   \{X_{i,t}\}\right)  \,-\, Z_{i, t^{\prime}  }(\tilde{\beta},\tilde{ \Pi}) \right]     \right\}   \geq   \frac{c_0 \eta}{9}   \,\bigg|   \{X_{i,t}\}, \mathcal{E} \right)\\
		\,\,\,\,\,	\,+\, 2 n T  \left( \frac{1}{c_T} \right)^{\mu}
	\end{array}
	\]
	for a constant $C_1>0$. Hence, by  Markov's inequality and Lemma 2.3.1 in \cite{wellner2013weak} (symmetrization), we have for  $\{ \xi_{i,t}\}$ independent Rademacher variables 
	with $\{ \xi_{i,t}\}  \,\perp \!\!\! \perp\,  \{\tilde{Y}_{i,t}\}   \,\big| \,  \{X_{i,t}\}$ that 
	\begin{equation}
		\label{eqn:upper_bound}
		\begin{array}{l}
			U_1(X)  \\
			\leq \,\displaystyle    \frac{18}{c_0 \eta} \frac{   C_1}{c_T d_T n}   \sum_{m=1}^{c_T}\mathbb{E}\left( \underset{ (\tilde{\beta},\tilde{ \Pi}) \in K   }{\sup} \left\{    \sum_{i=1}^{n} \sum_{l=1}^{d_T-1}     \left[\mathbb{E}\left(Z_{i,t_{l,m}}(\tilde{\beta},\tilde{ \Pi})  \,\bigg|   \{X_{i,t}\}\right)  - Z_{i,t_{l,m}}(\tilde{\beta},\tilde{ \Pi}) \right]\right\}  \,\bigg|   \{X_{i,t}\},\mathcal{E} \right) \\
			\displaystyle  \,\,	\,\,\,\,	\,+\,\frac{9}{c_0 \eta} \frac{   C_1}{c_T d_T n}\sum_{ t^{\prime} \in R   }^{ } \mathbb{E}\left(  \underset{ (\tilde{\beta},\tilde{ \Pi}) \in K   }{\sup} \left\{    \sum_{i=1}^{n} \left[ \mathbb{E}\left(Z_{i, t^{\prime}   }(\tilde{\beta},\tilde{ \Pi}) \,\bigg|   \{X_{i,t}\}\right)  \,-\, Z_{i, t^{\prime}  }(\tilde{\beta},\tilde{ \Pi}) \right]     \right\}  \,\bigg|   \{X_{i,t}\},\mathcal{E} \right)\\
			\,\,	\,\,\,\,	\,+\,2 n T  \left( \frac{1}{c_T} \right)^{\mu}\\
			\displaystyle  \leq   \,\,	\,\,\,\,\frac{36}{c_0 \eta} \frac{   C_1}{c_T d_T n}  \sum_{m=1}^{c_T}  \mathbb{E}\left( \underset{ (\tilde{\beta},\tilde{ \Pi}) \in K   }{\sup} \left\{    \sum_{i=1}^{n} \sum_{l=1}^{d_T-1}    \xi_{i,t_{l,m}}  Z_{i,t_{l,m}}(\tilde{\beta},\tilde{ \Pi}) \right\}   \,\bigg|   \{X_{i,t}\},\mathcal{E}\right) \\
			\displaystyle \,\,	\,\,\,\,	\,+\,\frac{18}{c_0 \eta} \frac{   C_1}{c_T d_T n}\sum_{ t^{\prime}\in R }^{} \mathbb{E}\left(  \underset{ (\tilde{\beta},\tilde{ \Pi}) \in K   }{\sup} \left\{    \sum_{i=1}^{n}  \xi_{i, t^{\prime}} Z_{i, t^{\prime}  }(\tilde{\beta},\tilde{ \Pi})      \right\}   \,\bigg|   \{X_{i,t}\},\mathcal{E}\right) \\
			\,\,	\,\,\,\,	\,+\, 2 n T  \left( \frac{1}{c_T} \right)^{\mu},\\
		\end{array}
	\end{equation}
	Therefore,  from Ledoux-Talagrand's inequality
	\begin{equation}
		\label{eqn:upper_bound2}
		\begin{array}{lll}
			U_1(X)	 &\leq &\displaystyle   \frac{36}{c_0 \eta} \frac{   C_1}{c_T d_T n}\cdot\\
			& &  \displaystyle  \sum_{m=1}^{c_T}\mathbb{E}\left( \underset{ (\tilde{\beta},\tilde{ \Pi}) \in K   }{\sup} \left\{    \sum_{i=1}^{n} \sum_{l=1}^{d_T-1}    \xi_{i,t_{l,m}}\left(   X_{i,t_{l,m}}^{\prime}(  \tilde{\beta} -  \beta(\tau)  ) +  \tilde{ \Pi}_{i,t_{l,m}} - \Pi_{i,t_{l,m}}(\tau)       \right)   \right\}   \,\bigg|   \{X_{i,t}\},\mathcal{E}\right) \\
			& &
			\displaystyle    \,+\, \frac{18}{c_0 \eta} \frac{   C_1}{c_T d_T n}\sum_{t^{\prime} \in R}^{} \mathbb{E}\left(  \underset{ (\tilde{\beta},\tilde{ \Pi}) \in K   }{\sup} \left\{    \sum_{i=1}^{n}  \xi_{i, t^{\prime}}  \left(   X_{i, t^{\prime}}^{\prime}(  \tilde{\beta} -  \beta(\tau)  ) +  \tilde{ \Pi}_{i, t^{\prime}} - \Pi_{i, t^{\prime}}(\tau)       \right)      \right\} \,\bigg|   \{X_{i,t}\},\mathcal{E}  \right) \\
			& &\displaystyle  \,+\,  2 n T  \left( \frac{1}{c_T} \right)^{\mu}\\
			&  =  & \displaystyle   U_2(X)  +  U_3(X)   +  2 n T  \left( \frac{1}{c_T} \right)^{\mu}
		\end{array}
	\end{equation}
	Next we proceed to bound  $U_2(X)$ and $U_3(X)$. To bound $U_2(X)$ notice that for some positive constants  $C$ and $C_3$ we have that 
	\begin{equation}
		\label{eqn:upper_bound3}
		\begin{array}{lll}
			\displaystyle  U_2(X)&\leq &\displaystyle   \frac{36}{c_0 \eta} \frac{   C_1}{c_T d_T n}\,  \sum_{m=1}^{c_T}\mathbb{E}\left( \underset{    (\tilde{\beta},\tilde{ \Pi}) \in K   }{\sup} \left\{    \sum_{i=1}^{n} \sum_{l=1}^{d_T-1}    \xi_{i,t_{l,m}}\left(   X_{i,t_{l,m}}^{\prime}(  \tilde{\beta} -  \beta(\tau)  )     \right)   \right\}   \,\bigg|   \{X_{i,t}\},\mathcal{E}\right)\\
			& &   \displaystyle    \,+\, \frac{18}{c_0 \eta} \frac{   C_1}{c_T d_T n}\,  \sum_{m=1}^{c_T}\mathbb{E}\left( \underset{    (\tilde{\beta},\tilde{ \Pi}) \in K   }{\sup} \left\{    \sum_{i=1}^{n} \sum_{l=1}^{d_T-1}    \xi_{i,t_{l,m}}\left(    \tilde{ \Pi}_{i,t_{l,m}} - \Pi_{i,t_{l,m}}(\tau)   \right)   \right\}   \,\bigg|   \{X_{i,t}\},\mathcal{E}\right)\\ 
			&\leq  &\displaystyle   \frac{36}{c_0 \eta} \frac{   C_1}{c_T d_T n}\,  \sum_{m=1}^{c_T} 2C_0 \phi_{n,T}\mathbb{E}\left( \underset{    \tilde{\beta}  \,:\, \| \tilde{\beta} \|_1 \leq 1  }{\sup} \left\{    \sum_{i=1}^{n} \sum_{l=1}^{d_T-1}    \xi_{i,t_{l,m}}\,   X_{i,t_{l,m}}^{\prime}\tilde{\beta}      \right\}   \,\bigg|   \{X_{i,t}\},\mathcal{E}\right)\,+\,\\
			& & \displaystyle    \,+\, \frac{18}{c_0 \eta} \frac{   C_1}{c_T d_T n}\,  \sum_{m=1}^{c_T}\mathbb{E}\left(     2 \left\|\{\xi_{i,t_{l,m}}\}_{i \in [n], l \in [d_T-1]  } \right\|_2 \,\underset{ \tilde{\Pi} \,:\,  \|\tilde{ \Pi}\|_* \leq  C_0 \psi_{n,T}   }{\sup}  \|\tilde{ \Pi}\|_*  \,\bigg|   \{X_{i,t}\},\mathcal{E}\right)\\ 
			& \leq &\displaystyle   \frac{36}{c_0 \eta} \frac{   C_1}{c_T d_T n}\,  \sum_{m=1}^{c_T} 2C_0 \phi_{n,T}\mathbb{E}\left(    C \sqrt{\log p} \underset{j=1,\ldots,p}{\max} \sqrt{  \sum_{i=1}^{n}  \sum_{l=1}^{d_T-1}     X_{i,t_{l,m},j}^2 } \,\bigg|   \{X_{i,t}\},\mathcal{E}\right)\,+\,\\
			& &\displaystyle    \,+\, \frac{18}{c_0 \eta} \frac{   C_1}{c_T d_T n}\,  \sum_{m=1}^{c_T} 2C_0 \psi_{n,T}  C \left(  \sqrt{n}  + \sqrt{d_T} \right)\\
			& \leq & \displaystyle \frac{C_3}{\eta}\left[ \frac{   \phi_{n,T} \sqrt{c_T\log p}   }{\sqrt{ nd_T}}  \, +\,\frac{\psi_{n,T} (  \sqrt{n} +\sqrt{d_T} )}{  nd_T  } \right],
		\end{array}
	\end{equation}
	where the third inequality follows from  the proof of Theorem 2.4 in \cite{rigollet2015high} and  Theorem  3.4 from \cite{chatterjee2015matrix}, and the fourth inequality follows from the definition of $\mathcal{E}$.
	
	Similarly, 
	\begin{equation}
		\label{eqn:upper_bound4}
		\begin{array}{lll}
			U_3(\tilde{X}) & \leq& \displaystyle \frac{C_3}{\eta}\left[ \frac{   \phi_{n,T} \sqrt{c_T\log p}   }{n\sqrt{ d_T}}  \, +\,\frac{\psi_{n,T} (  \sqrt{n} +1)}{  nd_T  }\right].
		\end{array}
	\end{equation}
	Combining (\ref{eqn:upper_bound2}),  (\ref{eqn:upper_bound3}) and (\ref{eqn:upper_bound4}) the claim follows.
	
\end{proof}

	\subsection{Proof of Theorem \ref{thm_c}}

The proof follows from  Lemmas \ref{lem11}--\ref{lem14} by setting
\[
\eta\,\asymp\, m_n \left[\frac{   \phi_{n,T} \sqrt{c_T\log p}   }{n\sqrt{ d_T}}  \, +\,\frac{\psi_{n,T} (  \sqrt{n} +1)}{  nd_T  }\right],
\]
for any sequence  $m_n$  satisfying $m_{n} \rightarrow \infty$.

\newpage	

\begin{center}
	\begin{longtable}{llp{18.855em}ccc} 
		\caption{Firm Characteristics Construction.\label{Table-names}}\\
		\toprule
		Characteristics  & Name  & \multicolumn{1}{l}{Construction}\tabularnewline
		\hline
		acc  & Working capital accruals  & Annual income before extraordinary items (ib) minus operating cash
		flows (oancf) divided by average total assets (at) \tabularnewline
		agr & Asset growth  & \multicolumn{1}{l}{Annual percent change in total assets (at)}\tabularnewline
		beta  & Beta  & Estimated market beta from weekly returns and equal weighted market
		returns for 3 years \tabularnewline
		bm & Book-to-market  & Book value of equity (ceq) divided by end of fiscal year-end market
		capitalization \tabularnewline
		chinv  & Change in inventory  & Change in inventory (inv) scaled by average total assets (at) \tabularnewline
		chmom & Change in 6-month momentum  & Cumulative returns from months t-6 to t-1 minus months t-12 to t-7 \tabularnewline
		dolvol  & Dollar trading volume  & Natural log of trading volume times price per share from month t-2 \tabularnewline
		dy & Dividend to price  & Total dividends (dvt) divided by market capitalization at fiscal year-end \tabularnewline
		egr  & Earnings announcement return  & Annual percent change in book value of equity (ceq) \tabularnewline
		ep & Earnings to price  & Annual income before extraordinary items (ib) divided by end of fiscal
		year market cap \tabularnewline
		gma  & Gross profitability  & Revenues (revt) minus cost of goods sold (cogs) divided by lagged
		total assets (at) \tabularnewline
		idiovol & Idiosyncratic return volatility  & Standard deviation of residuals of weekly returns on weekly equal
		weighted market returns for 3 years prior to month end \tabularnewline
		ill & Illiquidity (Amihud) & Average of daily (absolute return / dollar volume).\tabularnewline
		invest & Capital expenditures and inventory & Annual change in gross property, plant, and equipment (ppegt) + annual
		change in inventories (invt) all scaled by lagged total assets (at)\tabularnewline
		lev & Leverage & Annual change in gross property, plant, and equipment (ppegt) + annual
		change in inventories (invt) all scaled by lagged total assets (at)\tabularnewline
		lgr & Growth in long-term debt & Annual percent change in total liabilities (lt)\tabularnewline
		mom1m & 1-month momentum & 1-month cumulative return\tabularnewline
		
		mom6m & 6-month momentum & 5-month cumulative returns ending one month before month end\tabularnewline
		mve & Size & Natural log of market capitalization at end of month t-1\tabularnewline
		operprof & Operating profitability & Revenue minus cost of goods sold - SG\&A expense - interest expense
		divided by lagged common shareholders' equity\tabularnewline
		range & Range of stock price & Monthly average of daily price range: (high-low)/((high+low)/2) (alternative
		measure of volatility)\tabularnewline

		retvol & Return volatility & Standard deviation of daily returns from month t-1\tabularnewline
    
		roaq & Return on assets & Income before extraordinary items (ibq) divided by one quarter lagged
		total assets (atq)\tabularnewline
		roeq & Return on equity & Earnings before extraordinary items divided by lagged common shareholders'
		equity\tabularnewline
		sue & Unexpected quarterly earnings & Unexpected quarterly earnings divided by fiscal-quarter-end market
		cap. Unexpected earnings is I/B/E/S actual earnings minus median forecasted
		earnings if available, else it is the seasonally differenced quarterly
		earnings before extraordinary items from Compustat quarterly file\tabularnewline
		turn & Share turnover & Average monthly trading volume for most recent 3 months scaled by
		number of shares outstanding in current month\tabularnewline
		\hline
	\end{longtable}
\end{center}	
\vspace{-1cm}
Note: Estimated under different values of turning parameter $ \nu_{2}$, when $ \nu_{1}=10^{-5}$ is fixed. The results are  
reported for quantiles 10\%, 50\% and 90\%.

\end{document}